\documentclass[a4paper,reqno]{amsart}
\usepackage[english]{babel}
\usepackage{amsmath,amssymb,amsfonts,amsthm}
\usepackage{hyperref}
\usepackage{slashed}
\usepackage{graphicx,color}
\usepackage{mathtools}
\usepackage{tikz-cd}
\usepackage[a4paper]{geometry}
\usepackage{csquotes} %%necessary for biblatex

%\usepackage{showlabels}  %% draft, shows the labels of equations and sections

%%%%-----------------Definition/Theorems------------------%%%%
\theoremstyle{definition} %%upright text, extra space above and below
\newtheorem{definition}{Definition}

\theoremstyle{plain} %% italic text, extra space above and below
\newtheorem{theorem}[definition]{Theorem}

\newtheorem{lemma}[definition]{Lemma}
\newtheorem{corollary}[definition]{Corollary}

\theoremstyle{remark} %% upright text, no extra space above or below
\newtheorem{remark}[definition]{Remark}

%%%%-----------------Bibliography------------------%%%%

%%\usepackage[style=numeric-comp,useprefix,backend=biber]{biblatex}
%%\usepackage[style=authoryear,useprefix,backend=biber]{biblatex}
\usepackage[bibstyle=numeric,citestyle=numeric,useprefix,backend=biber,giveninits=true]{biblatex}
\renewbibmacro{in:}{} %% removes in: before journals

%%% If DOI is present, doesn't print arXiv
\DeclareSourcemap{
  \maps[datatype=bibtex]{
    \map[overwrite]{
      \step[fieldsource=doi, final]
      \step[fieldset=url, null]
      \step[fieldset=eprint, null]
    }  
  }
}

%%% If only arXiv is present, doens't print pages and eid (eliminates duplicates)
\DeclareSourcemap{
  \maps[datatype=bibtex]{
    \map[overwrite]{
      \step[fieldsource=eprint, final]
      \step[fieldset=pages, null]
      \step[fieldset=eid, null]
      \step[fieldset=journal, null]
    }  
  }
}

\bibliography{bibliography}

%%%%-----------------Typing shortcuts------------------%%%%

\newcommand{\qsp}[2]{\,\ensuremath{\raise.5ex\hbox{$#1$}\big\slash\raise-.5ex\hbox{$#2$}}}

%% tilde variables

%% double tilde variables

%%\newcommand{\ttxi}[1]{{\widetilde{\txi{}}}^{\,#1}}
%%\newcommand{\ttxi}[1]{\widetilde{\txi{#1}}}

%%
\newcommand{\calV}{\mathcal{V}}
\newcommand{\dd}{\mathrm{d}}
\newcommand{\ndash}{\nobreakdash-\hspace{0pt}}
\renewcommand{\Tilde}{\widetilde}

\DeclareMathOperator{\Ima}{Im}

%%%%-----------------For long computations------------------%%%%
    \makeatletter
        \newcommand{\zzlabel}[1]{\ifmeasuring@\else\ltx@label{#1}\fi} %%new label (necessary if amsmath is present)
    \makeatother

    \newcounter{terms}[equation] %%counter for terms in a equation
    \newcommand{\unl}[2]{\underline{#1}_{\refstepcounter{terms} \zzlabel{#2} \theterms}} %%underlines, counts a term and put the corresponding number. Put the term in the first slot and a label in the second

    \newcommand{\reft}[2]{(\ref{#1}.\ref{#2})} %%refers to a term in a equation as (equation.term)

    %\showlabels[\color{blue}]{zzlabel}  %%shows the labels of the terms

%%%%%%%%%%%%%%%%%%%%%%%%%%%%%%%%%%%%%%%%%%%%%%%%%%%%%%%%%%%

\title[Boundary BFV action in PC formalism]{Boundary structure of General Relativity in tetrad variables}

\author{G. Canepa}
\address{Institut f\"ur Mathematik, Universit\"at Z\"urich, Winterthurerstrasse 190, 8057 Z\"urich, Switzerland}
\email{giovanni.canepa@math.uzh.ch}

\author{A. S. Cattaneo}
\address{Institut f\"ur Mathematik, Universit\"at Z\"urich, Winterthurerstrasse 190, 8057 Z\"urich, Switzerland}
\email{cattaneo@math.uzh.ch}

\author{M. Schiavina}
\address{Institute for Theoretical Physics, ETH Zurich, Wolfgang Pauli strasse 27, 8093, Z\"urich, Switzerland and Department of Mathematics, ETH Zurich, R\"amistrasse 101, 8092, Z\"urich, Switzerland }
\email{micschia@phys.ethz.ch}

\thanks{This research was (partly) supported by the NCCR SwissMAP, funded by the Swiss National Science Foundation. G.C. and A.S.C. acknowledge partial support of SNF Grant No. 200020- 172498/1. M.S. acknowledges partial support from Swiss National Science Foundation grants P2ZHP2\_164999 and P300P2\_177862.}

\begin{document}

\begin{abstract}

An explicit, geometric description of the first-class constraints and their Poisson brackets for gravity in the Palatini--Cartan formalism
(in space--time dimension greater than three)
 is given. The corresponding  Batalin--Fradkin--Vilkovisky (BFV) formulation is also developed. 
\end{abstract}

\maketitle
\tableofcontents

\section{Introduction}
In this article we clarify the geometry of the boundary structure---in particular, the
reduced phase space---of general relativity in the Palatini--Cartan formalism and develop its Batalin--Fradkin--Vilkovisky (BFV) formulation in any space--time dimension greater than three. 

The Palatini--Cartan (PC) formalism is classically equivalent to the Einstein--Hilbert formalism on a closed manifold, in the sense that they have the same space of solutions of the Euler--Lagrange equations modulo symmetries. However, the PC formalism has several advantages, the main one for us being that employing differential forms allows for a more natural restriction to boundaries, whose study is our main motivation.

The Hamiltonian description of a field theory, which is a particular case of the boundary study when the boundary is a Cauchy surface, has historically been done in terms of Dirac's constraint analysis \cite{Dirac1958}. 
This procedure is rather involved, leading to the analysis of primary and secondary constraints, which then are regrouped into first and second class constraints, to produce eventually the correct space as the symplectic reduction of some submanifold in a symplectic space of fields near the boundary. The final output of this procedure is called \emph{reduced phase space}, and it turns out to be particularly complicated in the case of the Palatini--Cartan formalism.

Typically, one wishes to implement the second class constraints first, and present the reduced phase space as the reduction of a submanifold determined by first-class contraints only.\footnote{This is called a coisotropic submanifold in symplectic geometry.} One advantage of this is that the associated Hamiltonian vector fields can now be interpreted as generators of symmetries, so that the reduced phase space can be regarded as the quotient of a submanifold (fields satisfying the generalised Gauss laws) by gauge transformations. 
In some cases---e.g., Yang--Mills theory and three-dimensional gravity---this can be interpreted as a Marsden--Weinstein \cite{MW1974} reduction.   

A second advantage of obtaining the reduced phase space from first-class constraints only is that it can be cohomologically resolved in terms of the Batalin--Fradkin--Vilkovisky (BFV) formalism \cite{BV1, BV3, Stasheff1997, Schaetz:2008}.
Namely, one considers some appropriate symplectic supermanifold and recasts the constraints into an odd functional (the BFV action) that Poisson commutes with itself. 
This produces a complex whose degree-zero cohomology is isomorphic, as a Poisson algebra, to the algebra of functions of the reduced phase space when the latter is smooth. The main advantage, then, is that one can take this procedure as a definition for the reduced phase space when it is not smooth. Moreover, one can attempt at quantising the reduced phase space in terms of an appropriate quantisation of the supersymplectic manifold (which is often geometrically simpler) and of the BFV action.

The BFV formalism is closely related to the Batalin--Vilkovisky (BV) formalism \cite{BV1, BV2}, which generalises the Faddeev--Popov and BRST constructions, providing a gauge fixing framework for a field theory in the bulk, in view of its perturbative quantisation. The strict connection between BV and BFV, related to a quantisation for manifolds with boundary compatible with cutting and gluing, has been analyzed in \cite{CMR2012, CMR2}. 

In several instances, the BV formalism in the bulk produces a compatible BFV formalism on the boundary \cite{CMR2012}. 
Unfortunately, this is not the case for four-dimensional gravity in the Palatini--Cartan formalism \cite{CS2017}, 
at least in a natural implementation of the BV framework, which otherwise works for the analogous three-dimensional case\footnote{There are examples of theories where one can modify the bulk BV formalism to make it compatible with the boundary BFV formalism \cite{CS2016a}; however this fix currently seems out of reach for PC theory.} \cite{CaSc2019}. 

This paper is a first step in a plan to overcome the problem encountered in \cite{CS2017}: namely, reversing the BV-BFV procedure by first studying the BFV formalism for Palatini--Cartan gravity on the boundary and then inducing a compatible BV formalism in the bulk (this second step is considered in the follow up work \cite{CCS2020b}). 
Another motivation for our study of the BFV structure is to extend the analysis of General Relativity to corners of higher codimension, as was successfully done for other BV-BFV theories \cite{CMR2012,MSW2019, CaSc2019}.

Our solution to the above problem is based on a more geometric alternative to Dirac's construction of the reduced phase space, as introduced by Kijowski and Tulczijew \cite{KT1979}. This alternative has several advantages, simplifying many computations and making them more transparent. Moreover, it also usually produces the reduced phase space as a coisotropic reduction --- i.e., only first class constraints appear --- and, finally, it is closely related to the BV-BFV construction (see \cite{CMR2012b} for the general framework and \cite{CSEH, CS2017} for examples in the context of General Relativity).

In 2017 the last two authors successfully applied this construction to four-dimensional gravity in the Palatini–Cartan formalism \cite{CS2019} for a timelike or spacelike boundary, showing in particular that only first-class constraints appear. Recently, a presentation in terms of first-class constraints only  in the context of Dirac's formulation has been obtained
in \cite{MRC:2019}, with its extension to higher dimension discussed in \cite{MERC:2019}. 

Some of the expressions presented in \cite{CS2019} were not quite as explicit as one might have liked. Although this is does not hinder the theorems on the classical (Hamiltonian) structure, a more explicit description would be desirable when writing down an explicit BFV action for the theory, or for further explicit computations. In this paper we provide such a description, improving the understanding of the reduced phase space of Palatini--Cartan theory and extending all the results to higher dimensions. This allows us to construct the BFV action for PC theory in dimension $N\geq 3$ for a timelike or spacelike boundary. For a lightlike boundary we refer to \cite{CCT2020} where the construction presented here is adapted to the case where the boundary metric is degenerate.
In doing this, we also prove several technical properties of ``tetrads,'' which may be useful also elsewhere. 

\subsection{Structure of the paper}
In Section \ref{s:overview} we summarise the basics of PC theory, review the construction of its reduced phase space (following \cite{CS2019}), and present a new idea that will be used throughout to simplify the boundary structure.

Section \ref{s:technical} is a collection of necessary (technical) results, which expands on the fundamental observation that the map $e^{N-k}\wedge \cdot$ might have a nontrivial kernel, with several consequences.

In Section \ref{s:constraints} we construct the reduced phase space of PC theory using the clever choice presented in Section \ref{s:optimalchoice}: we show that the constraints are first class, and compute their Poisson brackets explicitly.  

Section \ref{s:BFV} is devoted to the construction of the BFV data for Palatini--Cartan theory in dimension $4$, while Section \ref{s:generalisation} generalises all the previous results to $N\geq 5$.

Section \ref{s:RPSext} depends on Section \ref{s:constraints}, but is completely independent of Section \ref{s:BFV}, which is required only by \ref{s:BFVext} and can be ignored by a reader who is interested in purely classical (non-BFV) considerations.

\section{A (short) overview}\label{s:overview}
Our geometric construction is based on \cite{CS2019}, where the last two authors treated the four-dimensional case applying the construction of Kijowski and Tulczijew \cite{KT1979}. In this context, the reduced phase space is obtained as the reduction by first class constraints of an appropriate space of boundary fields. 

The aim of this paper is to supplement the construction of \cite{CS2019} with a more explicit presentation of boundary data.

In this section we will review the generalities of the Palatini--Cartan formalism, summarise the main results of \cite{CS2019}, and present the new idea from which this paper stems.

\subsection{General Relativity in the Palatini--Cartan formulation}
The dynamical field of general relativity in the usual formulation by Einstein and Hilbert is a Lorentzian metric $g$ and the action functional is
\[
S_\text{EH} = \int_M (R-\Lambda)\sqrt g,
\]
where $M$ is space--time, $R$ the scalar curvature of $g$, $\Lambda$ the cosmological constant (a fixed parameter) and $\sqrt g$ the density induced by $g$.

In this paper we focus on the classically equivalent formulation that goes under the name of Palatini--Cartan (or also Palatini--Cartan--Holst in the four\ndash dimensional case \cite{Holst}). It is based on Palatini's calculation of the variation of Riemann's tensor in terms of the Christoffel symbols (known as Palatini identity \cite{Palatini1919}), 
later extended to the idea of treating the connection as an independent field, and on 
Cartan's observation \cite{Cartan} 
that a metric may be alternatively presented in terms of a local frame.

Let $M$ be an $N$-dimensional manifold that admits a Lorentzian structure. In Palatini--Cartan theory one fixes a vector bundle $\calV$ isomorphic to $TM$ and endowed with a fibrewise Minkowski metric $\eta$,\footnote{We consider throughout the paper only the physical case with Lorentzian signature, although our results directly extend to the Euclidean case.} which we also denote by $(\ ,\ )$. Sometimes the vector bundle $\calV$ is referred as the ``fake tangent bundle''. One may, e.g., take $\calV=TM$. In any case we assume that the isomorphism is orientation preserving.

The theory has two dynamical fields.\footnote{The choice of
$\calV$ and $\eta$ is immaterial. Different choices will produce equivalent field theories related by linear redefinitions of the fields.}
 The first is a Cartan coframe, i.e., an orientation preserving bundle isomorphism covering the identity from the tangent bundle to $\calV$ $$e \colon TM \stackrel{\sim}{\longrightarrow}\calV .$$
The coframe field is also known as the \emph{tetrad} (or \emph{vierbein}) in four dimensions.
A Lorentzian metric is recovered (relating to Einstein--Hilbert theory) as
\begin{equation}\label{e:g}
g=e^*\eta; \qquad g_{\mu\nu}=(e_\mu,e_\nu).
\end{equation}
Note that there is more redundancy in $e$ than in $g$. As a consequence, the theory is invariant under additional gauge transformations.

The second dynamical field is an orthogonal connection $\omega$ on $\calV$.
We denote the space of such connections with $\mathcal{A}(M)$. The isomorphism $e$ allows transforming the connection $\omega$ into an affine connection $\Gamma$ that is automatically compatible with the metric $g$---a metric connection.

To write down the action functional, it is useful to introduce a piece of notation: by $\Omega^{i,j}$ we denote the space of sections of $\bigwedge^iT^*M\otimes\bigwedge^j\calV$ ($i$\ndash forms taking values in the $j$th exterior power of $\calV$).
We may then regard the coframe $e$ as an element of $\Omega^{1,1}$ (plus the nondegeneracy condition that it actually defines an orientation preserving isomorphism). Moreover, using the fibre metric $\eta$ one can easily see that the space of orthogonal connections is modelled on $\Omega^{1,2}$ (this is essentially just the fact that the Lie algebra of orthogonal transformations is isomorphic, via the metric, to that of skew-symmetric bilinear forms). In particular, we will regard the curvature $F_\omega$ of $\omega$ as an element of $\Omega^{2,2}$. Furthermore, throughout the article we use the shorthand notation $e^{k}$ to denote $k$th wedge power of $e$ and omit the wedge product symbol in the formulas: the wedge products both in $\bigwedge^\bullet T^*M$ and in $\bigwedge^\bullet\calV$ will be always tacitly understood.

The action functional for Palatini--Cartan theory reads
\[
S = \int_M\left[\frac1{(N-2)!}e^{N-2}F_\omega - \frac1{N!}\Lambda e^N\right].
\]
Note that each term belongs to $\Omega^{N,N}$, which can be canonically identified, via $\sqrt{|\det \eta|}$, with the space of densities on $M$.\footnote{An element of $\Omega^{N,N}$ is a section of $\det T^*M\otimes\det \calV$. On the other hand, $\sqrt{|\det \eta|}$ is a section of $|\det V^*|$, so their product is a section of
$\det T^*M\otimes\mathrm{or}(\calV)$, where $\mathrm{or}(\calV)$ is the orientation bundle of $V$. Under our assumption that the isomorphism between $TM$ and $\calV$ is orientation preserving, we have 
$\mathrm{or}(TM)=\mathrm{or}(\calV)$, so the product of an element of $\Omega^{N,N}$ with $\sqrt{|\det \eta|}$
is a section of $|\det T^*M|$, i.e., a density.}
For ease of notation, we will omit writing down the factor $\sqrt{|\det \eta|}$ explicitly.\footnote{\label{f:etaconst}It is actually possible to choose $\calV$ in such a way that $\sqrt{|\det \eta|}$ is equal to one. Namely, pick a Lorentzian metric on $M$ and reduce its frame bundle to the orthogonal frame bundle $P$. Then one can define $\calV$ as the associate bundle $P\times_{O(N-1,1)}W$, where $W$ is the fundamental representation, endowed with the Minkowski metric.
With this choice $\eta$ is the constant Minkowski metric, and the transitions function of $\det \calV$ are locally constant and equal to $\pm1$. Moreover, $\det T^*N\otimes\det \calV$ is directly equal to $|\det T^*M|$, so that elements of $\Omega^{N,N}$ are canonically the same as densities.}

\begin{remark}
Note that it is possible to consider other terms in the action, namely
$e^{N-2k}F_\omega^k$ for every $k\le N/2$. These other terms will however yield Euler--Lagrange equations involving higher derivatives of the fields, apart from the term $F_\omega^{N/2}$, which is topological (it is the Holst term in four dimensions). We will not consider these extensions in this paper.
\end{remark} 

The Euler--Lagrange equation obtained by a variation of $\omega$ is $\dd_\omega(e^{N-2})=0$, where
$\dd_\omega$ denotes the covariant derivative $\Omega^{\bullet,\bullet}\to\Omega^{\bullet+1,\bullet}$ associated to $\omega$.\footnote{One gets $\dd_\omega(e^{N-2})=0$ directly with the choice of constant fibre metric as in
footnote~\ref{f:etaconst}.
In general, the Euler--Lagrange equation is $\dd_\omega(\sqrt{|\det \eta|}e^{N-2})=0$, but, since $\omega$ is an orthogonal connection, we have $\dd_\omega\eta=0$ and, therefore, $\dd_\omega\sqrt{|\det \eta|}=0$.
By the Leibniz rule, we may omit the
nonzero factor $\sqrt{|\det \eta|}$.}
By the Leibniz rule this equation may be rewritten as $e^{N-3}\dd_\omega e= 0$, which, by the nondegeneracy condition on $e$,\footnote{The nondegeneracy condition is obviously not necessary in case $N=3$.} is equivalent to
\begin{equation}\label{e:tf}
\dd_\omega e = 0.
\end{equation}
It may be easily shown that this condition is equivalent to the condition that the affine connection $\Gamma$ induced by $\omega$ be torsion free. Since $\Gamma$ is also metric, it must then be the Levi-Civita connection, and this determines a unique $\omega_e$ solving \eqref{e:tf} for a given $e$.

The Euler--Lagrange equation obtained by a variation of $e$ is  
\begin{equation}\label{e:ee}
\frac1{(N-3)!}e^{N-3}F_\omega-\frac1{(N-1)!}\Lambda e^{N-1}=0.
\end{equation}
Inserting $\omega_e$, this equation turns out to be equivalent to Einstein's equation for the metric $g$ defined in
\eqref{e:g}.

\begin{remark}
Truly, to obtain Equations \eqref{e:tf} one needs injectivity of the map $e^{N-3}\wedge$. From the results of Lemma \ref{lem:We_bulk} will show that $e^{N-3}\wedge d_{\omega}e=0$ is indeed equivalent to \eqref{e:tf}, while no further simplifications can be applied to \eqref{e:ee}. Moreover, this observation will turn out to be true only in the bulk, and will play a crucial role in the definition of boundary variables and constraints (see Sections \ref{s:technical} and \ref{ssec:constraints}).
\end{remark}

Although the solution is proposed for a generic dimension $N\geq 4$, the remainder of this overview will focus on the four-dimensional case:
$N=4$. In this case the Palatini--Cartan action functional is simply
\begin{equation}\label{e:simplePCaction}
S = \int_M \frac12 eeF_\omega + \frac{1}{4!}\Lambda e^4
\end{equation}
and its
Euler--Lagrange equations are (equivalent to)
\begin{equation}\label{e:EL}
\dd_\omega e = 0,\qquad eF_\omega + \frac{1}{3!}\Lambda e^3=0.
\end{equation}

\subsection{Reduced phase space for Palatini--Cartan theory} \label{sec:RPS-PC_intro}

We apply to this theory the construction due to Kijowski and Tulczijew \cite{KT1979}, which allows to investigate its phase space and the Hamiltonian formulation.
\begin{remark}
In order to keep the notation simple, we will denote throughout the paper the boundary of a manifold $M$ with $\Sigma:= \partial M$.
\end{remark}

We begin by observing that, when varying the action \eqref{e:simplePCaction}, one gets a boundary term\footnote{This is precisely so with the choice of constant fibre metric as in
footnote~\ref{f:etaconst}. In general, in this formula, as well as in the formulas for the constraints that will appear later, there is a hidden factor $\sqrt{|\det \eta|}$---which has no effects on the computations and results--- but is required to produce densities, which can be canonically integrated. {}From now on we will no longer mention this factor.}
\[
\Tilde\alpha^\partial = \frac12\int_{\Sigma}  ee\delta\omega.
\]
This is the analogue of the $p\dd q$ term in classical mechanics.  We view the restrictions of $e$ and $\omega$ to the boundary as, respectively,
a nondegenerate section of $T^*(\Sigma)\otimes\calV|_{\Sigma}$ --- i.e., as an injective bundle map $T(\Sigma)\to\calV|_{\Sigma}$ --- and an orthogonal connection associated to $\calV|_{\Sigma}$. Again, we may view the space of these connections as modeled on $T^*(\Sigma)\otimes\bigwedge^2\calV|_{\Sigma}$.

We can then regard $\Tilde\alpha^\partial$ as a one-form on the space $\widetilde{F}_{PC}$ of the \emph{pre-boundary} fields $e\vert_{\Sigma}$ and $\omega\vert_{\Sigma}$. Thus, we might think of $\Tilde\varpi=\delta\Tilde\alpha^\partial$ as a ``pre-symplectic form'' on the space of pre-boundary fields. In fact, the two-form $\Tilde\varpi$ is degenerate: a vector field $X$ in the kernel of $\Tilde\varpi$ acts as $\omega\to\omega + v$ with 
\begin{equation}\label{e:Xomega}
e v=0.
\end{equation} 

\begin{remark}\label{rem:presympred}
We stress that this transformation implicitly depends on $e$ and, under the nondegeneracy assumption on $e$, the $v$s satisfying \eqref{e:Xomega} and hence the $X$s in the kernel of $\Tilde\varpi$ have exactly 6 local components. If we mod out the space of pre-boundary fields by the kernel of $\Tilde\varpi$ we get a space parametrised by
$e$ and equivalence classes of $\omega$ under the $e$-dependent transformation above.\footnote{Observe that $e$ and the remaining $\omega$ both have $12$ local components, or \emph{degrees of freedom}.} This defines the map
\begin{equation}\label{e:presymplecticreduction}
\pi_{PC}\colon \widetilde{F}_{PC} \longrightarrow {F}_{PC}.
\end{equation}
On this quotient space, the two-form  $\Tilde\varpi$ determines a nondegenerate, closed two-form: the manifold $({F}_{PC},  \varpi_{PC})$ is the \emph{geometric} phase space of the theory.
\end{remark}

The symplectic manifold defined by \eqref{e:presymplecticreduction} is not yet the ``physical'' phase space of the theory, usually called \emph{reduced phase space}. Indeed, the Euler--Lagrange equations \eqref{e:EL} split into evolution equations, which contain derivatives of the pre-boundary fields in a transversal direction, and equations where only tangential derivatives appear. The latter equations, called the constraints, must be imposed on the preboundary fields, but this enlarges
the kernel of the presymplectic form, and the corresponding reduction has to be taken into account. To obtain the reduced phase space, it is advantageous to reformulate this procedure in terms of the geometric phase space we have introduced above. 

\begin{remark}\label{rem:naiveconstraints}
An advantage of the Palatini--Cartan formulation is that it is formulated in terms of differential forms and, as a consequence, the constraints are readily available as the restriction to the boundary of Equations \eqref{e:EL} \footnote{Note that the relevant quantity is the zero locus generated by the constraints and not the actual functional form of them.}. One problem is that the constraints are not necessarily invariant under the transformations generated by $X$ in the kernel of $\Tilde\varpi$, i.e. translations of $\omega$ by $v$ (and in fact they are not). There are two possible ways out: to select the $v$\ndash invariant parts of the constraints and take the quotient by the $v$'s %(forgetting the noninvariant constraints which will be ignored anyway in the quotient) 
or to look for a section to the $v$\ndash translations. As in \cite{CS2019},
we will follow here the second strategy.
\end{remark}

The first remark is that the constraint $eF_\omega+ \frac{1}{3!}\Lambda e^3=0$ is indeed $v$\ndash invariant upon using the first constraint $\dd_\omega e = 0$.\footnote{If we denote by $\delta_v$ a variation along $v$, we get
\[
\delta_v (eF_\omega)= e\dd_\omega v = \dd_\omega(ev)-\dd_\omega e\,v. %=-\dd_\omega e\,v.
\]
The first term vanishes because $ev=0$ and the second because we assume $\dd_\omega e = 0$.
}
Therefore, it is better to use a $v$\ndash section that, in conjunction with the invariant part of $\dd_\omega e = 0$,
reproduces the whole constraint. 

It is easy to check that the induced constraint 
\begin{equation}\label{e:Invariantconstraint}
e\dd_\omega e = 0
\end{equation}
($6$ local components) is indeed $v$\ndash invariant. We will call it the \emph{invariant constraint}. It turns out that Equation \eqref{e:Invariantconstraint} determines the whole invariant part--- under the condition that $e$ is such that the boundary metric
\begin{equation} \label{e:Boundary-metric}
g^\partial_{ij}:=(e_i,e_j)
\end{equation}
 is nondegenerate,\footnote{This is for example the case when $M=B\times [0,1]$, where $[0,1]$ is an interval, and $e$ is assumed to produce a metric for which $B\times\{0\}$ and $N\times\{1\}$ are space-like.} where $i,j$ are indices of boundary coordinates. {}From now on we will assume this condition. 

Since the remaining components of the constraint $\dd_\omega e = 0$, which we call the \emph{structural constraint}, are also 6, they can now be used to fix the $v$-translations completely. Note that the invariant constraints are canonically given, whereas the structural ones require a choice. 

A few remarks are now in order (see \cite{CS2019} for their proofs): %CS
\begin{enumerate}
\item Since the  structural constraint completely fixes the $v$\ndash transformations, the space $\mathsf{S}$ of pre-boundary fields satisfying it is symplectomorphic to the space of boundary fields. 
\item On $\mathsf{S}$ ($12$ local degrees of freedom) we still have to impose the $10$ local constraints\footnote{These constraints look like the restriction to the boundary of the Euler--Lagrange equations, except $e\wedge$ cannot be eliminated from either expression. Note also that now part of the components of $\omega$ are constrained, so these constraints are only formally equal to the restriction of the Euler--Lagrange equations.}
\[
e\dd_\omega e = 0,\qquad eF_\omega + \frac{1}{3!}\Lambda e^3=0.
\]
These constraints are first class \cite{CS2019}, and we are then left with the expected $2$ local physical degrees of freedom of four-dimensonal gravity.
\item The constraints may be written in terms of Lagrange multipliers $c$ and $\mu$ as
\begin{equation}\label{e:JK}
L_c=\int_\Sigma ce\dd_\omega e,\qquad J_\mu=\int_\Sigma\mu \left(eF_\omega + \frac{1}{3!}\Lambda e^3=0\right).
\end{equation}
It turns out that ``on shell,'' i.e., upon the constraints, $L$ generates the internal gauge transformations and $J$ the diffeomorphisms (including the remnant of the transversal ones).
\item One may also reduce by stages. One possibility is to impose $e\dd_\omega e = 0$ and to mod out by gauge transformations. The resulting space, with $6$ local degrees of freedom, is symplectomorphic to the phase space of the Einstein--Hilbert formulation (the ``cotangent bundle'' of the space of boundary metrics). The remaing constraints
$eF_\omega+ \frac{1}{3!}\Lambda e^3=0$ produce the energy and momentum constraints. Another possibility is to split the Lie algebra of orthogonal transformations into two $3$-dimensional subalgebras. The symplectic reduction with respect to one of the summands yields Ashtekar's formulation \cite{Ashtekar1986}.
\end{enumerate}

\begin{remark}
Note that the Hamiltonian vector fields of $L$ and $J$ in \eqref{e:JK} depend on the actual choice of the structural constraints and may be not very explicit if the choice is not optimal. 
This is not a serious problem for the classical considerations above, and for this reason no attempt to find an optimal choice was made in \cite{CS2019}; %CS
however, a non optimal choice is inconvenient for concrete computations as well as
for further considerations like, e.g., the explicit BFV description of the theory. 
\end{remark}

\subsection{An optimal choice of structural constraints}\label{s:optimalchoice}
The main result in this paper is to present a choice of structural constraints and how to use it to produce the BFV data associated to the reduced phase space of Palatini--Cartan theory (see Section \ref{s:BFV}). A related explicit choice, in the context of Dirac's formulation, has been presented in \cite{MRC:2019}, with its extension to higher dimension discussed in \cite{MERC:2019} (unrelated to BFV).

First of all we choose a section $e_n$ of $\calV|_\Sigma$ that is a completion of the basis $e_1,e_2,e_3$ (here $1,2,3$ denote indices of boundary coordinates). Note that in a neighborhood of a given $e$ in the space of pre-boundary fields we may choose $e_n$ once and for all independently of the $e$'s in the neighborhood. This done, we write the structural constraints as
\begin{equation}\label{e:structuralconstraint}
e_n\dd_\omega e = e\sigma
\end{equation}
for some unspecified one-form $\sigma$ taking values in $\calV|_\Sigma$. Note that we have $18$ equations with
$12$ unspecified parameters $\sigma$, so in total we have indeed $6$ constraints. 

We will show that this choice of structural constraint fixes the $v$\ndash translations and that, together with the invariant constraint $e\dd_\omega e=0$, it produces the full constraint $\dd_\omega e=0$, which is necessary for the $v$\ndash invariance of $eF_\omega + \frac{1}{3!}\Lambda e^3=0$. Moreover, we will show that this choice actually makes the Hamiltonian vector fields of  $L$ and $J$ in \eqref{e:JK} explicit enough to allow writing down the BFV action of the theory. Finally, it will allow us to extend the result in the presence of a cosmological term and to higher dimensions.

\begin{remark}
Observe that, although not necessary, one may interpret the linearly independent system $(e_1,e_2,e_3,e_n )$ as a coframe in a neighborhood of $\Sigma$ in $M$ and the structural constraint
as one of the remaining Euler--Lagrange equations, with $\sigma$ interpreted as the transversal components of
$\dd_\omega e$. Viewed this way, the structural constraint \eqref{e:structuralconstraint} also immediately shows that the transversal Euler--Lagrange equations (the evolution equations) may actually be solved.
\end{remark}

\section{Technical results}\label{s:technical}
In this section we collect  some technical lemmas that will be useful throughout the paper. We postpone the proofs of lemmas \ref{lem:We_bulk}, \ref{lem:We_boundary} and \ref{lem:varrho12} to Appendix \ref{sec:appendix-CME}. Let us fix the notation. From now on we will use the notation $\calV$ also for its restriction to the boundary. 
For $\mathrm{dim}(M)=N=\mathrm{dim}(\mathcal{V}_x)$, on
$$
\Omega^{i,j}:= \Omega^i\left(M, \textstyle{\bigwedge^j} \mathcal{V}\right) \qquad \Omega_{\partial}^{i,j}:= \Omega^i\left(\Sigma, \textstyle{\bigwedge^j} \mathcal{V}\right)$$ 
 
we define the linear maps:

\begin{eqnarray}
W_{k}^{ (i,j)}: \Omega^{i,j}  \longrightarrow &\Omega^{i+k,j+k}  \label{W_e-bulk}\\
X  \longmapsto  & X \wedge \underbrace{e \wedge \dots \wedge e}_{k-times}  \nonumber\\
W_{k}^{ \partial, (i,j)}: \Omega_{\partial}^{i,j}  \longrightarrow & \Omega_{\partial}^{i+k,j+k} \label{W_e-boundary} \\
X  \longmapsto & X \wedge \underbrace{e \wedge \dots \wedge e}_{k-times}.\nonumber
\end{eqnarray}
The properties of these maps will be clarified by the following results. They will turn out to be crucial in shaping the boundary structure of Palatini--Cartan theory.
We will consider elements in $\Omega^{i,j}$ and $\Omega_{\partial}^{i,j}$ to have total degree $i+j$ and the wedge product, which we implicitly use in the formulas below, defines a graded commutative associative algebra $\Omega^{\bullet,\bullet}$ with respect to the total degree.\footnote{For $\alpha \in \Omega^{i,j}$ and $\beta \in \Omega^{k,l}$ we have $\alpha \wedge \beta = (-1)^{(i+j)(k+l)}\beta \wedge \alpha$. In particular $e$ is an even element of $\Omega^{\bullet,\bullet}$.}
\begin{lemma} \label{lem:We_bulk}
Let $N=\mathrm{dim}(M)\geq 4$. Then 
\begin{enumerate}
\item $ W_{N-3}^{ (2,1)}$ is bijective;  \label{lem:We21}
\item $ \mathrm{dim}\mathrm{Ker}W_{N-3}^{(2,2)}\not=0$. \label{lem:We22}
\end{enumerate}
\end{lemma}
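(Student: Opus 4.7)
My strategy is to handle the two parts separately: Part (2) falls out from a dimension count, while Part (1) reduces, after a change of viewpoint, to the hard Lefschetz theorem for symplectic vector spaces.

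For Part (2), I compare fibre dimensions. At any $x\in M$ the source $\bigwedge^2 T^*_xM\otimes\bigwedge^2\calV_x$ has dimension $\binom{N}{2}^2$ while the target $\bigwedge^{N-1} T^*_xM\otimes\bigwedge^{N-1}\calV_x$ has dimension $N^2$. For $N\ge 4$ one has $\binom{N}{2}^2>N^2$, so $W_{N-3}^{(2,2)}$ cannot be injective and its kernel is therefore nontrivial.

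For Part (1), the analogous count gives $\dim \Omega^{2,1}|_x=\dim \Omega^{N-1,N-2}|_x=\frac{N^2(N-1)}{2}$, so it is enough to prove injectivity pointwise. My plan is to use the coframe isomorphism $e:T_xM\xrightarrow{\sim}\calV_x$ to move the problem inside a single exterior algebra. Setting $T:=T_xM$, $V:=\calV_x$, $W:=T^*\oplus V$ (placing both summands in degree one so that $\dim W=2N$), the standard Koszul-sign isomorphism provides a graded algebra identification $\bigwedge^\bullet T^*\otimes\bigwedge^\bullet V\cong\bigwedge^\bullet W$ that sends the bidegree $(i,j)$ summand into the total-degree $(i+j)$ component of $\bigwedge^\bullet W$. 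Under this identification, for a local frame $\{t_a\}$ of $T$ with dual $\{t^a\}$ and $v_a:=e(t_a)$, the element $e=\sum_a t^a\otimes v_a$ becomes the 2-form $\omega:=\sum_a t^a\wedge v_a\in\bigwedge^2 W$. I will then verify directly that the induced map $\omega^\flat:W^*\to W$ (which sends $(t^a)^*\mapsto v_a$ and $(v_a)^*\mapsto -t^a$) is invertible, whence $\omega$ is a symplectic form on $W$.

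With $\omega$ symplectic, the hard Lefschetz theorem yields isomorphisms $\omega^{N-k}\wedge(\cdot):\bigwedge^k W\xrightarrow{\sim}\bigwedge^{2N-k}W$ for every $0\le k\le N$. Specialising to $k=3$, I obtain an isomorphism $e^{N-3}\wedge(\cdot):\bigwedge^3 W\xrightarrow{\sim}\bigwedge^{2N-3}W$. Since $e$ has bidegree $(1,1)$, multiplication by $e^{N-3}$ preserves bidegree up to the shift by $(N-3,N-3)$, so this isomorphism decomposes as a direct sum of isomorphisms between corresponding bidegree summands; the $(2,1)\to(N-1,N-2)$ piece is precisely $W_{N-3}^{(2,1)}$, giving the desired bijectivity. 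The one substantive ingredient is the Lefschetz isomorphism itself, which is genuinely needed here; the translation to the bigraded setting and the bidegree bookkeeping afterwards are routine.
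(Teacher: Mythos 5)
Your part (2) coincides with the paper's argument: a fibrewise dimension count showing that the source $\Omega^{2,2}$ is strictly larger than the target $\Omega^{N-1,N-1}$, so the kernel must be nontrivial. For part (1), however, you take a genuinely different and more conceptual route. The paper works in the coframe-adapted basis $e_a=e(\partial_a)$, writes out explicitly the linear equations cutting out $\mathrm{Ker}\,W_{N-3}^{(2,1)}$, checks their independence, and counts that their number equals $\dim\Omega^{2,1}=\dim\Omega^{N-1,N-2}=\tfrac{N^2(N-1)}{2}$, whence injectivity and (by equality of dimensions) bijectivity. You instead pass to $\bigwedge^\bullet W$ with $W=T^*_xM\oplus\calV_x$, note that $e$ becomes a nondegenerate element $\omega\in\bigwedge^2W$ (your check that $\omega^\flat$ sends the dual basis to $\{v_a,-t^a\}$ is correct and uses exactly the nondegeneracy of the coframe), and invoke the symplectic hard Lefschetz isomorphism $\omega^{N-3}\wedge\cdot:\bigwedge^3W\to\bigwedge^{2N-3}W$. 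The restriction to the bidegree-$(2,1)$ summand is legitimate: a bijection compatible with direct-sum decompositions restricts to bijections of the summands, and the bidegrees $(2,1)\mapsto(N-1,N-2)$ match the decomposition of $\bigwedge^{2N-3}W$. Your approach buys more than the statement asks: it shows in one stroke that every $W_{N-k}^{(i,j)}$ with $i+j=k\le N$ is bijective in the bulk, and it explains structurally why the boundary maps of Lemma \ref{lem:We_boundary} are not bijective (there $T\Sigma\oplus\calV$ is odd-dimensional and carries no symplectic form). What the paper's more pedestrian computation buys is reusability: essentially the same equation count, with one index frozen, is what proves the boundary statements, where your Lefschetz argument is unavailable. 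The only caveat is that the Lefschetz isomorphism is an external ingredient you must cite or prove; granting it, your bookkeeping is sound and the proof is correct.
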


\begin{lemma} \label{lem:We_boundary}
The maps $W_{k}^{ \partial, (i,j)}$ have the following properties for $N \geq 4$:

\begin{enumerate}
\item $W_{N-3}^{\partial, (2,1)}$ is surjective; \label{lem:Wep21}
\item $W_{N-3}^{\partial, (1,1)}$ is injective; \label{lem:Wep11}
\item $W_{N-3}^{\partial, (1,2)}$ is surjective; \label{lem:kerWe12} 
\item $\dim \mathrm{Ker} W_{N-3}^{\partial, (1,2)} = \dim \mathrm{Ker} W_{N-3}^{\partial, (2,1)}$;\label{lem:kernel12-21}
\item $W_{N-4}^{\partial, (2,1)}$ is injective. ($N \geq 5$)\label{lem:We5p21}
\end{enumerate}
\end{lemma}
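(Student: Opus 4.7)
The strategy is to work in a local frame: complete the basis $e_1,\dots,e_{N-1}$ of $E := \Ima(e|_\Sigma) \subset \calV|_\Sigma$ (the images of the coordinate vector fields under $e$) by a section $e_n$ to a full basis of $\calV|_\Sigma$, and expand every object in sight. One has $\wedge^j\calV = \wedge^j E \oplus (e_n\wedge \wedge^{j-1}E)$, and since $e^{N-3}$ takes values in $\wedge^{N-3}E$, each $W^{\partial,(i,j)}_{N-3}$ preserves this splitting and decomposes into two pieces of essentially the same form.

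Parts (\ref{lem:Wep21}) and (\ref{lem:Wep11}) are proved by direct computation in this frame. Writing $e^{N-3}=(N-3)!\sum_J dx^J\otimes e_J$ (sum over $(N-3)$-subsets $J$ of $\{1,\dots,N-1\}$), the coefficient of any target basis element $dx^K\otimes e_L$ in $e^{N-3}\wedge X$ is a signed sum of components of $X$, constrained by disjointness of the relevant multi-indices. For (\ref{lem:Wep21}), the form factor $\Omega^{N-1}(\Sigma)$ of the target is one-dimensional; splitting the $\calV$-factor into its $E$- and $e_n$-sectors, in each sector the resulting linear system is underdetermined and can be inverted explicitly, yielding surjectivity. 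For (\ref{lem:Wep11}), vanishing of $e^{N-3}\wedge X$ gives two groups of equations: the ``off-diagonal'' ones (where the $dx$- and $e$-indices of $X$ differ) immediately force $X^a_\mu=0$ for $a\neq\mu$, while the ``diagonal'' coefficients yield a square system whose matrix, after sign bookkeeping, is exactly of $J-I$ type (all-ones minus identity) and thus invertible for $N\geq 3$; the $e_n$-summand is handled analogously and more easily.

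Parts (\ref{lem:kerWe12}) and (\ref{lem:kernel12-21}) then follow formally. The non-degenerate wedge pairing $\Omega^{i,j}_\partial \otimes \Omega^{N-1-i,N-j}_\partial \to \Omega^{N-1,N}_\partial$ (both sides are rank-one bundles, and the wedge on exterior algebras is non-degenerate) identifies the transpose of $W^{\partial,(1,2)}_{N-3}$, up to sign, with $W^{\partial,(1,1)}_{N-3}$, so surjectivity of the former is equivalent to injectivity of the latter, deducing (\ref{lem:kerWe12}) from (\ref{lem:Wep11}). For (\ref{lem:kernel12-21}), rank--nullity together with (\ref{lem:Wep21}) and (\ref{lem:kerWe12}) gives $\dim\ker W^{\partial,(2,1)}_{N-3}=\binom{N-1}{2}N-\binom{N}{2}$ and $\dim\ker W^{\partial,(1,2)}_{N-3}=(N-1)\binom{N}{2}-(N-1)N$, both of which simplify to $N(N-1)(N-3)/2$.

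Part (\ref{lem:We5p21}) is again by direct expansion, with $e^{N-4}$ replacing $e^{N-3}$. The $e_n$-sector of $\Omega^{2,1}_\partial$ is straightforward. For the $E$-sector $X_E\in\Omega^2(\Sigma)\otimes E$, the target $\wedge^{N-3}E$ now has strictly more room, and the ``off-diagonal'' equations eliminate every component $(\alpha_a)^I$ with $a\notin I$; the remaining ``diagonal'' components satisfy, for each fixed auxiliary index $k$, a linear system whose matrix, after an appropriate rescaling $u_l = \sigma_l (\alpha_l)^{\{k,l\}}$, is again of $J-I$ type on $N-2$ unknowns, invertible as soon as $N-3\neq 0$. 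The hypothesis $N\geq 5$ enters precisely because one needs the ``off-diagonal'' sector to be non-empty, i.e.\ $N-4\geq 1$. The main technical obstacle throughout is tracking the shuffle signs together with the graded-tensor-product commutation signs $(-1)^{|J|_\epsilon\cdot|I|_{\mathrm{form}}}$; the key observation is that in the wedges $dx^{K\setminus\{a\}}\wedge dx^a$ and $e_{K\setminus\{a\}}\wedge e_a$ the two shuffle signs coincide and their product is always $+1$, so no residual $K$- or $a$-dependent sign survives to contaminate the $J-I$ structure.
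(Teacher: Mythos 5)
Your proposal is correct. For parts (1), (2) and (5) you follow essentially the same route as the paper: expand everything in the frame $(e_1,\dots,e_{N-1},e_n)$ and analyse the resulting fibrewise linear system; your explicit identification of the ``diagonal'' block as an all-ones-minus-identity matrix (invertible since its eigenvalues are $-1$ and $N-2$, resp.\ $N-3$) in fact makes precise the step that the paper dispatches with ``it is easy to check that these equations are independent''. The genuine divergence is in parts (3) and (4): the paper proves surjectivity of $W^{\partial,(1,2)}_{N-3}$ by a second, independent count of kernel equations, whereas you obtain it for free from part (2) by observing that the fibrewise wedge pairing $\Omega^{1,2}_\partial\otimes\Omega^{N-2,N-2}_\partial\to\Omega^{N-1,N}_\partial$ is perfect and that multiplication by the even element $e^{N-3}$ is self-adjoint for it, so that $W^{\partial,(1,2)}_{N-3}$ is the transpose of $W^{\partial,(1,1)}_{N-3}$; part (4) then follows by rank--nullity, exactly as in the paper's ``direct consequence of the previous parts'', and your two dimension counts both correctly evaluate to $\tfrac{1}{2}N(N-1)(N-3)$. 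The duality argument halves the computational work and is more robust against sign bookkeeping; the paper's direct computation has the compensating advantage of producing the explicit list of kernel equations \eqref{e:conditions_kernel_v}, which is reused later in the proof of Lemma \ref{lem:varrho12}. One small inaccuracy that does not affect validity: your parenthetical claim that the hypothesis $N\geq 5$ in (5) is needed so that the ``off-diagonal sector'' is non-empty is not quite the point --- for $N=4$ the map $W^{\partial,(2,1)}_{0}$ is the identity and the statement holds trivially; the hypothesis merely excludes this degenerate case from the frame expansion.
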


We can also define a map
\begin{align*}
\varrho : \Omega_{\partial}^{1,2}  & \longrightarrow \Omega_{\partial}^{2,1} \\
X & \longmapsto [X, e] .
\end{align*}
It has the following property:
\begin{lemma}\label{lem:varrho12}
If $g^\partial$, as defined in \eqref{e:Boundary-metric}, is nondegenerate, then
$\varrho |_{\mathrm{Ker} W_{N-3}^{\partial, (1,2)}}$ is injective.
\end{lemma}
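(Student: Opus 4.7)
The plan is to reduce the question to a pointwise computation in a frame of $\mathcal V$ adapted to the boundary. Nondegeneracy of $g^\partial$ guarantees that the image of $e_p\colon T_p\Sigma\to\mathcal V_p$ is a nondegenerate $(N-1)$-dimensional subspace of $(\mathcal V_p,\eta_p)$, so its $\eta$-orthogonal complement is one-dimensional and nondegenerate. I therefore pick a local section $e_n$ of $\mathcal V|_\Sigma$ orthogonal to the image of $e$, together with coordinates $x^i$ on $\Sigma$ near a chosen point $p$ such that at $p$ one has $e^a_i=\delta^a_i$ for $a,i\in\{1,\dots,N-1\}$, $e^n_i=0$, and $\eta$ is diagonal in the frame $v_1,\dots,v_{N-1},v_n$. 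Since injectivity of a fibrewise linear map is a pointwise condition, it suffices to establish the statement at $p$.

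I then read $[X,e]$ as the natural action of $\bigwedge^2\mathcal V\cong\mathfrak{so}(\mathcal V,\eta)$ on $\mathcal V$, wedged on the form slots. Expanding $X=X^{ab}_i\,dx^i\otimes v_a\wedge v_b$ with $X^{ab}=-X^{ba}$, at $p$ one gets $[X,e]^a_{ij}=X^{aj}_i\eta_{jj}-X^{ai}_j\eta_{ii}$, and the equation $[X,e]=0$ splits according to whether $a$ is tangential or normal. For tangential $a$, combining $X^{aj}_i\eta_{jj}=X^{ai}_j\eta_{ii}$ with the antisymmetry of $X^{ab}$ yields, after a short symmetry-chase, $2X^{ab}_i=0$ whenever $a,b\in\{1,\dots,N-1\}$, so the tangential--tangential part of $X$ vanishes entirely. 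For $a=n$, the same condition says that the rescaled entries $\sigma_{ij}:=X^{nj}_i/\eta_{ii}$ form a symmetric matrix in $(i,j)$. Hence $\ker\varrho$ at $p$ is parametrised exactly by symmetric matrices $\sigma$, and $\dim\ker\varrho=\tfrac12 N(N-1)$.

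Next I would plug such an $X$ into $X\wedge e^{N-3}=0$. Expanding $e^{N-3}=(N-3)!\sum_K dx^K\otimes v_K$ over ordered $(N-3)$-subsets $K\subset\{1,\dots,N-1\}$, the internal wedge $v_n\wedge v_i\wedge v_K$ is nonzero precisely when $i$ lies in the two-element complement $\{i,r\}:=\{1,\dots,N-1\}\setminus K$. Grouping the resulting terms by the internal $(N-1)$-vector $v_{\{1,\dots,N\}\setminus\{r\}}$ produces, for each $r\in\{1,\dots,N-1\}$, one equation in $\Omega^{N-2}(\Sigma)$. The boundary form $dx^j\wedge dx^K$ vanishes unless $j\in\{i,r\}$, so extracting the coefficient of $v_{\{1,\dots,N\}\setminus\{r\}}\otimes dx^{\{1,\dots,N-1\}\setminus\{m\}}$ leaves only the off-diagonal entry $\sigma_{mr}$ (for $m\ne r$, up to a sign) or a linear combination $\sum_{i\ne r}(\pm)\sigma_{ii}$ of diagonal entries (for $m=r$). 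The first family of equations kills all off-diagonal entries of $\sigma$, and the second reduces the problem to an $(N-1)\times(N-1)$ linear system on the diagonal entries.

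The main obstacle --- the one place where signs and signature really matter --- is to verify that this diagonal system is invertible. Tracking the signs produced by the reorderings in $v_i\wedge v_n\wedge v_K$ and $dx^j\wedge dx^K$, together with the $\eta_{ii}$ factors, the coefficient matrix turns out to be (up to a diagonal rescaling by nonzero numbers) of the form $c(J-I)$ with $J$ the $(N-1)\times(N-1)$ all-ones matrix and $c\ne 0$. Since $\det(J-I)=(N-2)(-1)^{N-2}\ne 0$ for $N\ge 4$, the diagonal part of $\sigma$ also vanishes, giving $X\equiv 0$ at $p$. The case $N=4$ already serves as a sanity check: it reproduces the three relations $\sigma_{11}+\sigma_{22}=\sigma_{11}+\sigma_{33}=\sigma_{22}+\sigma_{33}=0$, whose only solution is $\sigma=0$.
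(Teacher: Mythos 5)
Your proof is correct, and it lives in the same world as the paper's: both arguments reduce Lemma \ref{lem:varrho12} to pointwise linear algebra in a frame $(e_1,\dots,e_{N-1},e_n)$ in which $g^\partial$ (hence $\eta$) is diagonal, which is exactly where nondegeneracy enters. The difference is in how the linear system is disposed of. The paper first imposes the $\mathrm{Ker}\, W^{\partial,(1,2)}_{N-3}$ conditions \eqref{e:conditions_kernel_v}, then adjoins $[v,e]=0$ and concludes by a count: the combined equations are asserted to be independent and their number equals $\dim \mathrm{Ker}\, W^{\partial,(1,2)}_{N-3}=\tfrac{N(N-1)}{2}(N-3)$. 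You work in the opposite order and, more to the point, you solve the system instead of counting it: the symmetry-chase (the classical uniqueness-of-the-metric-connection argument) identifies $\ker\varrho\subset\Omega^{1,2}_\partial$ with the symmetric matrices $\sigma$ sitting in the $e_n\wedge e_j$ components, and then $e^{N-3}X=0$ on that subspace kills the off-diagonal entries outright and reduces the diagonal entries to the system $\sum_{i\neq r}\eta_{ii}\sigma_{ii}=0$, governed by $J-I$ with $\det(J-I)=(-1)^{N-2}(N-2)\neq 0$. What your route buys is that the paper's ``it is easy to check that these equations are independent'' is replaced by an explicit inversion, together with a geometric description of $\ker\varrho$ (a second-fundamental-form-like tensor); the price is the longer sign bookkeeping in the wedge computation, which checks out.
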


\begin{remark}
Some of the properties in Lemmas \ref{lem:We_bulk}, \ref{lem:We_boundary} and \ref{lem:varrho12} have already been proven in \cite{CS2019} for $N =4$. In Appendix \ref{sec:appendix-CME} we will follow a similar strategy for their proofs, adapting them to the different dimensions. In \cite[Lemma 4.12]{CS2019}, a map similar to $\varrho$ was used, denoted by $\phi_e$, which is the restriction of $\varrho$ to the kernel $\mathrm{Ker}W^{\partial, (1,2)}_{1}$, composed with the projection $p_{2,1}$ to $\mathrm{Ker}W^{\partial, (2,1)}_1$, that is to say $\phi_e\equiv p_{(2,1)}\circ \varrho\vert_{\mathrm{ker} W^{\partial, (1,2)}_1}$. 
\end{remark}
\begin{remark}
Throughout the paper we will refer to the \emph{dimensions} (as $C^\infty$ modules) of the spaces $\Omega^{i,j}$ as the number of degrees of freedom of the space. Note that this dimension is also the same as the rank of the typical fibre. Hence for example $\dim (\Omega^{i,j}) := \dim \bigwedge^{i} (T_x^*M) \times\bigwedge^{j}\mathcal{V}_x= \binom{N}{i}\binom{N}{j}$. 
\end{remark}

Recalling the definition of  $e_n$ in Section \ref{s:optimalchoice} as a section of $\calV|_\Sigma$ that is a completion of the basis $e_1,e_2,e_3$, we can state the following:
\begin{lemma}\label{lem:Omega2,1_d4}
Let $\alpha \in \Omega^{2,1}_\partial$. Then 
\begin{align}\label{ConditionforOmega21_d4}
\alpha=0 \qquad \Longleftrightarrow  \qquad \begin{cases}
e^{N-3}\alpha =0 \\
e_n e^{N-4}\alpha \in \Ima W_{N-3}^{\partial, (1,1)}
\end{cases}.
\end{align}
\end{lemma}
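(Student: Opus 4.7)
The implication ``$\Rightarrow$'' is immediate, so I will focus on ``$\Leftarrow$''. My plan is to exploit the splitting $\mathcal{V}|_\Sigma = V \oplus \mathrm{span}(e_n)$, where $V := \mathrm{Im}(e|_\Sigma) \subset \mathcal{V}|_\Sigma$ is the rank-$(N-1)$ subbundle (well-defined because $g^\partial$ is nondegenerate, so $e|_\Sigma$ is an isomorphism $T\Sigma \to V$). I will decompose
\[
\alpha = \alpha_V + \alpha_n\, e_n, \qquad \beta = \beta_V + \beta_n\, e_n,
\]
with $\alpha_V,\beta_V$ taking values in $V$ and $\alpha_n \in \Omega^2(\Sigma)$, $\beta_n \in \Omega^1(\Sigma)$. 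The induced decomposition $\bigwedge^{N-2}\mathcal{V} = \bigwedge^{N-2} V \oplus \bigwedge^{N-3} V \wedge e_n$ (valid because $\mathrm{span}(e_n)$ is a rank-one complement of $V$) will separate each of the two hypotheses into independent components in the two summands.

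First, the $\bigwedge^{N-3} V \wedge e_n$-component of $e^{N-3}\alpha = 0$ will read (up to sign) $(\alpha_n \wedge e^{N-3})\wedge e_n = 0$; since the $(N-3)$-form pieces of $e^{N-3}$ (one for each $(N-3)$-subset of $\{1,\dots,N-1\}$) span $\Omega^{N-3}(\Sigma)$, this will yield $\alpha_n \wedge \omega = 0$ for every $\omega \in \Omega^{N-3}(\Sigma)$, and nondegeneracy of the wedge pairing $\Omega^2(\Sigma) \times \Omega^{N-3}(\Sigma) \to \Omega^{N-1}(\Sigma)$ will force $\alpha_n = 0$. With $\alpha = \alpha_V$, the left-hand side $e_n e^{N-4}\alpha_V$ of the second hypothesis lies entirely in the $\bigwedge^{N-3} V \wedge e_n$-summand; matching the $\bigwedge^{N-2} V$-component against $e^{N-3}\beta$ will then give $e^{N-3}\beta_V = 0$, and Lemma \ref{lem:We_boundary}(2) will force $\beta_V = 0$, so that $\beta = \beta_n\, e_n$.

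For the remaining $\bigwedge^{N-3} V \wedge e_n$-component of the second hypothesis, I will use that wedging with $e_n$ is an isomorphism $\bigwedge^{N-3} V \to \bigwedge^{N-3} V \wedge e_n$ (since $e_n$ is transverse to $V$) to cancel the common $e_n$-factor, obtaining (up to sign) $e^{N-4}\alpha_V = e^{N-3}\beta_n = e^{N-4}(e\beta_n)$, i.e.\ $e^{N-4}(\alpha_V \mp e\beta_n) = 0$. Lemma \ref{lem:We_boundary}(5) (the case $N=4$ being the trivial one in which $W_0^{\partial,(2,1)}$ is the identity) will then give $\alpha_V = \pm e\beta_n$, and substituting into $e^{N-3}\alpha_V = 0$ will yield $e^{N-2}\beta_n = 0$. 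One more application of the same wedge-pairing argument as in the first step---now for $\Omega^1(\Sigma) \times \Omega^{N-2}(\Sigma) \to \Omega^{N-1}(\Sigma)$ and using that the $(N-2)$-form pieces of $e^{N-2}$ span $\Omega^{N-2}(\Sigma)$---will force $\beta_n = 0$, hence $\alpha_V = 0$ and $\alpha = 0$. The main bookkeeping obstacle will be tracking the Koszul signs acquired when commuting $e_n$ past $e^{N-4}$ and when extracting the $e_n$-factor from the graded-commutative product; the logical skeleton, however, is driven entirely by the clean rank-one-complement splitting of $\mathcal{V}|_\Sigma$ together with the injectivity and non-degeneracy results cited above.
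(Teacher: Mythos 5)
Your argument is correct, but it takes a genuinely different route from the paper's. The paper proves the nontrivial implication by passing to the cylinder $\widetilde M=\Sigma\times I$: it packages the two hypotheses into the single bulk equation $E\wedge A=0$, with $E=e^{N-3}+e_ne^{N-4}\dd x^n$ (the $(N-3)$-rd power of the full coframe $e+e_n\dd x^n$) and $A=\alpha+\sigma\,\dd x^n$, and then invokes the bulk bijectivity of $W_{N-3}^{(2,1)}$ (Lemma \ref{lem:We_bulk}, part (1)) to conclude $A=0$. You instead stay on $\Sigma$ and unpack the same information componentwise through the splitting $\calV|_\Sigma=\Ima(e|_\Sigma)\oplus\langle e_n\rangle$ and the induced decomposition of $\bigwedge^{N-2}\calV$, successively killing $\alpha_n$ (nondegeneracy of the wedge pairing on $\Sigma$), $\beta_V$ (injectivity of $W_{N-3}^{\partial,(1,1)}$, Lemma \ref{lem:We_boundary} part (2)), $\alpha_V\mp e\beta_n$ (injectivity of $W_{N-4}^{\partial,(2,1)}$, Lemma \ref{lem:We_boundary} part (5), correctly supplemented by the trivial $N=4$ case, which the paper states only for $N\geq5$), and finally $\beta_n$. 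The paper's proof is shorter and reuses a single bulk dimension count; yours is longer but makes explicit which component of which hypothesis eliminates which component of $\alpha$, and it relies only on boundary statements. Two cosmetic points: the splitting of $\calV|_\Sigma$ needs only that $e|_\Sigma$ be injective with $e_n$ transverse to its image (the standing nondegeneracy of the coframe plus the defining choice of $e_n$), not the nondegeneracy of $g^\partial$ that you invoke --- the lemma as stated does not assume the latter; and the Koszul signs you defer are indeed immaterial, since every conclusion you draw is of the form ``some component vanishes.''
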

\begin{proof}
We first note that the second requirement corresponds to the existence of a $\sigma \in \Omega_{\partial}^{1,1}$ such that $ e_n e^{N-4} \alpha = e^{N-3} \sigma$.
Let now $I \subset \mathbb{R}$ be an interval and let $x^n$ be the coordinate along it. We define $\widetilde{M} = \Sigma \times I$ and rewrite \eqref{ConditionforOmega21_d4} as conditions on the pullbacks of $e, e_n, \sigma$ and $\alpha$ to $\widetilde{M}$, which we will keep denoting with the same letters.  We now define the following forms on $\widetilde{M}$:
$$ E = e^{N-3} + e_n e^{N-4}d x^n , \qquad A = \alpha + \sigma dx^n.$$
Hence the system \eqref{ConditionforOmega21_d4} corresponds to the single equation $ E \wedge A =0$. Since $e_n$ has been chosen to be linearly independent from $e$ as vectors in $\mathcal{V}$, $E$ is an isomorphism $T\widetilde{M} \rightarrow \mathcal{V}$. Hence we can use Lemma \hyperref[lem:We21]{\ref*{lem:We_bulk}.(\ref*{lem:We21})} and deduce that 
$$E \wedge \cdot : \Omega^2(\widetilde{M}, \mathcal{V}) \rightarrow \Omega^{N-1}(\widetilde{M}, \textstyle{\bigwedge^{N-2}\mathcal{V}})$$
is injective. Hence $A=0$, which in turn implies $\alpha=0$.
\end{proof}

\begin{corollary} \label{c:dynamical+structural}
\begin{align*}
d_{\omega}e=0 \qquad \Longleftrightarrow  \qquad \begin{cases}
e^{N-3}d_{\omega}e =0 \\
e_n e^{N-4}d_{\omega}e \in \Ima W_{N-3}^{\partial, (1,1)}
\end{cases}.
\end{align*}
\end{corollary}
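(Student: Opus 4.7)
My plan is to observe that this corollary is an immediate specialization of Lemma \ref{lem:Omega2,1_d4}. Indeed, the pullback of $d_{\omega}e$ to $\Sigma$ is an element $\alpha \in \Omega_{\partial}^{2,1}$, and the two conditions appearing on the right-hand side of the corollary are literally the two conditions of the lemma applied to this $\alpha$: the first condition $e^{N-3}d_\omega e = 0$ is $e^{N-3}\alpha = 0$, and the second condition $e_n e^{N-4} d_\omega e \in \Ima W^{\partial,(1,1)}_{N-3}$ asserts precisely the existence of a $\sigma \in \Omega^{1,1}_\partial$ with $e_n e^{N-4}\alpha = e^{N-3}\sigma$, which is the structural-constraint form used in the lemma's proof.

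Accordingly, I would argue the two directions as follows. The $(\Rightarrow)$ implication is immediate: if $d_\omega e = 0$, then both $e^{N-3}d_\omega e = 0$ and $e_n e^{N-4}d_\omega e = 0$, and the zero element is trivially in $\Ima W^{\partial,(1,1)}_{N-3}$ (take $\sigma = 0$). For the $(\Leftarrow)$ implication, I simply invoke Lemma \ref{lem:Omega2,1_d4} with $\alpha = d_\omega e|_\Sigma$ to conclude that this pullback vanishes, which in context is the statement $d_\omega e = 0$ on the boundary being analyzed.

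Since the entire content of the corollary is already packaged in the lemma, there is no technical obstacle to overcome here. The only matter to be careful about is the interpretation: $e_n$ is a section of $\calV|_\Sigma$ and the target spaces of the $W^\partial$ maps are all boundary spaces, so $d_\omega e$ in the corollary should be understood as the boundary pullback, and the equivalence is an equivalence of boundary conditions. No further hypotheses (such as nondegeneracy of $g^\partial$) are invoked beyond those already implicit in the use of Lemma \ref{lem:Omega2,1_d4}. The corollary is therefore little more than a rephrasing of the lemma tailored to the pair of constraints (invariant and structural) that will be used in Section \ref{s:constraints}.
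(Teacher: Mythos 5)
Your proof is correct and follows exactly the paper's route: the paper's own proof reads ``Trivial application of Lemma \ref{lem:Omega2,1_d4} to $\alpha = d_\omega e$,'' and you have simply spelled out the two directions and the boundary-pullback interpretation in more detail.
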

\begin{proof}
Trivial application of Lemma \ref{lem:Omega2,1_d4} to $\alpha= d_{\omega}e$.
\end{proof}

\begin{corollary}\label{cor:enve_injective}
If $g^\partial$ is nondegenerate, the map
\begin{align*}
\chi \colon \mathrm{Ker} W_{N-3}^{\partial, (1,2)} & \rightarrow \Omega_{\partial}^{(N-2,N-2)} \\
 v & \mapsto e_n  e^{N-4}[v,e]
\end{align*}
is injective and in particular
\begin{equation}\label{envenotinIme_d4}
\Ima \chi \cap \Ima W_{N-3}^{\partial, (1,1)}= \{0\}.
\end{equation}
\end{corollary}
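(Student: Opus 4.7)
The plan is to reduce both parts of the corollary to a single statement: if $v \in \mathrm{Ker} W^{\partial, (1,2)}_{N-3}$ satisfies $\chi(v) \in \Ima W^{\partial, (1,1)}_{N-3}$, then $v = 0$. Injectivity of $\chi$ corresponds to the special case $\chi(v) = 0$, and the triviality of the intersection $\Ima \chi \cap \Ima W^{\partial, (1,1)}_{N-3} = \{0\}$ is then immediate.

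To implement this I would write $e_n e^{N-4}[v,e] = e^{N-3}\sigma$ for some $\sigma \in \Omega^{1,1}_\partial$ and set $\alpha := \varrho(v) = [v,e] \in \Omega^{2,1}_\partial$, with the aim of applying Lemma \ref{lem:Omega2,1_d4} to $\alpha$. Its second hypothesis $e_n e^{N-4}\alpha \in \Ima W^{\partial, (1,1)}_{N-3}$ is given. If the first hypothesis $e^{N-3}\wedge [v,e] = 0$ can also be established, then Lemma \ref{lem:Omega2,1_d4} yields $\varrho(v) = 0$, and Lemma \ref{lem:varrho12}, through the nondegeneracy of $g^\partial$, then forces $v = 0$.

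The heart of the argument is therefore the algebraic claim that $\varrho$ sends $\mathrm{Ker} W^{\partial, (1,2)}_{N-3}$ into $\mathrm{Ker} W^{\partial, (2,1)}_{N-3}$, i.e., $v \wedge e^{N-3} = 0$ implies $e^{N-3}\wedge [v,e] = 0$. By the graded Leibniz rule for the odd derivation $[v,\cdot]$ of total degree $+1$ on $\Omega^{\bullet,\bullet}_\partial$, this is equivalent to $[v, e^{N-2}] = 0$. To verify the latter pointwise, I would set $\phi(e(X)) := v(X) \in \bigwedge^2\calV$; the hypothesis then becomes the antisymmetric relation $\sum_{i=1}^{N-2}(-1)^{i-1}\phi(x_i)\wedge x_1\wedge \cdots \widehat{x_i}\cdots \wedge x_{N-2} = 0$ for all $x_j \in e(T_x\Sigma)$, while the conclusion is an analogous sum involving brackets $[\phi(x_i), x_j]$. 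These are linked via the interior-product identity $\iota_{\eta z}(\phi(x) \wedge y) = -[\phi(x),z]\wedge y + \eta(z,y)\phi(x)$, so applying $\iota_{\eta z}$ to the hypothesis and exploiting the symmetry of $\eta$ to cancel the $\eta$-proportional terms should yield the conclusion.

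The hard part will be the combinatorial sign bookkeeping in the pointwise identity for general $N$. For $N=4$ the hypothesis collapses to the symmetric relation $\phi(x)\wedge y = \phi(y)\wedge x$ and the calculation is a short direct cancellation; for $N \geq 5$ more care is required. An alternative that bypasses the direct combinatorics is to observe that the stability $\varrho(\mathrm{Ker} W^{\partial, (1,2)}_{N-3}) \subset \mathrm{Ker} W^{\partial, (2,1)}_{N-3}$ is already implicit in the construction of the map $\phi_e$ in the remark following Lemma \ref{lem:varrho12}, and can therefore be extracted from the proofs given in the appendix.
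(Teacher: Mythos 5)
Your proposal follows the paper's proof almost step for step: the same reduction of both claims to the single implication (if $v\in\mathrm{Ker}\,W^{\partial,(1,2)}_{N-3}$ and $\chi(v)\in\Ima W^{\partial,(1,1)}_{N-3}$ then $v=0$), the same application of Lemma \ref{lem:Omega2,1_d4} to $\alpha=[v,e]$, and the same final appeal to Lemma \ref{lem:varrho12} via nondegeneracy of $g^\partial$. The only place you diverge is the auxiliary fact $e^{N-3}\wedge[v,e]=0$, which you correctly identify as the remaining obligation but propose to verify by a pointwise combinatorial computation that you leave unexecuted (``the hard part''). The paper disposes of this in two lines by running the graded Leibniz rule in the opposite direction from yours: instead of expanding $[v,e^{N-2}]$, expand $[e^{N-3}v,e]$ to obtain
\begin{equation*}
e^{N-3}[v,e]=[e^{N-3}v,e]-v\,[e^{N-3},e];
\end{equation*}
the first term vanishes because $e^{N-3}v=0$ by hypothesis, and the second because $[e^{N-3},e]=(N-3)\,e^{N-4}[e,e]$ with $[e,e]=0$ by the symmetry of $\eta$ --- the same symmetry you invoke to cancel the $\eta$-proportional terms, but used globally rather than pointwise, so no $N$-dependent sign bookkeeping is needed. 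With that substitution your argument closes completely; as written, the one step you flag as hard is the one step you have not actually carried out.
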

\begin{proof}
Consider $ 0 \neq v  \in \mathrm{Ker} W_{N-3}^{\partial, (1,2)}$, i.e. such that $e^{N-3}v=0$. We get
\begin{align*}
e^{N-3} [v, e] = [ e^{N-3} v, e] - v [e^{N-3} , e]= (N-3) v e^{N-4}[e,e] = 0.
\end{align*}
 Suppose now by contradiction that $e_n  e^{N-4}[v,e] \in \Ima W_{N-3}^{\partial, (1,1)}$; then, applying lemma \ref{lem:Omega2,1_d4} to $\alpha = [v,e]$, we get $[v,e]=0$. From Lemma \ref{lem:varrho12} we know that if $g^\partial$ is nondegenerate, $[v,e] \neq 0$ which contradicts the previous assertion. 
\end{proof}

\begin{lemma}\label{lem:Omega2,2_d4}
Let $\beta \in \Omega^{N-2,N-2}_\partial$. If $g^\partial$ is nondegenerate, there exist a unique $v \in \mathrm{Ker} W_{N-3}^{\partial, (1,2)}$ and a unique $\gamma \in \Omega_{\partial}^{1,1}$ such that 
\begin{align*}
\beta = e^{N-3} \gamma + e_n e^{N-4} [v, e].
\end{align*}
\end{lemma}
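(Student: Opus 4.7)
The plan is to show that the linear map
\[
\Phi \colon \Omega_\partial^{1,1} \oplus \mathrm{Ker}\, W_{N-3}^{\partial,(1,2)} \longrightarrow \Omega_\partial^{N-2,N-2},\qquad \Phi(\gamma,v) := e^{N-3}\gamma + e_n e^{N-4}[v,e],
\]
is a bijection. Existence and uniqueness of the decomposition are then, respectively, the surjectivity and injectivity of $\Phi$.

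Injectivity is an immediate consequence of Corollary \ref{cor:enve_injective}. Indeed, if $\Phi(\gamma,v)=0$ then $e_n e^{N-4}[v,e] = -e^{N-3}\gamma \in \Ima W_{N-3}^{\partial,(1,1)}$; but, for $v\in \mathrm{Ker}\, W_{N-3}^{\partial,(1,2)}$, $e_ne^{N-4}[v,e]=\chi(v)$, and \eqref{envenotinIme_d4} forces both summands to vanish. The injectivity of $\chi$ (again Corollary \ref{cor:enve_injective}, using nondegeneracy of $g^\partial$) gives $v=0$, and the injectivity of $W_{N-3}^{\partial,(1,1)}$ (Lemma \hyperref[lem:Wep11]{\ref*{lem:We_boundary}.(\ref*{lem:Wep11})}) gives $\gamma=0$.

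Surjectivity will follow by a fibrewise dimension count: since $\Phi$ is $C^\infty$-linear and pointwise defined, it is induced by a morphism of vector bundles of the same rank, and a fibrewise injective map between bundles of equal rank is an isomorphism. The codomain has fibre dimension $\binom{N-1}{N-2}\binom{N}{N-2}=\tfrac{N(N-1)^2}{2}$. The first summand of the domain has dimension $N(N-1)$; for the second, surjectivity of $W_{N-3}^{\partial,(1,2)}$ (Lemma \hyperref[lem:kerWe12]{\ref*{lem:We_boundary}.(\ref*{lem:kerWe12})}) yields
\[
\dim \mathrm{Ker}\, W_{N-3}^{\partial,(1,2)} = (N-1)\binom{N}{2} - (N-1)N = \frac{N(N-1)(N-3)}{2},
\]
and the two contributions add up to $\tfrac{N(N-1)^2}{2}$, matching the codomain.

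The only delicate point is the transition from the fibrewise argument to the statement on sections: one must observe that the injectivity established above actually holds pointwise (which it does, since Corollary \ref{cor:enve_injective} and Lemma \ref{lem:We_boundary} are bundle-level statements derived from linear algebra on each fibre involving only $e$ and $e_n$ at that point). Given this, the fibrewise bijection extends uniquely to a bijection of sections, concluding the proof.
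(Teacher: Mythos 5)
Your proof is correct and follows essentially the same route as the paper's: both rest on the injectivity of $W_{N-3}^{\partial,(1,1)}$, Corollary \ref{cor:enve_injective} (injectivity of $\chi$ and the trivial intersection \eqref{envenotinIme_d4}), and the dimension count showing that $\dim\mathrm{Ker}\,W_{N-3}^{\partial,(1,2)}+\dim\Ima W_{N-3}^{\partial,(1,1)}=\dim\Omega^{N-2,N-2}_\partial$, packaged in the paper as the direct-sum decomposition $\Omega^{N-2,N-2}_\partial=\Ima\chi\oplus\Ima W_{N-3}^{\partial,(1,1)}$ and in your write-up as bijectivity of $\Phi$. Your explicit remark on passing from the fibrewise statement to sections is a welcome precision that the paper leaves implicit.
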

\begin{proof}
From Lemma \hyperref[lem:Wep11]{\ref*{lem:We_boundary}.(\ref*{lem:Wep11})} and Lemma \hyperref[lem:kernel12-21]{\ref*{lem:We_boundary}.(\ref*{lem:kernel12-21})} we know that $W_{N-3}^{\partial, (1,1)}$ is injective and that the sum of the dimensions of $\mathrm{Ker} W_{N-3}^{\partial, (1,2)}$ and of $\Ima W_{N-3}^{\partial, (1,1)}$ agrees with dimension of $\Omega^{N-2,N-2}_\partial$. Using Corollary \ref{cor:enve_injective}, we deduce that $\Omega^{N-2,N-2}_\partial$ is the direct sum of $\Ima \chi $ and $ \Ima W_{N-3}^{\partial, (1,1)}$.
Hence every $\beta \in \Omega^{N-2,N-2}_\partial$ can be written as $\beta =e^{N-3}  \gamma + \theta $ with $\gamma \in \Omega^{1,1}_\partial$ and $\theta = e_n e^{N-4}[v, e]$. Uniqueness of $v$ and $\gamma$ follows from the injectivity of $\chi$ and $W_{N-3}^{\partial, (1,1)}$.
\end{proof}

\section{Constraint analysis of Palatini--Cartan theory in four dimensions} \label{s:constraints}
In this section we analyse the structural and invariant constraints of gravity in the Palatini--Cartan formulation for $N=4$, as discussed in Section \ref{sec:RPS-PC_intro}. In Section \ref{s:generalisation} we will extend this analysis to $N > 4$. We will assume henceforth that 
$g^{\partial}$, as defined in \eqref{e:Boundary-metric},
is nondegenerate. The degenerate case will be analysed in \cite{CCT2020}.

\subsection{An optimal structural constraint} \label{sec:space_of_classical_boundary_fields}
The starting point of our analysis is the geometric phase space ${F}^{\partial}_{PC}$, described in full detail in \cite{CS2019} and recalled in Section \ref{sec:RPS-PC_intro}.
The classical fields of the theory are then $ e \in \Omega_{nd}^1(\Sigma, \mathcal{V})$ --- i.e $ \Omega_{\partial}^{1,1}$ plus the nondegeneracy condition that the induced morphism $T\Sigma\to\calV$ should be injective--- and the equivalence class of a connection $\omega \in \mathcal{A}(\Sigma)$ (where $\mathcal{A}(\Sigma)$ is the restriction of $\mathcal{A}(M)$ to the boundary) under the $e$-dependent relation $\omega  \sim \omega + v $ for $v$ such that $e \wedge v =0$. We denote this equivalence class and the quotient space it belongs to by $[\omega] \in \mathcal{A}^{red}(\Sigma)$. The symplectic structure is given by
\begin{align}\label{classical-boundary-symplform}
\varpi = \int_{\Sigma} e \delta e \delta [\omega].
\end{align}
In this section we fix a convenient representative for this equivalence class. 

As in Section \ref{s:optimalchoice} we choose a section of $\calV|_\Sigma$ completing the image of $e\colon T\Sigma \rightarrow \mathcal{V}$ to a basis.
Corollary \ref{c:dynamical+structural} shows that the constraint $d_\omega e = 0$ splits into the \emph{invariant constraint} $e d_\omega e = 0$ and the constraint
\begin{equation}\label{omegareprfix}
e_n d_{\omega} e \in \Ima W_1^{\partial,(1,1)},
\end{equation}
which can then be taken as a choice of \emph{structural constraint}.
We prove that Equation \eqref{omegareprfix} does not impose any condition on $[\omega] \in \mathcal{A}^{red}(\Sigma)$ --- but it fixes a unique representative of the class: In particular we show that given $[\omega]$ there exists a unique $\omega \in [\omega]$ satisfying \eqref{omegareprfix}. Later on we will use such representative to define the constraint of the theory.

\begin{theorem}\label{thm:omegadecomposition}
Suppose that  $g^{\partial}$, the metric induced on the boundary, is nondegenerate. Given any $\widetilde{\omega} \in \Omega^{1,2}$, there is a unique decomposition 
\begin{equation} \label{omegadecomp}
\widetilde{\omega}= \omega +v
\end{equation}
with $\omega$ and $v$ satisfying 
\begin{equation}\label{omegareprfix2}
ev=0 \quad \text{ and } \quad e_n d_{\omega} e \in \Ima W_1^{\partial,(1,1)}.
\end{equation}
\end{theorem}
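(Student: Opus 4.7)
The plan is to reduce the claim directly to Lemma~\ref{lem:Omega2,2_d4}. First I would rewrite the two conditions in \eqref{omegareprfix2} as a single equation on the unknowns $v$ and on some auxiliary $\sigma \in \Omega^{1,1}_\partial$. Using that the covariant derivative is affine in the connection, namely $d_\omega e = d_{\widetilde{\omega}} e - [v,e]$ whenever $\omega = \widetilde{\omega} - v$, the condition $e_n d_\omega e \in \Ima W_1^{\partial,(1,1)}$ becomes the requirement that there exist $\sigma \in \Omega^{1,1}_\partial$ with
\begin{equation*}
e_n d_{\widetilde{\omega}} e \;=\; e\,\sigma \;+\; e_n [v,e].
\end{equation*}
Thus the theorem is equivalent to the statement that for any $\widetilde\omega$, setting $\beta := e_n d_{\widetilde{\omega}} e \in \Omega^{2,2}_\partial$, there is a unique pair $(v,\sigma)$ with $v\in\mathrm{Ker}\, W_1^{\partial,(1,2)}$ and $\sigma\in\Omega^{1,1}_\partial$ realising the above splitting.

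Next I would invoke Lemma~\ref{lem:Omega2,2_d4} (specialised to $N=4$, so that $e^{N-3}=e$ and $e_n e^{N-4}=e_n$), with input $\beta = e_n d_{\widetilde\omega} e$. Because the boundary metric $g^\partial$ is assumed nondegenerate, the lemma applies and furnishes a unique $v \in \mathrm{Ker}\, W_1^{\partial,(1,2)}$ and a unique $\gamma \in \Omega^{1,1}_\partial$ such that $\beta = e\,\gamma + e_n [v,e]$; setting $\sigma := \gamma$ and $\omega := \widetilde\omega - v$ yields a decomposition satisfying both conditions in \eqref{omegareprfix2}. Conversely, any decomposition as in the theorem yields, by the previous paragraph, a splitting of $\beta$ of the form covered by Lemma~\ref{lem:Omega2,2_d4}, and uniqueness of $v$ there gives uniqueness of the pair $(\omega,v)$ here. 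Uniqueness of $\sigma$, which is not directly required by the theorem but is implicit in the argument, follows from the injectivity of $W_1^{\partial,(1,1)}$ proved in Lemma~\hyperref[lem:Wep11]{\ref*{lem:We_boundary}.(\ref*{lem:Wep11})}.

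I do not expect any genuine obstacle in this argument, since the hard work has been isolated in Lemma~\ref{lem:Omega2,2_d4}. The only point to be careful about is making sure the affine identity for the covariant derivative is used with the correct sign convention, and that the element $\beta = e_n d_{\widetilde\omega} e$ does lie in the target space $\Omega^{2,2}_\partial$ of the direct-sum decomposition provided by that lemma; both are immediate once one tracks bidegrees. Everything else is a direct rewriting of \eqref{omegareprfix2} in the form to which Lemma~\ref{lem:Omega2,2_d4} applies.
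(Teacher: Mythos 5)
Your proposal is correct and follows essentially the same route as the paper's own proof: both reduce the statement to Lemma~\ref{lem:Omega2,2_d4} applied to $\beta = e_n d_{\widetilde\omega} e$ via the affine identity $d_{\omega}e = d_{\widetilde\omega}e - [v,e]$, with uniqueness coming from the direct-sum decomposition underlying that lemma. Your uniqueness argument is a slightly more streamlined packaging (deducing it directly from uniqueness of $v$ in the lemma rather than subtracting two decompositions), but it rests on exactly the same facts.
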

\begin{proof}
Let $\widetilde{\omega} \in \Omega_\partial^{1,2}$. From Lemma \ref{lem:Omega2,2_d4} we deduce that there exist unique $\sigma \in \Omega_\partial^{1,1}$ and $v \in \text{Ker} W_1^{\partial,(1,2)}$ such that 
\begin{align*}
e_n d_{\widetilde{\omega}} e = e \sigma + e_n [v,e].
\end{align*}
We define $\omega := \widetilde{\omega} - v $. Then $\omega$ and $v$ satisfy \eqref{omegadecomp} and \eqref{omegareprfix2}.

For uniqueness, suppose that $\widetilde{\omega}= \omega_1 + v_1 = \omega_2 +v_2$ with $ev_i =0$ and $e_n d_{\omega_i} e \in \Ima W_1^{\partial,(1,1)}$ for $i=1,2$. Hence 
$$e_n  d_{\omega_1} e- e_n  d_{\omega_2} e = e_n  [v_2-v_1, e] \in  \Ima W_1^{\partial,(1,1)}.$$ Hence from Lemma \ref{lem:Omega2,1_d4} and Lemma \ref{lem:Omega2,2_d4} (for which we need nondegeneracy of $g^\partial$), we deduce $v_2-v_1 =0$, since $v_2-v_1 \in Ker W_1^{\partial,(1,2)}$.
\end{proof}

\begin{remark}
A decomposition similar to \eqref{omegadecomp} was used in \cite[Remark 4.7]{CS2019}, for a generic complement of $\mathrm{Ker}W_1^{\partial,(1,2)}$. Theorem \ref{thm:omegadecomposition} shows an explicit choice of a complement which will turn out to be particularly convenient in what follows.
\end{remark}

\begin{corollary}
The field $\omega$ in the decomposition \eqref{omegadecomp} depends only on the equivalence class $[\omega] \in \mathcal{A}^{red}(\Sigma)$.

\end{corollary}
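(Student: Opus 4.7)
The plan is to deduce this directly from the uniqueness statement in Theorem \ref{thm:omegadecomposition}. Recall that the equivalence class $[\omega] \in \mathcal{A}^{red}(\Sigma)$ is defined by the relation $\widetilde{\omega}' \sim \widetilde{\omega}$ iff $\widetilde{\omega}' = \widetilde{\omega} + w$ for some $w \in \Omega^{1,2}_\partial$ with $ew = 0$, i.e.\ for some $w \in \mathrm{Ker}\, W_1^{\partial,(1,2)}$. I must show that if I pick two representatives of the same class and decompose each via Theorem \ref{thm:omegadecomposition}, I get the same ``$\omega$-part.''

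First I would fix two representatives $\widetilde{\omega}$ and $\widetilde{\omega}' = \widetilde{\omega} + w$ with $ew=0$, and apply Theorem \ref{thm:omegadecomposition} to $\widetilde{\omega}$ to obtain the unique decomposition $\widetilde{\omega} = \omega + v$ satisfying $ev=0$ and $e_n d_\omega e \in \mathrm{Im}\, W_1^{\partial,(1,1)}$. Adding $w$ to both sides gives the candidate decomposition
\[
\widetilde{\omega}' \;=\; \omega + (v+w).
\]
Next I would check that this candidate satisfies the two conditions \eqref{omegareprfix2} with respect to $\widetilde{\omega}'$: the first condition is immediate since $e(v+w) = ev + ew = 0$, and the second is automatic because it involves only $\omega$, not $v$, and $\omega$ is unchanged. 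Hence this is a valid decomposition of $\widetilde{\omega}'$ in the sense of Theorem \ref{thm:omegadecomposition}.

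Finally, applying the uniqueness part of Theorem \ref{thm:omegadecomposition} to $\widetilde{\omega}'$, the ``$\omega$-part'' of its decomposition must coincide with the $\omega$ above (and its $v$-part with $v+w$). Thus $\omega$ depends only on the class $[\widetilde{\omega}] = [\omega]$, as claimed.

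I do not expect any real obstacle here; the content is essentially that the group of $v$-translations $\{w : ew = 0\} = \mathrm{Ker}\, W_1^{\partial,(1,2)}$ acts on the set of pairs $(\omega, v)$ satisfying \eqref{omegareprfix2} by $(\omega, v) \mapsto (\omega, v+w)$, leaving the first slot invariant, and uniqueness forces every representative of $[\omega]$ to land in the same orbit. The only subtle point to highlight is that nondegeneracy of $g^\partial$ is implicitly being used, since uniqueness in Theorem \ref{thm:omegadecomposition} relies on it.
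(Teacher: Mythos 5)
Your argument is correct, and it is a slightly cleaner route than the one the paper takes. The paper decomposes both representatives $\widetilde{\omega}_1=\omega_1+v_1$ and $\widetilde{\omega}_2=\omega_2+v_2$ via Theorem \ref{thm:omegadecomposition}, subtracts, observes that $\omega_1-\omega_2\in\mathrm{Ker}\,W_1^{\partial,(1,2)}$ with $e_n[\omega_1-\omega_2,e]\in\Ima W_1^{\partial,(1,1)}$, and then re-invokes Lemma \ref{lem:Omega2,2_d4} (essentially the directness of the sum $\Ima\chi\oplus\Ima W_1^{\partial,(1,1)}$) to force $\omega_1=\omega_2$. You instead exhibit $\widetilde{\omega}+w=\omega+(v+w)$ as an explicit decomposition of the second representative satisfying both conditions of \eqref{omegareprfix2} --- which works because $e(v+w)=0$ and the structural condition involves only $\omega$ --- and then quote the uniqueness clause of Theorem \ref{thm:omegadecomposition} applied to $\widetilde{\omega}+w$ as a black box. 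The two proofs rest on the same mechanism (the uniqueness, which in turn rests on nondegeneracy of $g^\partial$, as you correctly flag), but yours avoids descending back into the lemma and makes transparent the structural point you state at the end: the translation group $\mathrm{Ker}\,W_1^{\partial,(1,2)}$ acts only on the $v$-slot of admissible decompositions. Either version is acceptable.
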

\begin{proof}
Let $\widetilde{\omega}_1, \widetilde{\omega}_2 \in [\omega]$. Hence $\widetilde{\omega}_1- \widetilde{\omega}_2= \tilde{v} \in \text{Ker} W_1^{\partial,(1,2)}$. Applying Theorem \ref{thm:omegadecomposition} we get $\omega_1$, $v_1$, $\omega_2$, $v_2$ such that $v_1, v_2 \in \text{Ker} W_1^{\partial,(1,2)}$ and 
\begin{align*}
\widetilde{\omega}_1= \omega_1 + v_1 \qquad & e_n d_{\omega_1} e \in \Ima W_1^{\partial,(1,1)} \\
\widetilde{\omega}_2= \omega_2 + v_2 \qquad & e_n d_{\omega_2} e \in \Ima W_1^{\partial,(1,1)}.
\end{align*}
Subtracting these equations we get $\omega_2- \omega_1 = v_1-v_2 - \tilde{v} \in \text{Ker} W_1^{\partial,(1,2)}$ together with
$e_n [\omega_1- \omega_2,e ] \in \Ima W_1^{\partial,(1,1)}$.  Hence, from Lemma \ref{lem:Omega2,2_d4}, we deduce $\omega_1 = \omega_2$.
\end{proof}

\subsection{Poisson brackets of constraints} \label{ssec:constraints}
The restriction of the Euler--Lagrange equations to the boundary does not produce a well defined set of constraints in the geometric phase space ${F}_{PC}$, as they are not given by basic functions with respect to the pre-symplectic reduction $\pi_{PC}\colon \widetilde{F}_{PC} \longrightarrow {F}_{PC}$ (see Remarks \ref{rem:presympred} and \ref{rem:naiveconstraints}).

However, fixing a representative of the equivalence class of $\omega$ by imposing the structural constraint \eqref{omegareprfix} in $\widetilde{F}_{PC}$ (thus constructing a section of the map $\pi_{PC}$) allows us to consider the restrictions of the Euler--Lagrange equations to the boundary and to construct a set of constraints on the geometric phase space. Moreover, we will see that these constraints turn out to be of first class
(i.e. they define a coisotropic submanifold with respect to the symplectic form \eqref{classical-boundary-symplform}). 

Starting from the functions defined in \eqref{e:JK}, we consider the following functions by splitting $R_{\mu}$ into two separate constraints $P_{\xi}$ and $H_{\lambda}$ by expanding $\mu = \iota_{\xi} e + \lambda e_n$. Notice that with this choice of $\mu$ the cosmological term will appear only in the constraint with $\lambda$, since $\iota_{\xi}e^4=0$ on the boundary.  We furthermore add to $P_{\xi}$ a term proportional to the invariant constraint $ed_\omega e$ with the help of a reference connection $\omega_0$ in order to simplify computations (see Remark \ref{r:Kxi}):\footnote{These constraints are a slightly modified but equivalent version of those proposed in \cite{CS2019}, defined on the geometric phase space using the $\omega \in [\omega]$ defined in Theorem \ref{thm:omegadecomposition}, hence satisfying \eqref{omegareprfix}.}
\begin{subequations}\label{constraints}
\begin{equation}
L_c = \int_{\Sigma} c e d_{\omega} e 
\end{equation}
\begin{equation}
P_{\xi}= \int_{\Sigma}  \iota_{\xi} e e F_{\omega} + \iota_{\xi} (\omega-\omega_0) e d_{\omega} e
\end{equation}
\begin{equation}
H_{\lambda} = \int_{\Sigma} \lambda e_n \left(eF_\omega + \frac{1}{3!}\Lambda e^3\right)
\end{equation}
\end{subequations}
where $c \in\Omega^{0,2}_\partial[1]$, $\xi \in\mathfrak{X}[1](\Sigma)$ and $\lambda\in \Omega^{0,0}_\partial[1]$ are (odd) Lagrange multipliers and the notation $[1]$ denotes that the fields are shifted by 1 and are treated as odd variables. 
\begin{remark}
We use odd Lagrange multipliers $c$, $\xi$ and $\lambda$ and we shift their degree by one, to be consistent with the subsequent construction of the BFV action, where we embed our space of fields into a graded manifold, but also in order to simplify the proof of Theorem \ref{thm:first-class-constraints} slightly. However, 
one could just as well formulate constraints \eqref{constraints} using even Lagrange multipliers, and the results of the following Theorem \ref{thm:first-class-constraints} would not change, upon antisymmetrisation of brackets: $\{L_c, L_{c'}\} = \mathbb{L}_c(L_{c'}) - \mathbb{L}_{c'}(L_{c})$, where $\mathbb{L}$ denotes the Hamiltonian vector field of $L_c$ (see \cite{CS2019} for comparison), and similarly for the other constraints.
\end{remark}

\begin{remark} \label{r:Kxi}
The second term in $P_\xi$ does not change the constrained set but largely simplifies the computation of the Hamiltonian vector fields and, consequently, of the Poisson brackets. Indeed, one could just consider $P_{\xi}= \int_{\Sigma}  \iota_{\xi} e e F_{\omega}$ and perform a similar analysis to the one presented in \cite{CS2019}, where the variation $\delta \omega$ is subject to some constraint. Indeed, in section \ref{s:altvar} we will show how to build a covariant expression for the BFV action \eqref{action_C3}, which does not require the choice of a reference connection $\omega_0$. 
\end{remark}

We denote with $\mathrm{L}_{\xi}^{\omega}$ the covariant Lie derivative along the odd vector field $\xi$ with respect to a connection $\omega$:
\begin{align*}
\mathrm{L}_{\xi}^{\omega} A = \iota_{\xi} d_{\omega} A -  d_{\omega} \iota_{\xi} A \qquad A \in \Omega^{i,j}_{\partial}.
\end{align*}

\begin{theorem} \label{thm:first-class-constraints}
 Let $g^\partial$ be nondegenerate on $\Sigma$. Then, the functions $L_c$, $P_{\xi}$, $H_{\lambda}$ are well defined on ${F}^{\partial}_{PC}$ and define a coisotropic submanifold  with respect to the symplectic structure $\varpi_{PC}$. In particular they satisfy the following relations
\begin{subequations}\label{brackets-of-constraints}
\begin{eqnarray}
\{L_c, L_c\} = - \frac{1}{2} L_{[c,c]} & \{P_{\xi}, P_{\xi}\}  =  \frac{1}{2}P_{[\xi, \xi]}- \frac{1}{2}L_{\iota_{\xi}\iota_{\xi}F_{\omega_0}} \\
\{L_c, P_{\xi}\}  =  L_{\mathrm{L}_{\xi}^{\omega_0}c} & \{L_c,  H_{\lambda}\}  = - P_{X^{(a)}} + L_{X^{(a)}(\omega - \omega_0)_a} - H_{X^{(n)}} \\
\{H_{\lambda},H_{\lambda}\}  =0 & \{P_{\xi},H_{\lambda}\}  =  P_{Y^{(a)}} -L_{ Y^{(a)} (\omega - \omega_0)_a} +H_{ Y^{(n)}} 
\end{eqnarray}
\end{subequations}
where $X= [c, \lambda e_n ]$, $Y = \mathrm{L}_{\xi}^{\omega_0} (\lambda e_n)$ and $Z^{(a)}$, $Z^{(n)}$ are the components of $Z\in\{X,Y\}$ with respect to the frame $(e_a, e_n)$.
\end{theorem}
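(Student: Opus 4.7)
The plan is to compute each Hamiltonian vector field from $\iota_{\mathbb{F}}\varpi_{PC}=\delta F$ with $\varpi_{PC}=\int_\Sigma e\,\delta e\,\delta\omega$, and then evaluate the six pairwise brackets via $\{F,G\}=\mathbb{F}(G)$. Well-definedness of $L_c$, $P_\xi$, $H_\lambda$ on $F^{\partial}_{PC}$ has already been implicitly arranged: the invariant piece $e\,\dd_\omega e$ is manifestly $v$-invariant, Theorem~\ref{thm:omegadecomposition} provides the section on which $H_\lambda$ and $P_\xi$ are evaluated, and the $\omega_0$-correction in $P_\xi$ (Remark~\ref{r:Kxi}) is included precisely to make its descent to $F^{\partial}_{PC}$ transparent.

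Direct variation and integration by parts yield $\mathbb{L}_c e=[c,e]$ together with the $\omega$-variation determined by $e\,\mathbb{L}_c\omega=e\,\dd_\omega c$; likewise, $\mathbb{P}_\xi$ acts as the covariant Lie derivative $\mathrm{L}^{\omega_0}_\xi$ on $e$, the $\iota_\xi(\omega-\omega_0)\,e\,\dd_\omega e$ correction being exactly what is needed to cancel a non-covariant $\iota_\xi\omega$-piece in the variation (Remark~\ref{r:Kxi}). For $H_\lambda$, the $\delta\omega$-coefficient of the variation is $\lambda e_n e \in \Omega^{N-2,N-2}_\partial$, which Lemma~\ref{lem:Omega2,2_d4} decomposes uniquely as $e^{N-3}\gamma+e_n e^{N-4}[v,e]$; from this splitting, $\mathbb{H}_\lambda e$ and $\mathbb{H}_\lambda\omega$ can be read off unambiguously, and the uniqueness in Theorem~\ref{thm:omegadecomposition} (using nondegeneracy of $g^\partial$) ensures that the vector field is tangent to the section cut out by the structural constraint.

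The diagonal brackets are then routine. $\{L_c,L_c\}=-\tfrac12 L_{[c,c]}$ follows from $\dd_\omega[c,e]=[\dd_\omega c,e]-[c,\dd_\omega e]$ and an integration by parts; $\{P_\xi,P_\xi\}$ reduces to the Cartan identity $[\mathrm{L}^{\omega_0}_\xi,\mathrm{L}^{\omega_0}_\xi]=\mathrm{L}^{\omega_0}_{\tfrac12[\xi,\xi]}-\iota_\xi\iota_\xi F_{\omega_0}$ on $\calV$-valued forms, producing the anomalous curvature term $-\tfrac12 L_{\iota_\xi\iota_\xi F_{\omega_0}}$; $\{L_c,P_\xi\}$ is the direct action of $\mathbb{P}_\xi$ on $c\,e\,\dd_\omega e$, giving $L_{\mathrm{L}^{\omega_0}_\xi c}$; and $\{H_\lambda,H_\lambda\}=0$ follows because, for an odd scalar, $\lambda(x)^2=0$ pointwise, so the two contributions produced by $\mathbb{H}_\lambda$ acting on $e$-factors versus on $\omega$ cancel by graded antisymmetry in $\lambda$.

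The main obstacle will be the mixed brackets $\{L_c,H_\lambda\}$ and $\{P_\xi,H_\lambda\}$, whose right-hand sides couple all three constraints. Applying $\mathbb{L}_c$ to $H_\lambda$ produces a $\calV|_\Sigma$-valued coefficient $X=[c,\lambda e_n]$, which I would decompose in the frame $(e_a,e_n)$ as $X=X^{(a)}e_a+X^{(n)}e_n$ using nondegeneracy of $g^\partial$. The tangential piece $X^{(a)}e_a$ recombines with $eF_\omega+\tfrac1{3!}\Lambda e^3$ into a $P$-constraint, up to a compensating $L_{X^{(a)}(\omega-\omega_0)_a}$ which is exactly furnished by the $\omega_0$-correction in $P_\xi$, while the transverse component yields $-H_{X^{(n)}}$; the analogous calculation for $\{P_\xi,H_\lambda\}$ gives the stated answer in terms of $Y=\mathrm{L}^{\omega_0}_\xi(\lambda e_n)$. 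The delicate point is verifying that no term outside $\mathrm{span}\{L,P,H\}$ survives the decomposition; this is handled by repeated application of Lemma~\ref{lem:Omega2,2_d4} and Corollary~\ref{cor:enve_injective}, which absorb any residual $\Omega^{N-2,N-2}_\partial$-contribution into $\Ima W^{\partial,(1,1)}_1$ and hence back into the $L$-constraint. Once all six brackets close on $L$, $P$ and $H$, coisotropicity of their common zero locus is immediate.
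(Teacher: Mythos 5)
Your proposal is correct and follows essentially the same route as the paper: compute the Hamiltonian vector fields from $\iota_{\mathbb F}\varpi_{PC}=\delta F$ (noting that every variation can be written in terms of $e\,\delta\omega$, so the structural constraint causes no obstruction), then evaluate the six brackets, using the Cartan-type identity for $\{P_\xi,P_\xi\}$, oddness of $\lambda$ for $\{H_\lambda,H_\lambda\}$, and the frame decomposition of $[c,\lambda e_n]$ and $\mathrm L^{\omega_0}_\xi(\lambda e_n)$ for the mixed brackets. The one imprecision is your treatment of $\mathbb H$: the paper obtains $\mathbb H_e=d_\omega(\lambda e_n)+\lambda\sigma$ from the $\delta\omega$-coefficient after integration by parts and substitution of the structural constraint $e_n d_\omega e=e\sigma$, and never needs $\mathbb H_\omega$ itself (only $e\,\mathbb H_\omega$, read off directly from the $\delta e$-coefficient), rather than extracting both components by applying Lemma~\ref{lem:Omega2,2_d4} to $\lambda e_n e$; similarly, closure of the mixed brackets on $L,P,H$ is pure frame algebra and does not require reinvoking Corollary~\ref{cor:enve_injective}.
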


\begin{remark}
This result improves on the results of \cite[Theorem 4.22]{CS2019}, since the constraints are manifestly independent of representatives of an equivalence class $[\omega]$, and because it allows us to present more explicit expressions for the Poisson brackets of constraints. Theorem \ref{thm:first-class-constraints} holds verbatim for higher dimensional generalisations of the theory as well (see Section \ref{s:generalisation}).
\end{remark}

\begin{proof}
The constraints are well defined on ${F}^{\partial}_{PC}$ because of the definition and properties of $\omega$ coming from Theorem \ref{thm:omegadecomposition}. 

In order to compute their Poisson brackets, we should first find their Hamiltonian vector fields.
We begin by varying the constraints. The variation of $\omega$ is constrained by \eqref{omegareprfix}. However, since \eqref{omegareprfix} imposes a constraint only on the part of $\omega$ in the kernel of $W_1^{\partial,(1,2)}$, it does not impose any condition on $e \delta \omega$. In the following computation we can always express the variation of the constraints in terms of $e\delta \omega$; hence the Hamiltonian vector fields are well defined and no other restriction has to be taken into account. The variations of $L_c$, $P_{\xi}$, $H_{\lambda}$ are respectively:
\begin{align*}
 \delta L_c= \int_{\Sigma}  -\frac{1}{2} c [\delta \omega, ee]  + \frac{1}{2} c d_{\omega}\delta (ee) = \int_{\Sigma} [c, e]  e \delta \omega  +  d_{\omega} c  e\delta e;
\end{align*}
\begin{align*}
 \delta P_{\xi} &= \int_{\Sigma}  \iota_{\xi} (e \delta e) F_{\omega}  -\frac{1}{2}\iota_{\xi} (e  e) d_{\omega}\delta \omega + \iota_{\xi} \delta \omega e d_{\omega} e  -\frac{1}{2} \iota_{\xi} (\omega-\omega_0) [\delta \omega, ee]\\
 & \qquad  + \frac{1}{2} \iota_{\xi} (\omega-\omega_0) d_{\omega}\delta (ee) \\
 & = \int_{\Sigma} - e \delta e \iota_{\xi}F_{\omega}  + \frac{1}{2}d_{\omega} \iota_{\xi} (e  e) \delta \omega - \frac{1}{2} \delta \omega \iota_{\xi} d_{\omega}(e e)  + \frac{1}{2}\delta \omega [ \iota_{\xi} (\omega-\omega_0), ee] \\
 & \qquad  + \frac{1}{2} d_{\omega} \iota_{\xi} (\omega-\omega_0) \delta (ee)\\
  & = \int_{\Sigma}  - e \delta e \iota_{\xi}F_{\omega}   -  (\mathrm{L}_{\xi}^{\omega} e ) e \delta \omega + e \delta \omega [ \iota_{\xi} (\omega-\omega_0), e]  +  d_{\omega} \iota_{\xi} (\omega-\omega_0) e \delta e\\ 
 & = \int_{\Sigma}  - e \delta e (\mathrm{L}_{\xi}^{\omega_0} (\omega-\omega_0) + \iota_ {\xi}F_{\omega_0})  -  (\mathrm{L}_{\xi}^{\omega_0} e ) e \delta \omega ;
\end{align*}
\begin{align*}
\delta H_{\lambda} &= \int_{\Sigma} \lambda e_n  \delta e  F_{\omega}+\frac{1}{2}\Lambda  \lambda e_n e^2 \delta e -\lambda e_n e  d_{\omega} \delta \omega\\ 
&= \int_{\Sigma} \lambda e_n  \delta e  F_{\omega} +\frac{1}{2}\Lambda  \lambda e_n e^2 \delta e+  d_{\omega}(\lambda e_n) e  \delta \omega + \lambda e_n d_{\omega} e  \delta \omega\\
&= \int_{\Sigma} \lambda e_n  \delta e  F_{\omega} +\frac{1}{2}\Lambda  \lambda e_n e^2 \delta e+  d_{\omega}(\lambda e_n) e  \delta \omega + \lambda \sigma  e \delta \omega.
\end{align*}
In the last computation we used \eqref{omegareprfix} with $\sigma := {W_1^{\partial,(1,1)}}^{-1} (e_n d_{\omega} e )$. 
Hence, the components of the Hamiltonian vector fields of $L_c$ and $P_\xi$ are
\begin{eqnarray}
&\mathbb{L}_e = [c,e] &  \mathbb{L}_\omega = d_{\omega} c \label{e:ham_vf_J}\\
&\mathbb{P}_e = - \mathrm{L}_{\xi}^{\omega_0} e & \mathbb{P}_\omega = - \mathrm{L}_{\xi}^{\omega_0} (\omega-\omega_0) - \iota_ {\xi}F_{\omega_0}
\label{e:ham_vf_K}
\end{eqnarray}
where, e.g., $\mathbb{L}_e \equiv \mathbb{L}(e)$, with $\iota_{\mathbb{L}}\varpi_{PC}= \delta L_c$. The components of the Hamiltonian vector field of $H_\lambda$ are described by
\begin{equation}\label{e:ham_vf_L}
\mathbb{H}_e= d_{\omega}(\lambda e_n) + \lambda \sigma  \qquad e \mathbb{H}_\omega =  \lambda e_n F_{\omega}+\frac{1}{2}\Lambda  \lambda e_n e^2.
\end{equation}
The second equation, together with the requirement that $\mathbb{H}$ preserves the structural constraint \eqref{omegareprfix}, uniquely determines $\mathbb{H}_\omega$. However, we do not need an explicit expression for it, since in the computations we will only need $e\mathbb{H}_\omega$.
We can now compute the brackets between the constraints:
\begin{align*}
\{L_c, L_c\} & = \int_{\Sigma}  [c,e] e  d_{\omega} c = \int_{\Sigma}  \frac{1}{2} [c,ee]  d_{\omega} c \\
 &= \int_{\Sigma} \frac{1}{4} d_{\omega}[c , c] ee = \int_{\Sigma} -\frac{1}{2} [c , c]  e d_{\omega}e = - \frac{1}{2} L_{[c,c]};
\end{align*}

\begin{align*}
\{L_c, P_{\xi}&\} = \int_{\Sigma} - [c,e] e(\mathrm{L}_{\xi}^{\omega_0} (\omega-\omega_0) + \iota_ {\xi}F_{\omega_0}) - d_{\omega} c e  \mathrm{L}_{\xi}^{\omega_0} e \\
& = \int_{\Sigma} \frac{1}{2} \left(\mathrm{L}_{\xi}^{\omega_0}c [\omega- \omega_0, ee]+ c [\omega- \omega_0,\mathrm{L}_{\xi}^{\omega_0}( ee)]- c [ee, \iota_ {\xi}F_{\omega_0}]-  d_{\omega} \mathrm{L}_{\xi}^{\omega_0} (e e) c\right)\\
& = \int_{\Sigma} \frac{1}{2} \mathrm{L}_{\xi}^{\omega_0}c [\omega, ee]- \frac{1}{2} d c \iota_{\xi} d(ee) + \frac{1}{2} [ \iota_{\xi}\omega_0, d(ee)] c\\
& =  \int_{\Sigma} \frac{1}{2}  \mathrm{L}_{\xi}^{\omega_0}c d_\omega(ee) = \int_{\Sigma} \mathrm{L}_{\xi}^{\omega_0}c e d_\omega e = L_{\mathrm{L}_{\xi}^{\omega_0}c};
\end{align*}

\begin{align*}
\{P_{\xi}, P_{\xi}&\}  = \int_{\Sigma} \frac{1}{2}  \mathrm{L}_{\xi}^{\omega_0} (e e )\mathrm{L}_{\xi}^{\omega_0} (\omega - \omega_0) +  \frac{1}{2}  \mathrm{L}_{\xi}^{\omega_0} (e e ) \iota_{\xi}F_{\omega_0}\\
& \overset{\diamondsuit\clubsuit}{=} \int_{\Sigma} \frac{1}{4} \mathrm{L}_{[\xi,\xi]}^{\omega_0}(e e ) (\omega - \omega_0) + \frac{1}{4}[\iota_{\xi}\iota_{\xi}F_{\omega_0},e e ] (\omega - \omega_0) +  \frac{1}{2}  \mathrm{L}_{\xi}^{\omega_0} (e e ) \iota_{\xi}F_{\omega_0}\\
& = \int_{\Sigma} \frac{1}{4} \iota_{[\xi,\xi]}d_{\omega_0}(e e ) (\omega - \omega_0)+ \frac{1}{4}d_{\omega_0}\iota_{[\xi,\xi]}(e e ) (\omega - \omega_0) \\
& \qquad + \frac{1}{4}[\iota_{\xi}\iota_{\xi}F_{\omega_0},e e ] (\omega - \omega_0) +  \frac{1}{2}  \mathrm{L}_{\xi}^{\omega_0} (e e ) \iota_{\xi}F_{\omega_0}\\
& \overset{\diamondsuit}{=} \int_{\Sigma} \frac{1}{4} \iota_{[\xi,\xi]}d_{\omega}(e e ) (\omega - \omega_0)-\frac{1}{4} \iota_{[\xi,\xi]}[\omega - \omega_0,e e ] (\omega - \omega_0) \\
& \qquad+ \frac{1}{4}\iota_{[\xi,\xi]}(e e ) d_{\omega_0}(\omega - \omega_0) + \frac{1}{4}[\iota_{\xi}\iota_{\xi}F_{\omega_0},e e ] (\omega - \omega_0) +  \frac{1}{2}  \mathrm{L}_{\xi}^{\omega_0} (e e ) \iota_{\xi}F_{\omega_0}\\
\intertext{}
& \overset{\heartsuit}{=} \int_{\Sigma} \frac{1}{4} d_{\omega}(e e ) \iota_{[\xi,\xi]}(\omega - \omega_0)-\frac{1}{4} [\omega - \omega_0,e e ] \iota_{[\xi,\xi]}(\omega - \omega_0)- \frac{1}{4}\iota_{[\xi,\xi]}(e e ) F_{\omega_0} \\
& \qquad+\frac{1}{4}\iota_{[\xi,\xi]}(e e )F_{\omega} -\frac{1}{8}\iota_{[\xi,\xi]}(e e )[\omega_0-\omega,\omega_0-\omega] \\
& \qquad + \frac{1}{4}[\iota_{\xi}\iota_{\xi}F_{\omega_0},e e ] (\omega - \omega_0) +  \frac{1}{2}  \mathrm{L}_{\xi}^{\omega_0} (e e ) \iota_{\xi}F_{\omega_0}\\
& \overset{\spadesuit}{=} \int_{\Sigma} \frac{1}{4} d_{\omega}(e e ) \iota_{[\xi,\xi]}(\omega - \omega_0)+\frac{1}{4}\iota_{[\xi,\xi]}(e e )F_{\omega}+ \frac{1}{4}d_{\omega_0}(e e )\iota_{\xi}\iota_{\xi} F_{\omega_0} \\
& \qquad + \frac{1}{2}d_{\omega_0}\iota_{\xi}(e e ) \iota_{\xi} F_{\omega_0}- \frac{1}{4}\iota_{\xi}\iota_{\xi}F_{\omega_0} [\omega - \omega_0,e e]  \\
& \qquad +  \frac{1}{2} \left( \iota_{\xi}d_{\omega_0} (e e )-  d_{\omega_0} \iota_{\xi} (e e ) \right) \iota_{\xi}F_{\omega_0}\\
& = \int_{\Sigma} \frac{1}{4} d_{\omega}(e e ) \iota_{[\xi,\xi]}(\omega - \omega_0)+\frac{1}{4}\iota_{[\xi,\xi]}(e e )F_{\omega}-\frac{1}{4}d_{\omega}(e e )\iota_{\xi}\iota_{\xi} F_{\omega_0} \\
& = \frac{1}{2}P_{[\xi, \xi]} - \frac{1}{2}L_{\iota_{\xi}\iota_{\xi}F_{\omega_0}}.
\end{align*}
In these computations we used integration by parts ($\diamondsuit$) and the following identities (for a proof of the second see \cite[Lemma 18]{CS2017}):
\begin{align*}
(\spadesuit) \qquad \frac{1}{2}\iota_{[\xi,\xi]}A &= - \frac{1}{2} \iota_{\xi}\iota_{\xi} d_{\omega_0}A + \iota_{\xi}d_{\omega_0}\iota_{\xi} A- \frac{1}{2} d_{\omega_0} \iota_{\xi}\iota_{\xi} A \qquad \forall A \in \Omega^{i,j}_{\partial} \\
(\clubsuit) \qquad \mathrm{L}_{\xi}^{\omega_0}\mathrm{L}_{\xi}^{\omega_0}B &= \frac{1}{2}\mathrm{L}_{[\xi,\xi]}^{\omega_0}B + \frac{1}{2}[\iota_{\xi}\iota_{\xi}F_{\omega_0},B] \qquad \forall B \in \Omega^{i,j}_{\partial}\\
(\heartsuit) \qquad d_{\omega_0}(\omega_0-\omega)&= F_{\omega_0} -F_{\omega} +\frac{1}{2}[\omega_0-\omega,\omega_0-\omega].
\end{align*}
\begin{align*}
\{L_c,  H_{\lambda}\} & = \int_{\Sigma} [c,e] \lambda e_n F_{\omega} +\frac{1}{2}[c,e]\Lambda  \lambda e_n e^2 + d_{\omega} c e (d_{\omega}(\lambda e_n) + \lambda \sigma) \\ 
& = \int_{\Sigma} [c,e] \lambda e_n F_{\omega} +\frac{1}{3!}[c,e^3]\Lambda  \lambda e_n + d_{\omega} c  d_{\omega}(\lambda e_n e) \\
&=  \int_{\Sigma} - [c, \lambda e_n ] eF_{\omega}-\frac{1}{3!}\Lambda[c, \lambda e_n ] e^3 \\
& = \int_{\Sigma} -[c, \lambda e_n ]^{(a)}e_a e F_{\omega} -[c, \lambda e_n ]^{(n)}e_n e F_{\omega}-\frac{1}{3!}\Lambda[c, \lambda e_n ]^{(n)}e_n e^3 \\
& = - P_{[c, \lambda e_n ]^{(a)}} + L_{[c, \lambda e_n ]^{(a)}(\omega - \omega_0)_a} - H_{[c, \lambda e_n ]^{(n)}};
\end{align*}
Finally we have
\begin{align*}
\{H_{\lambda},H_{\lambda}\} & = \int_{\Sigma} (d_{\omega}(\lambda e_n) + \lambda \sigma) \left(\lambda e_n F_{\omega}+\frac{1}{2}\Lambda  \lambda e_n e^2\right)=\\
& = \int_{\Sigma}  d_{\omega}\lambda e_n \left(\lambda e_n F_{\omega}+\frac{1}{2}\Lambda  \lambda e_n e^2\right)- \lambda d_{\omega}e_n\left(\lambda e_n F_{\omega}+\frac{1}{2}\Lambda  \lambda e_n e^2\right) =0,
\end{align*}
since $\lambda \lambda =0$ and $e_n e_n=0$, and
\begin{align*}
\{P_{\xi},H_{\lambda}\} & = \int_{\Sigma} - \mathrm{L}_{\xi}^{\omega_0} e \lambda e_n F_{\omega} -\frac{1}{2}\Lambda \mathrm{L}_{\xi}^{\omega_0} e \lambda e_n e^2\\
& \qquad- \left( \mathrm{L}_{\xi}^{\omega_0} (\omega-\omega_0)+ \iota_ {\xi}F_{\omega_0}\right)e(d_{\omega}(\lambda e_n) + \lambda \sigma)   \\
& = \int_{\Sigma} - \mathrm{L}_{\xi}^{\omega_0} e \lambda e_n F_{\omega}-\frac{1}{3!}\Lambda \mathrm{L}_{\xi}^{\omega_0} e^3 \lambda e_n\\
& \qquad  -\left( \mathrm{L}_{\xi}^{\omega_0} (\omega-\omega_0)+ \iota_ {\xi}F_{\omega_0}\right) d_{\omega}(e \lambda e_n) \\
\intertext{}
& = \int_{\Sigma}  e  \mathrm{L}_{\xi}^{\omega_0} (\lambda e_n)  F_{\omega}+\frac{1}{3!}\Lambda  e^3 \mathrm{L}_{\xi}^{\omega_0}(\lambda e_n)  + e   \lambda e_n \mathrm{L}_{\xi}^{\omega_0} F_{\omega}\\
& \qquad + \left( d_{\omega} \iota_{\xi} (\omega-\omega_0)- \iota_ {\xi}F_{\omega}\right) d_{\omega}(e \lambda e_n)  \\
& = \int_{\Sigma}  e  \mathrm{L}_{\xi}^{\omega_0} (\lambda e_n)  F_{\omega} +\frac{1}{3!}\Lambda  e^3 \mathrm{L}_{\xi}^{\omega_0}(\lambda e_n)+ e   \lambda e_n \mathrm{L}_{\xi}^{\omega_0} F_{\omega}\\
& \qquad - [F_{\omega}, \iota_{\xi} (\omega-\omega_0)]e \lambda e_n- \mathrm{L}_ {\xi}^{\omega} F_{\omega} e \lambda e_n \\
& = \int_{\Sigma}    \mathrm{L}_{\xi}^{\omega_0} (\lambda e_n) e F_{\omega} +\frac{1}{3!}\Lambda  e^3 \mathrm{L}_{\xi}^{\omega_0}(\lambda e_n) \\
& = P_{ \mathrm{L}_{\xi}^{\omega_0} (\lambda e_n)^{(a)}} +H_{ \mathrm{L}_{\xi}^{\omega_0} (\lambda e_n)^{(n)}}-L_{ \mathrm{L}_{\xi}^{\omega_0} (\lambda e_n)^{(a)} (\omega - \omega_0)_a},
\end{align*}
where we used that $  \mathrm{L}_{\xi}^{\omega_0} F_{\omega} - \mathrm{L}_{\xi}^{\omega}F_{\omega} = [\iota_{\xi}(\omega_0-\omega),  F_{\omega}]$.
This shows that the relations \eqref{brackets-of-constraints} hold and, therefore, that the constraints are first class.

\end{proof}
\begin{remark}
Theorem \ref{thm:first-class-constraints}, in particular, shows that on time-like or space-like boundaries the constraints \eqref{constraints} are first class. Counting the number of components of the Lagrange multipliers $c$, $\xi$ and $\lambda$ we deduce that there are 10 local constraints, while the number of independent components of the conjugate fields $e$ and $\omega$ is 12. Hence we recover the classical result of having 2 local physical degrees of freedom.
\end{remark}

\begin{remark}
From the expressions of the Hamiltonian vector fields of the constraints \eqref{e:ham_vf_J}, \eqref{e:ham_vf_K} and \eqref{e:ham_vf_L} we deduce that the constraint $L_c$ describes the action of the gauge transformations of the theory, while the constraints $P_\xi$ and $H_\lambda$ describe the action of the diffeomorphisms, respectively tangent and transversal to the boundary.
\end{remark}

\section{Palatini--Cartan theory and its BFV data}\label{s:BFV}
This section is not required by Section \ref{s:RPSext} and can therefore be skipped by readers that are not interested in the BFV formalism but only wish to see the higher dimensional generalisation of the construction of the reduced phase space. It will be however required  by Section \ref{s:BFVext}, where we will discuss the higher dimensional version of the BFV formalism.

\subsection{From the reduced phase space to BFV}\label{sec:from_KT_to_BFV}
\newcommand{\uC}{\underline{C}}
\newcommand{\uomega}{\underline{\varpi}}
We start with a short overview of the BFV formalism. The problem we wish to address is the symplectic reduction of a coisotropic submanifold. For simplicity, as this is also the case at hand in this paper, we will consider only the situation where the submanifold is defined in terms of global constraints\footnote{This in general requires an appropriate version of the implicit function theorem, but we will effectively work within an algebraic setting.}. More precisely, 
the starting point are a symplectic manifold $(M,\varpi)$ and a collection $\{\phi_i\}$ of independent, differentially independent constraints; their common zero locus $C=\{x\in M:\phi_i(x)=0\ \forall i\}$ is then a 
submanifold.\footnote{For notational simplicity, we assume here a discrete family of constraints, even though in the case of field theory we will need a continuos family. In that case the sums will be replaced by integrals.} In addition, the constraints are assumed to be of first class: i.e., their Poisson brackets vanish on $C$, or, equivalently, they satisfy
\begin{equation}\label{e:fijk}
\{\phi_i,\phi_j\}=f_{ij}^k \phi_k,
\end{equation}
where the $\{f_{ij}^k\}$s are functions on $M$ (we assume a sum over repeated indices). The restriction of $\varpi$ to $C$ becomes degenerate, but one can easily show that its kernel, called the characteristic distribution, is spanned by the Hamiltonian vector fields $X_i$ of the constraints $\phi_i$. As a consequence of \eqref{e:fijk}, the characteristic distribution is involutive. 
The symplectic reduction $\uC$ of $C$ is the quotient, which we temporarily assume to be smooth, of $C$ by its characteristic distribution, endowed with the unique symplectic form $\uomega$ whose pullback to $C$ is the restriction of $\varpi$. 

Since $\uC$ is very often not smooth in applications, it is better to resort to a different, more flexible description. Algebraically, we have that $C^\infty(\uC)=C^\infty(C)^\text{inv}$, where inv means invariant under the Hamiltonian vector fields $X_i$. In turn, $C^\infty(C)=C^\infty(M)/I$, where 
$I=\text{span}_{C^\infty(M)}\{\phi_i\}$ is the vanishing ideal of $C$. Therefore, we have 
$C^\infty(\uC)=(C^\infty(M)/I)^\text{inv}$. One can show that this algebra inherits a Poisson bracket which, in the smooth case, is also the one induced by $\uomega$. The Poisson algebra $(C^\infty(M)/I)^\text{inv}$ is defined also if $\uC$ is not smooth and it may be tempting to take it as a good replacement for $\uC$. The problem is that often this algebra is very poor (for example, if $\uC$ is not Hausdorff, this algebra is just $\mathbb{R}$).

A better way to proceed is to look for a cohomological description of the symplectic quotient. This is what is achieved by the BFV formalism. Namely, one first adds new odd variables $c^i$ of degree (ghost number) $+1$, called the ghosts, one for each constraint, and their momenta $c^\dagger_i$ (a.k.a.\ the antighosts),which are also odd and have degree $-1$. One extends the original symplectic manifold $(M,\varpi)$ to a graded symplectic manifold
$M\times T^*W$, where $W$ is the odd vector space whose coordinates are the $c^i$s, with symplectic form
\[
\varpi \to \varpi +\int_\Sigma \delta c^\dagger_i\, \delta c^i.
\]
Next one introduces the BFV action, an odd function of degree $1$,
\[
S = \int_\Sigma c^i\phi_i + \frac12 f_{ij}^k c^\dagger_k c^ic^j + R,
\]
where $R$ is a function of higher degree in the ghost momenta $c^\dag_i$ such that $\{S,S\}=0$ (the BFV master equation). It has been proved \cite{BV1981,Batalin:1983,Stasheff1997} that one can always find such a correction $R$. The Hamiltonian vector field $Q$ of $S$
is odd, of degree $1$, and satisfies $[Q,Q]=0$ (such a vector field is called cohomological because it acts as a differential on the algebra of functions). We have
\begin{align*}
Qc^\dagger_i &= \phi_i+\cdots,\\
Qf &= c^iX_i(f) + \cdots,
\end{align*}
where $f$ is a function on $M$ and $\cdots$ denotes terms depending on the ghost momenta. {}From this we see that, up to these higher terms, the image of $Q$ contains the vanishing ideal $I$ and the kernel of $Q$ selects the invariant functions.
One can actually show \cite{BV1981,Batalin:1983,Stasheff1997}  that in degree zero there is not more than this: The cohomology of $Q$ in degree zero is isomorphic to $(C^\infty(M)/I)^\text{inv}$ as a Poisson algebra. The idea of the BFV formalism is then to replace the original, possibly singular symplectic reduction with the ``BFV manifold''
\[
(M\times T^*W, \varpi + \int_{\Sigma} \delta c^\dagger_i\, \delta c^i, S).
\]
The complex $(C^\infty(M\times T^*W),Q)$ is the sought for cohomological resolution of the symplectic reduction of $C$.\footnote{Indeed, $Q$ is a deformation of a combination of the Koszul--Tate complex, which gives a cohomological resolution of $C$ as a submanifold of $M$, and of the Chevalley--Eilenberg complex of the Lie algebroid naturally associated to the conormal bundle of $C$. This deformation is compatible with the symplectic structure.}

Note that there is some freedom in the construction of the BFV data, but one can show \cite{Stasheff1997} that the solution is unique up to symplectomorphisms compatible with the BFV actions. A particularly good solution is when the correction term $R$ vanishes. This is not always possible, but it is so in some cases. The most important one is when one can choose the constraints in such a way that the $\{f_{ij}^k\}$s are constant (this means that the constraints are assembled into an equivariant momentum map and that the reduction is actually an example of Marsden--Weinstein reduction). In this case the BFV construction goes often under the name of BRS \cite{KS1987}. %

It may however happen that the correction $R$ vanishes beyond the BRS case. We will see that this is actually 
what occurs in the PC case at hand. A similar phenomenon was observed in the BFV treatment of Einstein--Hilbert gravity \cite{CS2016b}. %CS16

\begin{remark}
The BFV formalism was introduced by  Batalin and Vilkovisky in \cite{BV1981} and by Batalin and Fradkin \cite{Batalin:1983}. Stasheff \cite{Stasheff1997} gave a mathematical treatment with formal proofs of existence and uniqueness, based on homological perturbation theory, and treated a more general case based on Lie--Rinehart algebras. Sch\"atz \cite{Schaetz:2008} extended the result to general coisotropic submanifolds, not necessarily given in terms of constraints. In \cite{CMR2012} 
the relation between the BV formalism in the bulk of a field theory with its BFV formalism on the boundary was clarified; in \cite{CMR2012b} a procedure to recover the BFV boundary data from the BV bulk data was given; several examples, including Yang--Mills, Chern--Simons and $BF$ theory were treated. The case of Einstein--Hilbert gravity was successfully treated in \cite{CS2016b}. However, in \cite{CS2017} it was shown that the natural implementation of the BV bulk formalism for four-dimensional Palatini--Cartan theory does lead to singular BFV boundary data.
\end{remark}

\begin{remark}
The BFV formalism is not only introduced to provide a cohomological resolution of possibly singular symplectic reductions, but also as a way to perform quantisation. The idea is to quantise the extended graded symplectic space $M\times T^*W$ to some graded Hilbert space (which may be reasonably easy, since often $M$ is itself a cotangent bundle) and to find an operator $\Hat S$ that quantises the BFV action $S$ and that satisfies $[\Hat S,\Hat S]=0$. The master equation $\{S,S\}=0$ ensures that this is possible at the lowest order in $\hbar$. If one can achieve this condition at all orders, then one can define the Hilbert space that quantises the symplectic reduction as the cohomology in degree zero of $\Hat S$. There may be obstructions (anomalies) to achieve this program. In \cite{CMR2} a procedure was introduced that, when successful, allows constructing the operator $\Hat S$ from the perturbative quantisation of the bulk BV data and, at the same time,
a state for the bulk theory in the cohomology of such operator.
\end{remark}

\subsection{BFV Structure of Palatini--Cartan theory} \label{sec:BFV-action}
From the constraints and their brackets it is possible extend the space of fields to a graded symplectic manifold by promoting the Lagrange multipliers to ghosts and adding ghost momenta. 
The following Theorem \ref{thm:BFVaction} shows that the na\"ive guess for a BFV action, containing only the constraints (constant term in the ghost momenta) and the information on their Poisson brackets (linear term in the ghost momenta) already satisfies the BFV master equation. 
\begin{remark}
At a physical level, the Lagrange multipliers assume the meaning of symmetry generators of the system. In particular the field $c \in \Omega^{0,2}_\partial$ represents the internal gauge symmetry (recall that we are using the identification $\mathfrak{so}(3,1) \cong \wedge^2 \mathcal{V}$); the vector field $\xi \in \mathfrak{X}(\Sigma)$ represents the vector fields parametrising local diffeomorphisms tangent to the boundary; the scalar field $\lambda \in C^{\infty}[1](\Sigma)$ might in turn be thought of as the parameter representing the local diffeomorphisms in the transversal direction. This becomes evident when considering the classical part of the cohomological vector field $Q$ (see Equation \eqref{Q_boundary_part}, below).
\end{remark}

\begin{theorem}\label{thm:BFVaction}
Let $g^\partial$ be nondegenerate on $\Sigma$. Let $\mathcal{F}$ be the bundle

\begin{equation}
\mathcal{F} \longrightarrow \Omega_{nd}^1(\Sigma, \mathcal{V}),
\end{equation}
with local trivialisation on an open $\mathcal{U}_{\Sigma} \subset \Omega_{nd}^1(\Sigma, \mathcal{V})$
\begin{equation}\label{LoctrivF1}
\mathcal{F}\simeq \mathcal{U}_{\Sigma} \times \mathcal{A}^{red}(\Sigma) \oplus T^* \left(\Omega_{\partial}^{0,2}[1]\oplus \mathfrak{X}[1](\Sigma) \oplus C^\infty[1](\Sigma)\right),
\end{equation}
and fields denoted by $e \in \mathcal{U}_{\Sigma}$ and $\omega \in \mathcal{A}^{red}(\Sigma)$ in degree zero, $c \in\Omega_{\partial}^{0,2}[1]$, $\xi \in\mathfrak{X}[1](\Sigma)$ and $\lambda\in \Omega^{0,0}[1]$ in degree one, $c^\dag\in\Omega_{\partial}^{3,2}[-1]$, $\lambda^\dag\in\Omega_{\partial}^{3,4}[-1]$ and $\xi^\dag\in\Omega_\partial^{1,0}[-1]\otimes\Omega_{\partial}^{3,4}$ in degree minus one, together with a fixed  $e_n \in \Gamma(\mathcal{V})$, completing the image of elements $e \in\mathcal{U}_{\Sigma}$ to a basis of  $\mathcal{V}$;
define a symplectic form and an action functional on $\mathcal{F}$ respectively by
\begin{align}
\varpi = \int_{\Sigma} &e \delta e \delta \omega + \delta c \delta c^\dag + \delta \lambda \delta \lambda^\dag + \iota_{\delta \xi} \delta \xi^\dag,\label{symplectic_form_NC1} \\
S= \int_{\Sigma} & c e d_{\omega} e + \iota_{\xi} e e F_{\omega} + \iota_{\xi} (\omega-\omega_0) e d_{\omega} e+ \lambda e_n \left(eF_\omega + \frac{1}{3!}\Lambda e^3\right) +\frac{1}{2} [c,c] c^{\dag}\nonumber\\ 
& - \mathrm{L}_{\xi}^{\omega_0} c c^{\dag} +\frac{1}{2}\iota_{\xi}\iota_{\xi}F_{\omega_0}c^{\dag}  + [c, \lambda e_n ]^{(a)}(\xi_a^{\dag}- (\omega - \omega_0)_a c^\dag) + [c, \lambda e_n ]^{(n)}\lambda^\dag \nonumber\\
 &- \mathrm{L}_{\xi}^{\omega_0} (\lambda e_n)^{(a)}(\xi_a^{\dag}- (\omega - \omega_0)_a c^\dag) - \mathrm{L}_{\xi}^{\omega_0} (\lambda e_n)^{(n)}\lambda^\dag - \frac{1}{2}\iota_{[\xi,\xi]}\xi^{\dag} \label{action_NC1}
\end{align}
where $e$ and $\omega$ satisfy the additional requirement $e_n d_{\omega} e \in \Ima W_1^{\partial,(1,1)}$.
Then the triple $(\mathcal{F}, \varpi, S)$ defines a BFV structure on $\Sigma$.\end{theorem}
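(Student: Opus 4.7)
My plan is to verify the three conditions that define a BFV structure on $\Sigma$: that $\varpi$ is a graded symplectic form of degree zero, that $S$ has ghost number one, and that $S$ satisfies the classical master equation $\{S,S\}=0$. The first two conditions reduce to bookkeeping. Nondegeneracy and closedness of $\varpi$ follow because its degree-zero piece is the symplectic form \eqref{classical-boundary-symplform} on $F^{\partial}_{PC}$, which is well defined on the quotient thanks to the representative choice afforded by Theorem \ref{thm:omegadecomposition}, while the remaining terms are manifestly the canonical symplectic pairings on the three shifted cotangent summands in \eqref{LoctrivF1}. The total ghost number of each term in \eqref{action_NC1} can be read off using $\mathrm{gh}(c)=\mathrm{gh}(\xi)=\mathrm{gh}(\lambda)=+1$ and $\mathrm{gh}(c^\dag)=\mathrm{gh}(\xi^\dag)=\mathrm{gh}(\lambda^\dag)=-1$, and every summand evaluates to one.

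The main work lies in the master equation. I would decompose $S = S^{(0)} + S^{(1)}$, where $S^{(0)}$ collects the four constraint terms (no antifields, linear in a single ghost) and $S^{(1)}$ collects all remaining terms (linear in the antifields $c^\dag,\xi^\dag,\lambda^\dag$ and bilinear in ghosts). Then
\begin{align*}
\{S,S\}=\{S^{(0)},S^{(0)}\}+2\{S^{(0)},S^{(1)}\}+\{S^{(1)},S^{(1)}\},
\end{align*}
graded in antifield degree by $0$, $1$, $2$. The degree-$0$ part is precisely the ghost-weighted sum of the Poisson brackets computed in Theorem \ref{thm:first-class-constraints}. By construction, $S^{(1)}$ is assembled so that the Hamiltonian vector field of $S^{(0)}$ acts on its antifields to produce $-1$ times the corresponding constraints, so the degree-$1$ piece cancels the degree-$0$ piece term by term; this is the usual BFV tautology once the structure functions $f_{ij}^k$ of \eqref{brackets-of-constraints} have been packaged into $S^{(1)}$ via the substitutions $L_c\leadsto c^\dag$, $P_\xi\leadsto \xi^\dag$, $H_\lambda\leadsto \lambda^\dag$.

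The nontrivial content is therefore the vanishing of the antifield-degree-$2$ part. Since the action truncates at first order in antifields, the statement is that these Jacobi-like closure identities hold on the nose, with no need for a higher correction term $R$ as described in Section \ref{sec:from_KT_to_BFV}. They reduce to a handful of standard identities: the graded Jacobi identity for $[c,c,c]$; the identity $L^{\omega_0}_{[\xi,\xi]}=[L^{\omega_0}_\xi,L^{\omega_0}_\xi]-[\iota_\xi\iota_\xi F_{\omega_0},\cdot\,]$ (already used as $(\clubsuit)$ in the proof of Theorem \ref{thm:first-class-constraints}); the Bianchi identity $d_{\omega_0}F_{\omega_0}=0$; and the equivariance $[c,L^{\omega_0}_\xi\alpha]=L^{\omega_0}_\xi[c,\alpha]-[L^{\omega_0}_\xi c,\alpha]$. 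The expected main obstacle is organisational rather than conceptual: the terms generated by $\{S^{(1)},S^{(1)}\}$ proliferate because of the frame splitting $Z=Z^{(a)}e_a+Z^{(n)}e_n$ appearing in the $\{L,H\}$ and $\{P,H\}$ brackets, and the presence of the reference connection $\omega_0$, so one must carefully track how the corrections $(\omega-\omega_0)_a c^\dag$ inside $\xi^\dag_a-(\omega-\omega_0)_a c^\dag$ reorganise the various pieces. Verifying these cancellations, grouped bracket by bracket using the identities from Theorem \ref{thm:first-class-constraints}, completes the proof.
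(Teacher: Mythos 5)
Your overall strategy---splitting $S$ by antifield number, invoking Theorem \ref{thm:first-class-constraints} for the antifield-free cancellation, and reducing the remainder to closure identities for the structure functions---is the same as the paper's, but your bookkeeping of the master equation by antifield degree is incorrect in a way that would derail the execution. The three terms $\{S_0,S_0\}$, $2\{S_0,S_1\}$, $\{S_1,S_1\}$ are \emph{not} homogeneous of antifield degree $0,1,2$: the Poisson bracket splits into a field part (pairing $e$ with $\omega$, via $\varpi_f$) and a ghost part (pairing ghosts with antighosts, via $\varpi_g$), and because the structure functions of this theory depend on $e$ and $\omega$ (through $(\omega-\omega_0)_a$, the frame components $Z^{(a)},Z^{(n)}$, and $\sigma$), the cross bracket has a surviving component $2\{S_0,S_1\}_f$ of antifield degree one, while $\{S_1,S_1\}$ has a component $\{S_1,S_1\}_g$ of antifield degree one. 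The homogeneous pieces are therefore: in degree $0$, $\{S_0,S_0\}_f+2\{S_0,S_1\}_g$ (your ``BFV tautology'', equivalent to Theorem \ref{thm:first-class-constraints}); in degree $1$, $2\{S_0,S_1\}_f+\{S_1,S_1\}_g$; in degree $2$, $\{S_1,S_1\}_f$. Your plan asserts that $2\{S_0,S_1\}$ cancels $\{S_0,S_0\}$ \emph{entirely} and that $\{S_1,S_1\}$ vanishes \emph{on its own} via Jacobi-type identities. Neither is true: $\{S_1,S_1\}_g$ is nonzero and is cancelled only against the cross term $2\{S_0,S_1\}_f$, i.e., against the Hamiltonian vector fields of the constraints acting on the field-dependent coefficients of $S_1$. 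That combined cancellation is precisely the lengthy computation of Appendix \ref{sec:appendix-BFV}; the identities you list (graded Jacobi, the $\clubsuit$-identity for $\mathrm{L}^{\omega_0}_\xi$, Bianchi, equivariance) do enter there, but they cannot close the computation without the $\{S_0,S_1\}_f$ contribution.

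Separately, you give no argument for the genuine antifield-degree-two piece $\{S_1,S_1\}_f=\iota_{Q_1}\iota_{Q_1}\varpi_f$. Because $S_1$ depends on $e$ and $\omega$, this term is a priori a nonzero obstruction to truncating the action at first order in the ghost momenta---it is exactly what would force a quadratic correction $R$---so its vanishing cannot be assumed from the outset. The paper disposes of it with a short but essential observation: every component of $Q_{1e}$ and $Q_{1\omega}$ is linear in the odd scalar $\lambda$ (since $[c,\lambda e_n]^{(a)}=\lambda[c,e_n]^{(a)}$ and $\mathrm{L}_{\xi}^{\omega_0}(\lambda e_n)^{(a)}=-\lambda\,\mathrm{L}_{\xi}^{\omega_0}(e_n)^{(a)}$ because $e_n^{(a)}=0$), hence $\{S_1,S_1\}_f=2\,e\,Q_{1e}Q_{1\omega}$ is proportional to $\lambda^2=0$. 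This step is missing from your proposal and is not a consequence of any of the identities you invoke.
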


\begin{proof} We have to prove that the action $S$ satisfies the classical master equation.
 By definition we have
\begin{align*}
\{S,S\}= \iota_Q \iota_Q \varpi.
\end{align*}
where $Q$ is the Hamiltonian vector field of $S$, defined by $\iota_Q \varpi = \delta S$.

In order to simplify the computation we can divide the action in two parts: $$S= S_0 +S_1$$ where $S_0$ is independent of the ghost momenta and $S_1$ is linear in them. In particular $S_0$ is the sum of the constraints and $S_1$ is everything else.
We divide the symplectic form too: $$\varpi= \varpi_f + \varpi_g$$ where $\varpi_f= \int_{\Sigma} e \delta e \delta \omega$ is the classical part and $\omega_g=  \int_{\Sigma}  \delta c \delta c^\dag + \delta \lambda \delta \lambda^\dag + \iota_{\delta \xi} \delta \xi^\dag$ is the ghost part. Finally, we define $Q_0$ to be the part of $Q$ satisfying $\iota_{Q_0} \varpi = \delta S_0$ and $Q_1$ to be the one satisfying $\iota_{Q_1} \varpi = \delta S_1$.

We can divide the master equation into the corresponding parts:
\begin{align*}
\{S,S\}=\{S_0,S_0\}_f+2\{S_0,S_1\}_f+2\{S_0,S_1\}_g+\{S_1,S_1\}_f+\{S_1,S_1\}_g
\end{align*}
where 
\begin{subequations}\label{CME-parts}
\begin{eqnarray}
\{S_0,S_0\}_f= \iota_{Q_0}\iota_{Q_0}\varpi_f & \{S_0,S_1\}_f= \iota_{Q_0}\iota_{Q_1}\varpi_f \\
\{S_0,S_0\}_g= \iota_{Q_0}\iota_{Q_0}\varpi_g & \{S_0,S_1\}_g= \iota_{Q_0}\iota_{Q_1}\varpi_g \\
\{S_1,S_1\}_f= \iota_{Q_1}\iota_{Q_1}\varpi_f & \{S_1,S_1\}_g= \iota_{Q_1}\iota_{Q_1}\varpi_g. 
\end{eqnarray}
\end{subequations}
This subdivision is particularly convenient, since we can exploit some properties of the action and prove the master equation piecewise.
We first note that $\{S_0,S_0\}_g=0$ since $S_0$ has no antighost part. Furthermore, by Theorem \ref{thm:BFVaction} we have that 
 \begin{align*}
 \{S_0,S_0\}_f+2\{S_0,S_1\}_g=0.
 \end{align*}
 The terms $\{S_0,S_1\}_f$ and $\{S_1,S_1\}_g$ are linear in the antighost while $\{S_1,S_1\}_f$ is quadratic in the antighost. Hence we should prove separately that $2\{S_0,S_1\}_f+\{S_1,S_1\}_g=0$ and $\{S_1,S_1\}_f=0$. For these last two terms we have to do the computation explicitly. We start by computing $\delta S$ in order to get $Q$ from the equation $\iota_Q \varpi = \delta S$.
 
 Note that for $X$ odd, since $\delta e_n =0$, we have
 \begin{align*}
\delta X = \delta (X^{(\mu)} e_{\mu})  &=  \delta (X^{(\mu)}) e_{\mu} + X^{(a)} \delta ( e_{a})\\
\delta (X^{(\mu)}) &= (\delta X)^{(\mu)}-  X^{(a)} \delta ( e_{a})^{(\mu)}.
 \end{align*}
The variation of the action is 
\begin{align*}
\delta S= \int_{\Sigma} & \delta c e d_{\omega} e - \frac{1}{2}c [\delta \omega , ee] +d_{\omega}c e \delta e   +\frac{1}{2} \iota_{\delta \xi} (e e) F_{\omega} + \iota_{\delta \xi} (\omega-\omega_0) e d_{\omega} e \nonumber\\ 
&  - e \delta e( \mathrm{L}_{\xi}^{\omega_0} (\omega-\omega_0) + \iota_ {\xi}F_{\omega_0}) -  (\mathrm{L}_{\xi}^{\omega_0} e ) e \delta \omega + \delta \lambda e_n e F_{\omega}+\lambda e_n  \delta e  F_{\omega}+ \nonumber\\ 
&  \frac{1}{3!}\Lambda \delta \lambda e_n   e^3+  \frac{1}{2}\Lambda \lambda e_n   e^2 \delta e +  d_{\omega}(\lambda e_n) e  \delta \omega + \lambda \sigma  e \delta \omega + [\delta c,c] c^{\dag}+ \frac{1}{2}[c,c] \delta c^{\dag} \nonumber\\ 
\intertext{}
& - \iota_{\delta \xi} d_{\omega_0} c c^{\dag} +  \delta c  d_{\omega_0} \iota_{\xi} c^{\dag} - \mathrm{L}_{\xi}^{\omega_0} c \delta c^{\dag} +\iota_{\delta \xi}\iota_{\xi}F_{\omega_0}c^{\dag} +\frac{1}{2}\iota_{\xi}\iota_{\xi}F_{\omega_0}\delta c^{\dag}\nonumber\\ 
 & + \left([\delta c, \lambda e_n ]^{(a)}- [c, \delta \lambda e_n ]^{(a)}  - [c, \lambda e_n ]^{(b)}\delta e_b^{(a)}\right)(\xi_a^{\dag}- (\omega - \omega_0)_a c^\dag) \nonumber\\ 
 & + [c, \lambda e_n ]^{(a)}(\delta \xi_a^{\dag}- \delta (\omega - \omega_0)_a c^\dag- (\omega - \omega_0)_a \delta c^\dag)  + [\delta c, \lambda e_n ]^{(n)}\lambda^\dag \nonumber\\ 
 & - [c, \delta \lambda e_n ]^{(n)}\lambda^\dag  - [c, \lambda e_n ]^{(b)}\delta e_b^{(n)}\lambda^\dag + [c, \lambda e_n ]^{(n)}\delta \lambda^\dag \nonumber\\
 &\left(- (\iota_{\delta \xi} d_{\omega_0}(\lambda e_n))^{(a)} + \mathrm{L}_{\xi}^{\omega_0} (\delta \lambda e_n)^{(a)}   +\mathrm{L}_{\xi}^{\omega_0} (\lambda e_n)^{(b)}\delta e_b^{(a)}\right)(\xi_a^{\dag}- (\omega - \omega_0)_a c^\dag)\nonumber\\
 &  - \mathrm{L}_{\xi}^{\omega_0} (\lambda e_n)^{(a)}(\delta \xi_a^{\dag}- \delta (\omega - \omega_0)_a c^\dag- (\omega - \omega_0)_a \delta c^\dag) \nonumber\\
 & - (\iota_{\delta \xi} d_{\omega_0}(\lambda e_n))^{(n)}\lambda^\dag  + \mathrm{L}_{\xi}^{\omega_0} (\delta \lambda e_n)^{(n)}\lambda^\dag  +\mathrm{L}_{\xi}^{\omega_0} (\lambda e_n)^{(b)}\delta e_b^{(n)}\lambda^\dag\nonumber\\
 &- \mathrm{L}_{\xi}^{\omega_0} (\lambda e_n)^{(n)}\delta \lambda^\dag  - \delta \xi^a (\partial_a \xi^b) \xi_b^{\dag}- \delta \xi^a \partial_b (\xi^b \xi_a^{\dag})- \xi^a (\partial_a \xi^b) \delta \xi_b^{\dag}.
\end{align*} 
This variation contains all the information necessary to construct the cohomological vector field $Q$. However $\delta S$ contains some variation of $\delta \omega$ that are constrained by \eqref{omegareprfix} and some other terms of difficult explicit inversion. For our purposes it is sufficient to have the explicit expressions of
$Q_{0e}, Q_{0 \omega}, Q_{c}, Q_{\lambda}, Q_{\xi}$ and some information about $Q_{1e}, Q_{1\omega}$ (recall that $Q_{0e}, Q_{0 \omega} $ are the part of $Q_{e}, Q_{\omega} $ not containing antighosts, while $Q_{1e}, Q_{1\omega}$ contain everything else.

Let us start from $Q_{1e}, Q_{1\omega}$. They are defined through the equation 
\begin{align*}
\iota_{Q_1}(e \delta e \delta \omega) = &  - [c, \lambda e_n ]^{(b)}\delta e_b^{(a)}(\xi_a^{\dag}- (\omega - \omega_0)_a c^\dag) - [c, \lambda e_n ]^{(a)}\delta (\omega - \omega_0)_a c^\dag \\
 & - [c, \lambda e_n ]^{(b)}\delta e_b^{(n)}\lambda^\dag +\mathrm{L}_{\xi}^{\omega_0} (\lambda e_n)^{(b)}\delta e_b^{(a)}(\xi_a^{\dag}- (\omega - \omega_0)_a c^\dag) \\
 &+ \mathrm{L}_{\xi}^{\omega_0} (\lambda e_n)^{(a)}\delta (\omega - \omega_0)_a c^\dag +\mathrm{L}_{\xi}^{\omega_0} (\lambda e_n)^{(b)}\delta e_b^{(n)}\lambda^\dag
\end{align*}
Since $\lambda$ is a scalar function we have that $[c, \lambda e_n ]^{(a)}=\lambda [c,  e_n ]^{(a)}$ and
$\mathrm{L}_{\xi}^{\omega_0} (\lambda e_n)^{(a)}= \mathrm{L}_{\xi}^{\omega_0} (\lambda) e_n^{(a)}-\lambda \mathrm{L}_{\xi}^{\omega_0} ( e_n)^{(a)}= -\lambda \mathrm{L}_{\xi}^{\omega_0} ( e_n)^{(a)}$, since $e_n^{(a)}=0$. We  then deduce that  every term in $Q_{1e}$ and $Q_{1\omega}$ must be linear in $\lambda$. From \eqref{CME-parts} we have
\begin{align*}
\{S_1,S_1\}_f= \iota_{Q_1}\iota_{Q_1}(e \delta e \delta \omega)= 2 e Q_{1e}Q_{1\omega}
\end{align*}
which contains only terms proportional to $\lambda^2=0$ since $\lambda$ is an odd scalar function. This proves 
$\{S_1,S_1\}_f=0$.

From the above variation of $S$ we can compute directly $Q_{0e}, Q_{0 \omega}, Q_{c}, Q_{\lambda}, Q_{\xi}$ :
\begin{equation}\label{Q_boundary_part}
\begin{split}
Q_0e&= [c,e] - \mathrm{L}_{\xi}^{\omega_0} e +  d_{\omega}(\lambda e_n) + \lambda \sigma \\
Q_0\omega &= d_{\omega}c - \mathrm{L}_{\xi}^{\omega_0} (\omega-\omega_0) - \iota_ {\xi}F_{\omega_0} + W_1^{-1}(\lambda e_n F_{\omega})+  \frac{1}{2}\Lambda \lambda e_n   e
\end{split}
\end{equation}
\begin{equation}\label{Q_boundary_part2}
\begin{split}
Qc &=  \frac{1}{2}[c,c] - \mathrm{L}_{\xi}^{\omega_0}c+\frac{1}{2}\iota_{\xi}\iota_{\xi}F_{\omega_0} - \left([c, \lambda e_n ]^{(a)}- \mathrm{L}_{\xi}^{\omega_0} (\lambda e_n)^{(a)}\right)(\omega - \omega_0)_a \\
Q\lambda &= [c, \lambda e_n ]^{(n)} - \mathrm{L}_{\xi}^{\omega_0} (\lambda e_n)^{(n)}\\
Q_{\xi} &=   [c, \lambda e_n ]^{(\bullet)} - \mathrm{L}_{\xi}^{\omega_0} (\lambda e_n)^{(\bullet)} - \frac{1}{2}[\xi,\xi]
\end{split}
\end{equation}
where $W_1^{-1}(\lambda e_n F_{\omega})$ is defined as in \eqref{e:ham_vf_L}.
The proof of $2\{S_0,S_1\}_f+\{S_1,S_1\}_g=0$ is a lengthy computation fully detailed in Appendix \ref{sec:appendix-BFV}.
\end{proof}
\begin{remark}
By setting $\lambda=0$, we can read the action of $Q$ on $c$ and $\xi$ as (a splitting by $\omega_0$) of the Atiyah algebroid structure on $TP/O(N-1,1)$ \cite{Mackenzie1987}, where $P$ is the orthogonal frame bundle of $M$ restricted to $\Sigma$.
\end{remark}
\subsection{Alternative variables}\label{s:altvar}
The $\xi$-dependent part of $S$ in \eqref{action_NC1} contains, in accordance with \eqref{constraints}, a repetition of the invariant constraint $ed_\omega e = 0$ which we have added to simplify the computations. This term may actually be removed by using the following symplectomorphism (cf. with \cite{CS2017}):
$$ c' = c + \iota_\xi (\omega-\omega_0) \qquad \xi^{'\dag}_a = \xi^{\dag}_a - (\omega - \omega_0)_a c^\dag$$
The resulting expressions of the action and symplectic form are:
\begin{align}\label{action_C1}
S= \int_{\Sigma} & c' e d_{\omega} e + \iota_{\xi} e e F_{\omega} + \lambda e_n \left(eF_\omega + \frac{1}{3!}\Lambda e^3\right) +\frac{1}{2} [c',c'] c^{\dag} - L^{\omega}_{\xi} c' c^{\dag}+ \frac{1}{2} \iota_{\xi}\iota_{\xi} F_{\omega}c^{\dag}\nonumber\\ &
 + [c', \lambda e_n ]^{(a)}\xi_a^{'\dag} + [c', \lambda e_n ]^{(n)}\lambda^\dag - L^{\omega}_{\xi} (\lambda e_n)^{(a)}\xi_a^{'\dag} \nonumber\\
 &- L^{\omega}_{\xi} (\lambda e_n)^{(n)}\lambda^\dag - \frac{1}{2}\iota_{[\xi,\xi]}\xi^{'\dag},\\
\label{symplectic_form_C1}
\varpi 
= \int_{\Sigma}& e \delta e \delta \omega + \delta c' \delta c^\dag + \delta \omega \delta(\iota_{\xi} c^\dag) + \delta \lambda \delta \lambda^\dag + \iota_{\delta \xi} \delta \xi^{\dag '}.
\end{align}

Note that the price for the simplication of the action is that \emph{primed} chart  is no longer a Darboux chart.
We can further transform \eqref{action_C1} and \eqref{symplectic_form_C1} in order to avoid using components. 
%by glueing together the fields $\lambda^\dag$ and $\xi^{'\dag}$ as follows. 
Since $\lambda^\dag$ and $\xi^{\dag '}$ both take value in $\wedge^4 \mathcal{V}$ we can write them in terms of the basis $(e_a, e_n)$:
\begin{align*}
\lambda^{\dag} &= {\lambda^{\dag}}^{(123n)}e_1 \wedge e_2 \wedge e_3 \wedge e_n; \\
\xi_a^{\dag'}dx^a &= {\xi_a^{\dag'}}^{(123n)}dx^ae_1 \wedge e_2 \wedge e_3 \wedge e_n,  \; a=1,2,3.
\end{align*}
Now define the following fields:
\begin{align*}
x_a^{a\dag}dx^a &:= {\xi_a^{\dag'}}^{(123n)}dx^a e_b \wedge e_c \wedge e_n \quad a,b,c\in \{1,2,3\},\quad b,c\neq a\\
l^\dag &:= {\lambda^{\dag}}^{(123n)}e_1 \wedge e_2 \wedge e_3 \qquad
y^\dag := l^\dag + \sum_{a=1}^3 (-1)^a x_a^{a\dag}.
\end{align*}
Multiplying $y^\dag$ by $e_a$ and $e_n$ gives back the original fields $\lambda^\dag$ and $\xi^{\dag '}$: $e_n y^\dag = -\lambda^\dag$, $e_a y^\dag =  -\xi_a^{\dag '}$.

Using these properties it is easy to show that we can express the action $S$ and the symplectic form $\varpi$ on the new space of fields given by the bundle
\begin{equation}\label{LoctrivF1_C3}
\mathcal{F'}\longrightarrow \Omega_{nd}^1(\Sigma, \mathcal{V}),
\end{equation}
with local trivialisation on an open $\mathcal{U}_{\Sigma} \subset \Omega_{nd}^1(\Sigma, \mathcal{V})$
\begin{align*}
\mathcal{F}\simeq \mathcal{U}_{\Sigma} \times \mathcal{A}^{red}(\Sigma) &\oplus \left(\Omega_{4, \partial}^{0,2}[1]\oplus \mathfrak{X}[1](\Sigma) \oplus C^\infty[1](\Sigma)\right)\\
&\oplus \Omega_{\partial}^{3,2}[-1] \oplus \Omega_{\partial}^{3,3}[-1] ,
\end{align*}
where all the fields are denoted as in Theorem \ref{thm:BFVaction} but $y^\dag \in \Omega_{\partial}^{3,3}[-1]$:
\begin{align}\label{action_C3}
S= \int_{\Sigma} & c' e d_{\omega} e + \iota_{\xi} e e F_{\omega} + \lambda e_n \left(eF_\omega + \frac{1}{3!}\Lambda e^3\right) +\frac{1}{2} [c',c'] c^{\dag} - L^{\omega}_{\xi} c' c^{\dag}+ \frac{1}{2} \iota_{\xi}\iota_{\xi} F_{\omega}c^{\dag}\nonumber\\ &
 -[c', \lambda e_n ]y^{\dag} + L^{\omega}_{\xi} (\lambda e_n)y^{\dag} + \frac{1}{2}\iota_{[\xi,\xi]}e y^{\dag},
\end{align}
\begin{align}\label{symplectic_form_C3}
\varpi = \int_{\Sigma} e \delta e \delta \omega + \delta c' \delta c^\dag + \delta \omega \delta (\iota_\xi c^\dag) - \delta \lambda e_n \delta y^\dag+\iota_{\delta \xi} \delta (e y^\dag).
\end{align}
It is a simple computation to show that this two form is actually nondegenerate.
\begin{remark}
Equation \eqref{action_C3} is again a covariant version of the BFV action functional. Moreover, it has the advantage of not including the implicit terms of \eqref{action_C1} and satisfies by construction the classical master equation. It hence provides a good starting point for the AKSZ construction performed in \cite{CCS2020b}.
\end{remark}

\section{Generalisation to \texorpdfstring{$\dim(M)>4$}{dim(M)>4}}\label{s:generalisation}
In this section we generalise the results of the previous sections to dimensions $N=\dim(M)>4$. The construction is substantially unchanged while a few details have to be fixed. We recall the main steps and adapt them to the generalisation. 
\subsection{Extension of the reduced phase space to higher dimensions}\label{s:RPSext}
The classical fields of the theory are as in the $N =4$ case:
a nondegenerate coframe $e \in \Omega^{1,1}_{\partial}$ restricted to the boundary and an equivalence class of connections $[\omega] \in \mathcal{A}^{red}(\Sigma)$ where $\mathcal{A}^{red}(\Sigma)$ is the quotient under $\omega  \sim \omega + v $ for $v$ such that $e^{N-3} \wedge v =0$.  The symplectic structure of the geometric phase space is given by
\begin{align}\label{classical-boundary-symplform_d4}
\varpi = \int_{\Sigma} e^{N-3} \delta e \delta [\omega].
\end{align}

Let now $e_n$ be a fixed section of $\mathcal{V}$ completing the image of $e: T\Sigma \rightarrow \mathcal{V}$ to a basis of $\mathcal{V}$. The structural constraint is 
\begin{equation}\label{omegareprfix_d4}
e_n e^{N-4} d_{\omega} e \in \Ima W_{N-3}^{\partial,(1,1)}.
\end{equation}

\begin{theorem}\label{thm:omegadecomposition_d4}
Suppose that the  boundary metric $g^{\partial}$ is nondegenerate. Given any $\widetilde{\omega} \in \Omega ( \Sigma,\wedge^2 \mathcal{V})$, there is a unique decomposition 
\begin{equation} \label{omegadecomp_d4}
\widetilde{\omega}= \omega +v
\end{equation}
with $\omega$ and $v$ satisfying 
\begin{equation}\label{omegareprfix2_d4}
e^{N-3} v=0 \quad \text{ and } \quad e_n e^{N-4}d_{\omega} e \in \Ima W_{N-3}^{\partial,(1,1)}.
\end{equation}
\end{theorem}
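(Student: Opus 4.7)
The proof I envisage is a direct transcription, mutatis mutandis, of the argument given for Theorem \ref{thm:omegadecomposition} in the four-dimensional case; the key observation is that all the technical lemmas of Section \ref{s:technical} (in particular Lemmas \ref{lem:Omega2,1_d4} and \ref{lem:Omega2,2_d4}, and Corollary \ref{cor:enve_injective}) have been formulated directly for arbitrary $N\geq 4$. The plan is therefore to identify the correct ``data'' $\beta\in\Omega^{N-2,N-2}_\partial$ to which Lemma \ref{lem:Omega2,2_d4} must be applied, read off $v$ and $\sigma$ from the resulting decomposition, and verify the two parts of \eqref{omegareprfix2_d4}.

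For \emph{existence}, I would start from $\widetilde\omega\in\Omega^{1,2}_\partial$ and consider the form
\[
\beta := e_n\, e^{N-4}\, d_{\widetilde\omega} e \;\in\; \Omega^{N-2,N-2}_\partial.
\]
By Lemma \ref{lem:Omega2,2_d4} (which requires nondegeneracy of $g^\partial$) there exist unique $\sigma\in\Omega^{1,1}_\partial$ and $v\in\mathrm{Ker}\,W_{N-3}^{\partial,(1,2)}$ such that
\[
e_n\, e^{N-4}\, d_{\widetilde\omega} e \;=\; e^{N-3}\sigma + e_n\, e^{N-4}[v,e].
\]
Setting $\omega := \widetilde\omega - v$ and using $d_{\widetilde\omega}e = d_\omega e + [v,e]$ yields
$e_n\,e^{N-4}\,d_\omega e = e^{N-3}\sigma \in \mathrm{Im}\,W_{N-3}^{\partial,(1,1)}$, while the condition $e^{N-3}v=0$ is immediate from $v\in\mathrm{Ker}\,W_{N-3}^{\partial,(1,2)}$.

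For \emph{uniqueness}, suppose $\widetilde\omega=\omega_1+v_1=\omega_2+v_2$ with $e^{N-3}v_i=0$ and $e_n\,e^{N-4}\,d_{\omega_i}e\in\mathrm{Im}\,W_{N-3}^{\partial,(1,1)}$. Subtracting gives $\omega_1-\omega_2=v_2-v_1\in\mathrm{Ker}\,W_{N-3}^{\partial,(1,2)}$, and
\[
e_n\,e^{N-4}\,[v_2-v_1,e] \;=\; e_n\,e^{N-4}\,d_{\omega_1}e - e_n\,e^{N-4}\,d_{\omega_2}e \;\in\;\mathrm{Im}\,W_{N-3}^{\partial,(1,1)}.
\]
Moreover, the computation in Corollary \ref{cor:enve_injective} shows $e^{N-3}[v_2-v_1,e]=0$, so Lemma \ref{lem:Omega2,1_d4} applied to $\alpha=[v_2-v_1,e]$ forces $[v_2-v_1,e]=0$; by Lemma \ref{lem:varrho12}, since $v_2-v_1$ lies in $\mathrm{Ker}\,W_{N-3}^{\partial,(1,2)}$, this implies $v_1=v_2$ and hence $\omega_1=\omega_2$.

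Since the only place where the dimension ever enters nontrivially is the statement of Lemmas \ref{lem:Omega2,1_d4}, \ref{lem:Omega2,2_d4}, \ref{lem:varrho12} and Corollary \ref{cor:enve_injective} (all already proven in arbitrary dimension $N\geq 4$), I do not expect any genuine obstacle: the only subtlety is bookkeeping, namely remembering to insert the correct power $e^{N-4}$ next to $e_n$ so that $\beta$ lies in the top-form space $\Omega^{N-2,N-2}_\partial$ on which Lemma \ref{lem:Omega2,2_d4} is formulated.
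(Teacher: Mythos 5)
Your proposal is correct and follows essentially the same route as the paper: existence by applying Lemma \ref{lem:Omega2,2_d4} to $e_n e^{N-4}d_{\widetilde\omega}e$ and setting $\omega=\widetilde\omega-v$, uniqueness by showing $e_n e^{N-4}[v_2-v_1,e]\in\Ima W_{N-3}^{\partial,(1,1)}$ and invoking the injectivity/trivial-intersection statements of Section \ref{s:technical}. Your uniqueness step merely unrolls the argument of Corollary \ref{cor:enve_injective} (via Lemmas \ref{lem:Omega2,1_d4} and \ref{lem:varrho12}) where the paper cites it more compactly; the content is identical.
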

\begin{proof}
Let $\widetilde{\omega} \in \Omega ( \Sigma,\wedge^2 \mathcal{V})$. From Lemma \ref{lem:Omega2,2_d4} we deduce that there exist unique $\sigma \in \Omega(\Sigma, \mathcal{V})$ and $v \in \text{Ker} W_{N-3}^{\partial,(1,2)}$ such that 
\begin{align*}
e_n e^{N-4}d_{\widetilde{\omega}} e = e^{N-3} \sigma + e_n e^{N-4} [v,e].
\end{align*}
We define $\omega := \widetilde{\omega} - v $. Then $\omega$ and $v$ satisfy \eqref{omegadecomp} and \eqref{omegareprfix2}.

To prove uniqueness, suppose that $\widetilde{\omega}= \omega_1 + v_1 = \omega_2 +v_2$ with $e^{N-3}v_i =0$ and $e_n e^{N-4} d_{\omega_i} e \in \Ima W_{N-3}^{\partial,(1,1)}$ for $i=1,2$. Hence 
$$e_n e^{N-4} d_{\omega_1} e- e_n e^{N-4} d_{\omega_2} e = e_n e^{N-4} [v_2-v_1, e] \in  \Ima W_{N-3}^{\partial,(1,1)}.$$ Hence from Lemma \ref{lem:Omega2,1_d4} and \ref{lem:Omega2,2_d4}, we deduce $v_2-v_1 =0$, since $v_2-v_1 \in Ker W_{N-3}^{\partial,(1,2)}$.
\end{proof}
\begin{corollary}
The field $\omega$ in the decomposition \eqref{omegadecomp} depends only on the equivalence class $[\omega] \in \mathcal{A}^{red}(\Sigma)$.
\end{corollary}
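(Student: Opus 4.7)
My plan is to mimic almost verbatim the argument given for the corollary following Theorem \ref{thm:omegadecomposition} in the four-dimensional case, replacing the map $W_1^{\partial,(1,2)}$ by its higher-dimensional analogue $W_{N-3}^{\partial,(1,2)}$ and invoking the $N$-dependent versions of the relevant lemmas from Section \ref{s:technical}.

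Concretely, I would pick two representatives $\widetilde{\omega}_1, \widetilde{\omega}_2 \in [\omega]$. By definition of the equivalence relation on $\mathcal{A}^{red}(\Sigma)$, their difference $\tilde{v} := \widetilde{\omega}_1 - \widetilde{\omega}_2$ satisfies $e^{N-3}\tilde{v}=0$, i.e.\ $\tilde{v} \in \mathrm{Ker}\, W_{N-3}^{\partial,(1,2)}$. Apply Theorem \ref{thm:omegadecomposition_d4} to each representative to get decompositions
\[
\widetilde{\omega}_1 = \omega_1 + v_1, \qquad \widetilde{\omega}_2 = \omega_2 + v_2,
\]
with $v_1, v_2 \in \mathrm{Ker}\, W_{N-3}^{\partial,(1,2)}$ and $e_n e^{N-4} d_{\omega_i} e \in \Ima W_{N-3}^{\partial,(1,1)}$ for $i=1,2$.

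Subtracting these two decompositions gives
\[
\omega_1 - \omega_2 = v_2 - v_1 - \tilde v,
\]
so $\omega_1 - \omega_2 \in \mathrm{Ker}\, W_{N-3}^{\partial,(1,2)}$ as a sum of three elements of that kernel. On the other hand, taking the difference of the two structural constraints yields
\[
e_n e^{N-4}[\omega_1 - \omega_2, e] \in \Ima W_{N-3}^{\partial,(1,1)}.
\]
At this point the uniqueness half of Lemma \ref{lem:Omega2,2_d4} (equivalently, the combination of Lemma \ref{lem:Omega2,1_d4} with the injectivity of $\chi$ from Corollary \ref{cor:enve_injective}, which is where nondegeneracy of $g^\partial$ enters) forces the component of $\omega_1-\omega_2$ in $\mathrm{Ker}\, W_{N-3}^{\partial,(1,2)}$ associated to the ``$e_n e^{N-4}[\,\cdot\,,e]$'' piece to vanish, giving $\omega_1 = \omega_2$.

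I do not expect real obstacles: the entire argument is a transcription of the one already spelled out in dimension four, and every tool invoked (the decomposition theorem, Lemma \ref{lem:Omega2,1_d4}, Lemma \ref{lem:Omega2,2_d4}, and the underlying surjectivity/injectivity statements for $W_{N-3}^{\partial,(i,j)}$) is stated in the requisite $N$-dependent generality in Section \ref{s:technical}. The only point of mild care is to make sure the same element plays the role of both the ``$\mathrm{Ker}$-part'' and satisfies the structural condition simultaneously, so that invoking Lemma \ref{lem:Omega2,2_d4} (where nondegeneracy of $g^\partial$ is used) is legitimate.
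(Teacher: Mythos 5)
Your proposal is correct and follows essentially the same route as the paper: apply Theorem \ref{thm:omegadecomposition_d4} to two representatives, subtract to conclude that $\omega_1-\omega_2$ lies in $\mathrm{Ker}\,W_{N-3}^{\partial,(1,2)}$ while $e_n e^{N-4}[\omega_1-\omega_2,e]\in \Ima W_{N-3}^{\partial,(1,1)}$, and then kill the difference via \eqref{envenotinIme_d4} together with the injectivity of $\chi$ (Corollary \ref{cor:enve_injective}), which is where nondegeneracy of $g^\partial$ enters. (The sign in front of $\tilde v$ in your expression for $\omega_1-\omega_2$ is off, but this is immaterial since only membership in the kernel is used.)
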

\begin{proof}
Let $\widetilde{\omega}_1, \widetilde{\omega}_2 \in [\omega]$. Hence $\widetilde{\omega}_1- \widetilde{\omega}_2= \tilde{v} \in \text{Ker} W_{N-3}^{\partial,(1,2)}$. Applying Theorem \ref{thm:omegadecomposition_d4} we get $\omega_1$, $v_1$, $\omega_2$, $v_2$ such that $v_1, v_2 \in \text{Ker} W_{N-3}^{\partial,(1,2)}$ and 
\begin{align*}
\widetilde{\omega}_1= \omega_1 + v_1 \qquad & e_n {e^{N-4}}d_{\omega_1} e \in \Ima W_{N-3}^{\partial,(1,1)} \\
\widetilde{\omega}_2= \omega_2 + v_2 \qquad & e_n {e^{N-4}}d_{\omega_2} e \in \Ima W_{N-3}^{\partial,(1,1)}.
\end{align*}
Subtracting these equations we get $\omega_2- \omega_1 = v_1-v_2 - \tilde{v} \in \text{Ker} W_{N-3}^{\partial,(1,2)}$ and
$e_n {e^{N-4}}[\omega_1- \omega_2,e ] \in \Ima W_{N-3}^{\partial,(1,1)}$.  Hence, from \eqref{envenotinIme_d4}, we deduce $\omega_1 = \omega_2$.
\end{proof}

As for $N =4$, we consider the following constraints defined on $\mathcal{F}^{\partial}_{PC}$ using $\omega \in [\omega]$ defined in Theorem \ref{thm:omegadecomposition_d4}, hence satisfying \eqref{omegareprfix_d4}:
\begin{subequations}\label{constraints_d4}
\begin{equation}
L_c = \int_{\Sigma} c e^{N-3} d_{\omega} e 
\end{equation}
\begin{equation}
P_{\xi}= \int_{\Sigma}  \iota_{\xi} e e^{N-3} F_{\omega} + \iota_{\xi} (\omega-\omega_0) e^{N-3} d_{\omega} e
\end{equation}
\begin{equation}
H_{\lambda} = \int_{\Sigma} \lambda e_n \left( e^{N-3} F_{\omega} +\frac{1}{(N-1)!}\Lambda e^{N-1} \right),
\end{equation}
\end{subequations}
and Theorem \ref{thm:first-class-constraints} holds verbatim for these constraints too. 

\subsection{Extension of BFV data to higher dimensions}\label{s:BFVext}
Since the the brackets between the constraints are the same as in the $N =4$ case (Theorem \ref{thm:first-class-constraints}), the BFV action will have a similar expression too. For reference purposes, below we write the general version of Theorem \ref{thm:BFVaction}:
\begin{theorem}\label{thm:BFVaction_d4}
Let $g^\partial$ be nondegenerate on $\Sigma$. Let $\mathcal{F}$ be the bundle
\begin{equation}
\mathcal{F} \longrightarrow \Omega_{nd}^1(\Sigma, \mathcal{V}),
\end{equation}
with local trivialisation on an open $\mathcal{U}_{\Sigma} \subset \Omega_{nd}^1(\Sigma, \mathcal{V})$
\begin{equation}\label{LoctrivF1_d4}
\mathcal{F}\simeq \mathcal{U}_{\Sigma} \times \mathcal{A}^{red}(\Sigma) \oplus T^* \left(\Omega_{\partial}^{0,2}[1]\oplus \mathfrak{X}[1](\Sigma) \oplus \Omega_\partial^{0,0}[1]\right),
\end{equation}
and fields denoted by $e \in \mathcal{U}_{\Sigma}$ and $\omega \in \mathcal{A}^{red}(\Sigma)$ in degree zero, $c \in\Omega_{\partial}^{0,2}[1]$, $\xi \in\mathfrak{X}[1](\Sigma)$ and $\lambda\in \Omega^{0,0}[1]$ in degree one, $c^\dag\in\Omega_{\partial}^{N-1,N-2}[-1]$, $\lambda^\dag\in\Omega_{\partial}^{N-1,N}[-1]$ and $\xi^\dag\in\Omega^{1,0}[-1]\otimes\Omega_{\partial}^{N-1,N}$ in degree minus one, together with a fixed  $e_n \in \Gamma(\mathcal{V})$, completing the image of elements $e \in\mathcal{U}^{\Sigma}$ to a basis of  $\mathcal{V}$;
define a symplectic form and an action functional on $\mathcal{F}$ respectively by
\begin{align}
\varpi = \int_{\Sigma} &e^{N-3} \delta e \delta \omega + \delta c \delta c^\dag + \delta \lambda \delta \lambda^\dag + \iota_{\delta \xi} \delta \xi^\dag,\label{symplectic_form_NC1_d4} \\
S= \int_{\Sigma} & c e^{N-3} d_{\omega} e + \iota_{\xi} e e^{N-3} F_{\omega} + \iota_{\xi} (\omega-\omega_0) e^{N-3} d_{\omega} e +\lambda e_n  e^{N-3} F_{\omega} \nonumber\\ 
& +\frac{1}{(N-1)!}\Lambda\lambda e_n e^{N-1} +\frac{1}{2} [c,c] c^{\dag} - \mathrm{L}_{\xi}^{\omega_0} c c^{\dag}+\frac{1}{2}\iota_{\xi}\iota_{\xi}F_{\omega_0} c^{\dag}+ [c, \lambda e_n ]^{(n)}\lambda^\dag \nonumber\\
 &+ [c, \lambda e_n ]^{(a)}(\xi_a^{\dag}- (\omega - \omega_0)_a c^\dag)   - \mathrm{L}_{\xi}^{\omega_0} (\lambda e_n)^{(a)}(\xi_a^{\dag}- (\omega - \omega_0)_a c^\dag)\nonumber\\
 & - \mathrm{L}_{\xi}^{\omega_0} (\lambda e_n)^{(n)}\lambda^\dag - \frac{1}{2}\iota_{[\xi,\xi]}\xi^{\dag} \label{action_NC1_d4}
\end{align}
where $\omega$ satisfies the additional requirement $e_n e^{N-4} d_{\omega} e \in \Ima W_{N-3}^{\partial,(1,1)}$.
Then the triple $(\mathcal{F}, \varpi, S)$ forms a BFV structure on $\Sigma$.\end{theorem}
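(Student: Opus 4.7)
The plan is to follow verbatim the strategy used to prove Theorem \ref{thm:BFVaction}: verify the classical master equation $\{S,S\} = \iota_Q \iota_Q \varpi = 0$, where $Q$ is the Hamiltonian vector field of $S$ defined by $\iota_Q \varpi = \delta S$. I would split $S = S_0 + S_1$ where $S_0$ is the sum of the constraints \eqref{constraints_d4} and $S_1$ gathers all the terms linear in the antighosts, and correspondingly split $\varpi = \varpi_f + \varpi_g$ and $Q = Q_0 + Q_1$. This produces the same decomposition of the master equation as in \eqref{CME-parts}.

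First I would check that $Q$ is well defined in spite of the structural constraint \eqref{omegareprfix_d4}. As in the four-dimensional case, this restricts only the component of $\delta \omega$ lying in $\ker W^{\partial,(1,2)}_{N-3}$; but by Theorem \ref{thm:omegadecomposition_d4} together with Lemma \ref{lem:Omega2,2_d4}, the variation $\delta S$ may be reorganised so that $\delta\omega$ enters only through unconstrained combinations such as $e^{N-3}\delta\omega$. With $Q$ in hand, the piece $\{S_0,S_0\}_g = 0$ is immediate since $S_0$ carries no antighost, and the cancellation $\{S_0,S_0\}_f + 2\{S_0,S_1\}_g = 0$ is exactly the statement that the ghost structure encoded in $S_1$ reproduces the brackets of Theorem \ref{thm:first-class-constraints}, which the authors have already noted holds verbatim for $N\geq 5$.

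For the quadratic-in-antighost piece $\{S_1,S_1\}_f$, I would reuse the scaling argument from the $N=4$ proof: inspection of the antighost-linear terms in $S$ shows that every contribution to $Q_{1e}$ and $Q_{1\omega}$ carries an explicit factor of $\lambda$, so $\{S_1,S_1\}_f = 2 e^{N-3} Q_{1e} Q_{1\omega}$ is proportional to $\lambda^2 = 0$, $\lambda$ being an odd scalar. This argument is manifestly dimension independent, since it depends only on the placement of $\lambda$ in the action, not on the power of $e$.

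The main obstacle is the remaining identity $2\{S_0,S_1\}_f + \{S_1,S_1\}_g = 0$, which is the lengthy explicit calculation that in dimension four is relegated to Appendix \ref{sec:appendix-BFV}. My plan is to reproduce that calculation with $e$ replaced by $e^{N-3}$ in all places where it appears as the wedge factor multiplying $d_\omega e$, $F_\omega$, or $\delta e \, \delta\omega$. Every algebraic identity invoked in the four-dimensional computation — integration by parts, the bracket identity
\[
\mathrm{L}^{\omega_0}_\xi \mathrm{L}^{\omega_0}_\xi B = \tfrac{1}{2}\mathrm{L}^{\omega_0}_{[\xi,\xi]} B + \tfrac{1}{2}[\iota_\xi \iota_\xi F_{\omega_0}, B],
\]
the Cartan-magic splitting of $\iota_{[\xi,\xi]}$, and the relation $d_{\omega_0}(\omega_0 - \omega) = F_{\omega_0} - F_\omega + \tfrac{1}{2}[\omega_0 - \omega, \omega_0 - \omega]$ — is a statement within the graded-commutative algebra $\Omega^{\bullet,\bullet}$ and is insensitive to the power of $e$ multiplying the expression. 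The only subtlety to track is that $e_n e^{N-4}$ replaces $e_n$ in the inverse of $W^{\partial,(1,1)}_{N-3}$ appearing in $Q_{0\omega}$, and that cosmological-constant pieces now involve $e^{N-1}$ rather than $e^3$; their contributions collapse exactly as in dimension four, because $\lambda e_n$ is still an odd element squaring to zero.
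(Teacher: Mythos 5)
Your proposal is correct and follows essentially the same route as the paper, which in fact gives no separate proof for this theorem: it simply observes that the constraint brackets of Theorem \ref{thm:first-class-constraints} hold verbatim for $N\geq 5$ and that the decomposition and cancellations of the $N=4$ proof (including the $\lambda^2=0$ argument for $\{S_1,S_1\}_f$ and the Appendix \ref{sec:appendix-BFV} computation for $2\{S_0,S_1\}_f+\{S_1,S_1\}_g$) carry over with $e$ replaced by $e^{N-3}$ and $e_n$ by $e_n e^{N-4}$ where appropriate. Your elaboration of why each step is insensitive to the power of $e$ is a faithful, slightly more explicit version of exactly this argument.
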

The covariant version of the action and symplectic form are
\begin{align}\label{action_C3_d4}
S= \int_{\Sigma} & c' e^{N-3} d_{\omega} e + \iota_{\xi} e e^{N-3} F_{\omega} +\lambda e_n \left( e^{N-3} F_{\omega} +\frac{1}{(N-1)!}\Lambda e^{N-1} \right) +\frac{1}{2} [c',c'] c^{\dag}\nonumber\\ 
& - L^{\omega}_{\xi} c' c^{\dag}+\frac{1}{2} \iota_{\xi}\iota_{\xi} F_{\omega}c^{\dag}+[c', \lambda e_n ]y^{\dag} - L^{\omega}_{\xi} (\lambda e_n)y^{\dag} - \frac{1}{2}\iota_{[\xi,\xi]}e y^{\dag},\\
\label{symplectic_form_C3_d4}
\varpi = \int_{\Sigma} &e^{N-3} \delta e \delta \omega + \delta c' \delta c^\dag - \delta \omega \delta (\iota_\xi c^\dag) + \delta \lambda e_n \delta y^\dag+\iota_{\delta \xi} \delta (e y^\dag).
\end{align}

\appendix
\section{Lengthy proofs of Section \ref{s:technical}}\label{sec:appendix-CME}
In this appendix we prove Lemmas \ref{lem:We_bulk}, \ref{lem:We_boundary} and \ref{lem:varrho12}.

\begin{proof}[Proof of Lemma \ref{lem:We_bulk}]
For each of the properties we use the following demonstrative scheme. We first compute the number of independent equations that a quantity must satisfy in order to lie in the kernel of the map under consideration. We then compare it to the dimension of the domain. If they agree, then the function is injective, otherwise comparing it with the dimension of the codomain we can deduce whether the map  is surjective. 
\begin{enumerate}
\item Consider $W_{N-3}^{ (2,1)}: \Omega^{2,1}  \longrightarrow \Omega^{N-1,N-2}:$

the dimension of the spaces are 
$\dim \Omega^{2,1} = \binom{N}{2}\binom{N}{1}$ and $\dim \Omega^{N-1,N-2} = \binom{N}{N-2}\binom{N}{N-1}$. Notice that the two dimensions agree.  The kernel of $W_{N-3}^{ (2,1)}$ is defined by the following set of equations:
\begin{align*}
X_{\mu_1 \mu_2}^a e_a \wedge e_{\mu_3} \wedge \dots \wedge e_{\mu_{N-1}} dx^{\mu_1}dx^{\mu_2}\dots dx^{\mu_{N-1}}=0
\end{align*}
where we used the vectors $e_a = e (\partial_a)$ as a basis for $\mathcal{V}$. Let now $1 \leq k \leq N$. Since $\{dx^{\mu_1}dx^{\mu_2}\dots dx^{\mu_{N-1}}\}$ is a basis for $\Omega^{N-1}(M)$ we obtain $N$ equations of the form
\begin{align*}
\sum_{\sigma} X_{\mu_{\sigma(1)} \mu_{\sigma(2)}}^a e_a \wedge e_{\mu_{\sigma(3)}} \wedge \dots \wedge e_{\mu_{\sigma(N-1)}} =0  
\end{align*}
where $\sigma$ runs on all permutations of $N-1$ elements and $1\leq \mu_i \leq N$, $\mu_i \neq k$ for all $1\leq i \leq N-1$. Recall now that $e_a \wedge e_{\mu_{\sigma(3)}} \wedge \dots \wedge e_{\mu_{\sigma(N-1)}}$ is a basis of $\wedge^{N-2}\mathcal{V}$. Hence we obtain the following equations:
\begin{align*}
 X_{ij}^k =0 \qquad & 1 \leq i,j\leq N \; i \neq j, \;  i,j \neq k \\
 \sum_{i\neq k, i\neq j} X_{ij}^i =0 \qquad & \forall 1 \leq j\leq N \; j \neq k 
\end{align*}
Letting now $k$ vary in $\{1, \dots , N\}$ we obtain the following equations:
\begin{align*}
 X_{ij}^k =0 \qquad & 1 \leq i,j,k\leq N \; i\neq j \neq k \neq i \\
 \sum_{i\neq k, i\neq j} X_{ij}^i =0 \qquad & \forall j \neq k,  \; 1 \leq k,j\leq N  .
\end{align*}
It is easy to check that these equations are independent.
The total number of equations defining the kernel is then $\frac{N(N-1)(N-2)}{2}+(N-1)N = \frac{(N-1)N^2}{2}$ which coincides with both the dimensions of the domain and codomain. Hence  $W_{N-3}^{ (2,1)}$ is bijective.
\item $ W_{N-3}^{ (2,2)}: \Omega^{2,2}  \rightarrow \Omega^{N-1,N-1}$ cannot be injective since $\Omega^{2,2}$ has $\frac{N^2(N-1)^2}{4}$ degrees of freedom, while $\Omega^{N-1,N-1}$ has just $N^2$ degrees of freedom and $N \geq 4$.
\end{enumerate}
\end{proof}
\begin{proof}[Proof of Lemma \ref{lem:We_boundary}]
For each of the properties we use the same scheme of the proof of Lemma \ref{lem:We_bulk}.
\begin{enumerate}
\item The proof of $W_{N-3}^{\partial, (2,1)}$ is analogous to that of $W_{N-3}^{ (2,1)}$ with the difference that now $k$ is fixed to be the transversal direction (conventionally $k=N$). Hence we get the following set of equations:
\begin{align*}
 X_{ij}^N =0 \qquad & 1 \leq i,j\leq N-1 \; i \neq j \\
 \sum_{i\neq j} X_{ij}^i =0 \qquad & \forall 1 \leq j\leq N-1 
\end{align*}
which are $\frac{(N-1)(N-2)}{2}+(N-1) = \frac{N(N-1)}{2}$ which is exactly the number of degrees of freedom of   $\Omega_{ \partial}^{N-1,N-2}$. Hence $W_{N-3}^{\partial, (2,1)}$ is surjective but not injective. In particular $\dim \text{Ker} W_{N-3}^{\partial, (2,1)}= \frac{N(N-1)(N-2)}{2}- \frac{N(N-1)}{2}= \frac{N(N-1)}{2}(N-3)$.
\item Consider $W_{N-3}^{ \partial, (1,1)}: \Omega_{ \partial}^{1,1}  \longrightarrow \Omega_{ \partial}^{N-2,N-2}$:
the dimension of the spaces are 
$\dim \Omega_{ \partial}^{1,1} = (N-1)N$ and $\dim \Omega_{ \partial}^{N-2,N-2} = (N-1)\frac{N(N-1)}{2}$. The kernel of $W_{N-3}^{ \partial, (1,1)}$ is defined by the following set of equations:
\begin{align*}
X_{\mu_1}^a e_a \wedge e_{\mu_2} \wedge \dots \wedge e_{\mu_{N-2}} dx^{\mu_1}dx^{\mu_2}\dots dx^{\mu_{N-2}}=0
\end{align*}
where we used $e_a$ as a basis for $\mathcal{V}$. Let now $k=N$ be the transversal direction and let $k'\in\{1, \dots N-1\}$. Since $\{dx^{\mu_1}dx^{\mu_2}\dots dx^{\mu_{N-2}}\}$ is a basis for $\Omega^{N-2}(M)$ we obtain $N-1$ equations of the form
\begin{align*}
\sum_{\sigma} X_{\mu_{\sigma(1)}}^a e_a \wedge e_{\mu_{\sigma(2)}} \wedge \dots \wedge e_{\mu_{\sigma(N-2)}} =0  
\end{align*}
where $\sigma$ runs on all permutations of $N-2$ elements and $1\leq \mu_i \leq N-1$, $\mu_i \neq k'$ for all $1\leq i \leq N-2$. Recall now that $e_a \wedge e_{\mu_{\sigma(2)}} \wedge \dots \wedge e_{\mu_{\sigma(N-2)}}$ is a basis of $\wedge^{N-2}\mathcal{V}$. Hence we obtain the following equations:
\begin{align*}
 X_{i}^k =0 \qquad & 1 \leq i \leq N-1 \; i \neq k' \\
 X_{i}^{k'} =0 \qquad & 1 \leq i \leq N-1 \; i \neq k'\\
 \sum_{i\neq k, i\neq k'} X_{i}^i =0 \qquad & 
\end{align*}
Letting now $k'$ vary in $\{1, \dots , N-1\}$ we obtain the following equations:
\begin{align*}
 X_{i}^k =0 \qquad & 1 \leq i \leq N-1 \\
 X_{i}^{j} =0 \qquad & 1 \leq i,j \leq N-1 \; i \neq j\\
 \sum_{i\neq k, i\neq j} X_{i}^i =0 \qquad &  1 \leq j \leq N-1
\end{align*}
It is easy to check that these equations are independent.
The total number of equations defining the kernel is then $(N-1)+(N-1)(N-2)+(N-1) = (N-1)N$ which coincides with number of degrees of freedom of the domain. Hence  $W_{N-3}^{ \partial, (1,1)}$ is injective but not surjective.
\item Consider $W_{N-3}^{ \partial, (1,2)}: \Omega_{ \partial}^{1,2}  \longrightarrow \Omega_{ \partial}^{N-2,N-1}$:
the dimensions of domain and codomain are 
$\dim \Omega_{ \partial}^{1,2} = (N-1)\frac{N(N-1)}{2}$ and $\dim \Omega_{ \partial}^{N-2,N-1} = (N-1)N$. The kernel of $W_{N-3}^{ \partial, (1,2)}$ is defined by the following set of equations:
\begin{align*}
X_{\mu_1}^{ab} e_a e_b \wedge e_{\mu_2} \wedge \dots \wedge e_{\mu_{N-2}} dx^{\mu_1}dx^{\mu_2}\dots dx^{\mu_{N-2}}=0
\end{align*}
where we used $e_a$ as a basis for $\mathcal{V}$. Let now $k=N$ be the transversal direction and let $k'\in\{1, \dots N-1\}$. Since $\{dx^{\mu_1}dx^{\mu_2}\dots dx^{\mu_{N-2}}\}$ is a basis for $\Omega^{N-2}(M)$ we obtain $N-1$ equations of the form
\begin{align*}
\sum_{\sigma} X_{\mu_{\sigma(1)}}^{ab} e_a e_b \wedge e_{\mu_{\sigma(2)}} \wedge \dots \wedge e_{\mu_{\sigma(N-2)}} =0  
\end{align*}
where $\sigma$ runs on all permutations of $N-2$ elements and $1\leq \mu_i \leq N-1$, $\mu_i \neq k'$ for all $1\leq i \leq N-2$. Recall now that $e_a e_b \wedge e_{\mu_{\sigma(2)}} \wedge \dots \wedge e_{\mu_{\sigma(N-2)}}$ is a basis of $\wedge^{N-1}\mathcal{V}$. Hence we obtain the following equations:
\begin{align*}
 X_{i}^{N k'} =0 \qquad & 1 \leq i \leq N-1 \; i \neq k' \\
 \sum_{i\neq N, i\neq k'} X_{i}^{iN} =0 \qquad & 
  \sum_{i\neq N, i\neq k'} X_{i}^{ik'} =0 
\end{align*}
Letting now $k'$ vary in $\{1, \dots , N-1\}$ we obtain the following equations:
\begin{subequations}\label{e:conditions_kernel_v}
\begin{align}
 X_{i}^{N j} =0 \qquad & 1 \leq i,j \leq N-1 \; i \neq j \\
 \sum_{i\neq N, i\neq j} X_{i}^{iN} =0 \qquad & 1 \leq j \leq N-1 \\
  \sum_{i\neq N, i\neq j} X_{i}^{ij} =0 \qquad & 1 \leq j \leq N-1 
\end{align}
\end{subequations}
It is easy to check that these equations are independent.
The total number of equations defining the kernel is then $(N-1)+(N-1)(N-2)+(N-1) = (N-1)N$ which coincides with number of degrees of freedom of the codomain. Hence  $W_{N-3}^{ \partial, (1,2)}$ is surjective but not injective. In particular $\dim \text{Ker} W_{N-3}^{\partial, (1,2)}= (N-1)\frac{N(N-1)}{2}- N(N-1)= \frac{N(N-1)}{2}(N-3)$.
\item Is a direct consequence of the previous parts.
\item Consider $W_{N-4}^{ \partial, (2,1)}: \Omega_{ \partial}^{2,1}  \longrightarrow \Omega_{ \partial}^{N-2,N-3}$:
the dimension of domain and codomain are 
$\dim \Omega_{ \partial}^{2,1} = \frac{(N-2)(N-1)}{2}N$ and $\dim \Omega_{ \partial}^{N-2,N-3} = (N-1)\frac{N(N-1)(N-2)}{6}$. The kernel of $W_{N-4}^{ \partial, (2,1)}$ is defined by the following set of equations:
\begin{align*}
X_{\mu_1 \mu_2}^a e_a \wedge e_{\mu_3} \wedge \dots \wedge e_{\mu_{N-2}} dx^{\mu_1}dx^{\mu_2}\dots dx^{\mu_{N-2}}=0
\end{align*}
where we used $e_a$ as a basis for $\mathcal{V}$. Let now $k=N$ be the transversal direction and let $k'\in\{1, \dots N-1\}$. Since $\{dx^{\mu_1}dx^{\mu_2}\dots dx^{\mu_{N-2}}\}$ is a basis for $\Omega^{N-2}(M)$ we obtain $N-1$ equations of the form
\begin{align*}
\sum_{\sigma} X_{\mu_{\sigma(1)}\mu_{\sigma(2)}}^a e_a \wedge e_{\mu_{\sigma(3)}} \wedge \dots \wedge e_{\mu_{\sigma(N-2)}} =0  
\end{align*}
where $\sigma$ runs on all permutations of $N-2$ elements and $1\leq \mu_i \leq N-1$, $\mu_i \neq k'$ for all $1\leq i \leq N-2$. Recall now that $e_a \wedge e_{\mu_{\sigma(3)}} \wedge \dots \wedge e_{\mu_{\sigma(N-2)}}$ is a basis of $\wedge^{N-3}\mathcal{V}$. Hence we obtain the following equations:
\begin{align*}
 X_{ij}^N =0 \qquad & 1 \leq i,j \leq N-1 \; i,j \neq k' \\
 X_{ij}^{k'} =0 \qquad & 1 \leq i,j \leq N-1 \; i,j \neq k'\\
 \sum_{i\neq N, i\neq k'} X_{ij}^i =0 \qquad &  1 \leq j \leq N-1 \; j \neq k'
\end{align*}
Letting now $k'$ vary in $\{1, \dots , N-1\}$ we obtain the following equations:
\begin{align*}
 X_{ij}^N =0 \qquad & 1 \leq i,j \leq N-1 \\
 X_{ij}^{j'} =0 \qquad & 1 \leq i,j,j' \leq N-1 \; i,j \neq j' \; i \neq j\\
 \sum_{i\neq k, i\neq j'} X_{ij}^i =0 \qquad &  1 \leq j, j' \leq N-1 \; j \neq j'
\end{align*}
It is easy to check that these equations are independent.
The total number of equations defining the kernel is then $\frac{(N-2)(N-1)}{2}+\frac{(N-3)(N-2)(N-1)}{2} (N-2)(N-1)= \frac{(N-2)(N-1)N}{2}$ which coincides with number of degrees of freedom of the domain. Hence  $W_{N-4}^{ \partial, (2,1)}$ is injective but not surjective.
\end{enumerate}
\end{proof}

\begin{proof}[Proof of Lemma \ref{lem:varrho12}]
Consider $\varrho |_{\text{Ker} W_{N-3}^{\partial, (1,2)}}: \text{Ker} W_{N-3}^{\partial, (1,2)} \rightarrow \Omega_{ \partial}^{2,1}$. From \hyperref[lem:kerWe12]{\ref*{lem:We_boundary}.(\ref*{lem:kerWe12})} we know that $\dim \text{Ker} W_{N-3}^{\partial, (2,1)}= \frac{N(N-1)}{2}(N-3)$. An element $v \in  \text{Ker} W_{N-3}^{\partial, (1,2)}$ must satisfy equations \eqref{e:conditions_kernel_v}.
The kernel of $\varrho$ is defined by the following set of equations:
\footnote{Here we use that in every point we can find a basis in $\mathcal{V}$ such that $e_\mu^i= \delta_\mu^i$: $[v,e]_{\mu_1\mu_2}^a= v_{\mu_1}^{ab}\eta{bc}e_{\mu_2}^c- v_{\mu_2}^{ab}\eta{bc}e_{\mu_1}^c= v_{\mu_1}^{ab}e_{b}^d\eta{dc}e_{\mu_2}^c- v_{\mu_2}^{ab}e_{b}^d\eta{dc}e_{\mu_1}^c$}
\begin{align*}
[v,e]_{\mu_1\mu_2}^a= v_{\mu_1}^{ab} g^{\partial}_{b \mu_2} - v_{\mu_2}^{ab}g^{\partial}_{b\mu_1}=0.
\end{align*}
Using now normal geodesic coordinates, we can diagonalise $g^\partial$ with eigenvalues on the diagonal $\alpha_{\mu} \in \{1,-1,0\}$:
\begin{align*}
[v,e]_{\mu_1\mu_2}^a= v_{\mu_1}^{a\mu_2} \alpha_{ \mu_2} - v_{\mu_2}^{a\mu_1} \alpha_{\mu_1}=0
\end{align*}
If $g^\partial$ is nondegenerate these equations become $v_{\mu_1}^{a\mu_2} =\pm v_{\mu_2}^{a\mu_1}$. Namely, using $v \in  \text{Ker} W_{N-3}^{\partial, (2,1)}$ we get
\begin{align*}
v^{ij}_i =0 \qquad & 0 \leq i,j\leq N-1, \; i\neq j\\
v_{i_1}^{i_2 i_3}= \pm v_{i_2}^{i_1 i_3} \qquad & 0 \leq i_1, i_2, i_3 \leq N-1 \; i_1,i_2 \neq i_3, \; i_2 \neq i_1 
\end{align*}
It is easy to check that these equations are independent.
The total number of equations defining the kernel is then $(N-1)(N-3) + \frac{(N-1)(N-2)(N-3)}{2}= \frac{N(N-1)}{2}(N-3)$ which coincides with number of degrees of freedom of the domain. Hence  $\varrho |_{\text{Ker} W_{N-3}^{\partial, (1,2)}} $ is injective.
\end{proof}

\section[Lengthy proofs]{Lengthy proofs of Section \ref{s:BFV}}\label{sec:appendix-BFV}
We complete the proof of Theorem \ref{thm:BFVaction}. Namely we prove here explicitly that $2\{S_0,S_1\}_f+\{S_1,S_1\}_g=0$. From the expression of $Q$ and  $\{S_0,S_1\}_f= \iota_{Q_0}\iota_{Q_1}\varpi_f$, we get:
\begin{align}\label{S0S1f}
    &\{S_0,S_1\}_f  \\
    = &-\unl{ [c, \lambda e_n ]^{(b)}([c,e])_b^{(a)}(\xi_a^{\dag}- (\omega - \omega_0)_a c^\dag) }{S0S1f1}
    -\unl{ [c, \lambda e_n ]^{(b)}([c,e])_b^{(n)}\lambda^\dag }{S0S1f2}\nonumber \\   
    &+\unl{L_{\xi}^{\omega_0} (\lambda e_n)^{(b)}([c,e])_b^{(a)}(\xi_a^{\dag}- (\omega - \omega_0)_a c^\dag)}{S0S1f3} 
    +\unl{L_{\xi}^{\omega_0} (\lambda e_n)^{(b)}([c,e])_b^{(n)}\lambda^\dag }{S0S1f4}\nonumber \\
    & +\unl{ [c, \lambda e_n ]^{(b)}(L_{\xi}^{\omega_0} e)_b^{(a)}(\xi_a^{\dag}- (\omega - \omega_0)_a c^\dag) }{S0S1f5}
    +\unl{ [c, \lambda e_n ]^{(b)}(L_{\xi}^{\omega_0} e)_b^{(n)}\lambda^\dag }{S0S1f6} \nonumber \\
    &-\unl{L_{\xi}^{\omega_0} (\lambda e_n)^{(b)}(L_{\xi}^{\omega_0} e)_b^{(a)}(\xi_a^{\dag}- (\omega - \omega_0)_a c^\dag)}{S0S1f7}
    -\unl{L_{\xi}^{\omega_0} (\lambda e_n)^{(b)}(L_{\xi}^{\omega_0} e)_b^{(n)}\lambda^\dag }{S0S1f8}\nonumber \\
    & -\unl{ [c, \lambda e_n ]^{(b)}(d_{\omega}(\lambda e_n))_b^{(a)}(\xi_a^{\dag}- (\omega - \omega_0)_a c^\dag)}{S0S1f9}
    -\unl{ [c, \lambda e_n ]^{(b)}(d_{\omega}(\lambda e_n))_b^{(n)}\lambda^\dag }{S0S1f10}\nonumber \\
    &+\unl{L_{\xi}^{\omega_0} (\lambda e_n)^{(b)}(d_{\omega}(\lambda e_n))_b^{(a)}(\xi_a^{\dag}- (\omega - \omega_0)_a c^\dag)}{S0S1f11}
    +\unl{L_{\xi}^{\omega_0} (\lambda e_n)^{(b)}(d_{\omega}(\lambda e_n))_b^{(n)}\lambda^\dag }{S0S1f12}\nonumber \\
    &-\unl{ [c, \lambda e_n ]^{(b)}( \lambda \sigma)_b^{(a)}(\xi_a^{\dag}- (\omega - \omega_0)_a c^\dag) }{S0S1f13}
    -\unl{ [c, \lambda e_n ]^{(b)}( \lambda \sigma)_b^{(n)}\lambda^\dag }{S0S1f14}\nonumber \\
    & +\unl{L_{\xi}^{\omega_0} (\lambda e_n)^{(b)}( \lambda \sigma)_b^{(a)}(\xi_a^{\dag}- (\omega - \omega_0)_a c^\dag)}{S0S1f15}
    +\unl{L_{\xi}^{\omega_0} (\lambda e_n)^{(b)}( \lambda \sigma)_b^{(n)}\lambda^\dag }{S0S1f16}\nonumber \\
    &-\unl{[c, \lambda e_n ]^{(a)}(d_{\omega}c)_a c^\dag}{S0S1f17}
    +\unl{ L_{\xi}^{\omega_0} (\lambda e_n)^{(a)} (d_{\omega}c)_a c^\dag }{S0S1f18}
    +\unl{[c, \lambda e_n ]^{(a)}(L_{\xi}^{\omega_0} (\omega-\omega_0))_a c^\dag}{S0S1f19}\nonumber \\
    &-\unl{ L_{\xi}^{\omega_0} (\lambda e_n)^{(a)} (L_{\xi}^{\omega_0} (\omega-\omega_0))_a c^\dag}{S0S1f20}
    -\unl{[c, \lambda e_n ]^{(a)}( W_1^{-1}(\lambda e_n F_{\omega}))_a c^\dag}{S0S1f21}\nonumber \\
    &+\unl{ L_{\xi}^{\omega_0} (\lambda e_n)^{(a)} ( W_1^{-1}(\lambda e_n F_{\omega}))_a c^\dag }{S0S1f22}
    +\unl{[c, \lambda e_n ]^{(a)}(\iota_{\xi}F_{\omega_0})_a c^\dag}{S0S1f23}\nonumber \\
    &-\unl{ L_{\xi}^{\omega_0} (\lambda e_n)^{(a)} (\iota_{\xi}F_{\omega_0})_a c^\dag}{S0S1f24}
    -\unl{\frac{1}{2}\Lambda[c, \lambda e_n ]^{(a)}e_a \lambda e_n c^\dag}{S0S1f25}
    +\unl{\frac{1}{2}\Lambda L_{\xi}^{\omega_0} (\lambda e_n)^{(a)}e_a \lambda e_n c^\dag .}{S0S1f26} \nonumber
\end{align}
From $\{S_1,S_1\}_g= \iota_{Q_1}\iota_{Q_1}\varpi_g$ we get:
\begin{align}\label{S1S1g}
    &\frac{1}{2}\{S_1,S_1\}_g  \\
  &=\unl{\frac{1}{2}[[c,c],c] c^{\dag}}{S1S1g1}
    -\unl{\frac{1}{2}[c,c] L_{\xi}^{\omega_0} c^{\dag}}{S1S1g2}
    +\unl{\frac{1}{2} [[c,c], \lambda e_n ]^{(a)}(\xi_a^{\dag}- (\omega - \omega_0)_a c^\dag) }{S1S1g3}
    +\unl{ \frac{1}{2}[[c,c], \lambda e_n ]^{(n)}\lambda^\dag}{S1S1g4} \nonumber \\
    &-\unl{[L_{\xi}^{\omega_0}c,c] c^{\dag} }{S1S1g5}
    +\unl{L_{\xi}^{\omega_0}c L_{\xi}^{\omega_0} c^{\dag}}{S1S1g6}
    -\unl{ [L_{\xi}^{\omega_0}c, \lambda e_n ]^{(a)}(\xi_a^{\dag}+ (\omega - \omega_0)_a c^\dag) }{S1S1g7} \nonumber \\
    &-\unl{ [L_{\xi}^{\omega_0}c, \lambda e_n ]^{(n)}\lambda^\dag }{S1S1g8}
    -\unl{[[c, \lambda e_n ]^{(a)}(\omega - \omega_0)_a,c] c^{\dag} }{S1S1g9}
    +\unl{[c, \lambda e_n ]^{(a)}(\omega -\omega_0)_a L_{\xi}^{\omega_0} c^{\dag}}{S1S1g10} \nonumber \\ 
    &-\unl{[[c, \lambda e_n ]^{(a)}(\omega - \omega_0)_a, \lambda e_n ]^{(a)}(\xi_a^{\dag}+ (\omega - \omega_0)_a c^\dag) }{S1S1g11}
    -\unl{ [[c, \lambda e_n ]^{(a)}(\omega - \omega_0)_a, \lambda e_n ]^{(n)}\lambda^\dag }{S1S1g12}\nonumber \\
    &+\unl{[L_{\xi}^{\omega_0} (\lambda e_n)^{(a)}(\omega - \omega_0)_a,c] c^{\dag} }{S1S1g13}
    -\unl{L_{\xi}^{\omega_0} (\lambda e_n)^{(a)}(\omega - \omega_0)_a L_{\xi}^{\omega_0} c^{\dag}}{S1S1g14}\nonumber \\
    &+\unl{ [L_{\xi}^{\omega_0} (\lambda e_n)^{(a)}(\omega - \omega_0)_a, \lambda e_n ]^{(a)}(\xi_a^{\dag}- (\omega - \omega_0)_a c^\dag) }{S1S1g15}
    +\unl{ [L_{\xi}^{\omega_0} (\lambda e_n)^{(a)}(\omega - \omega_0)_a, \lambda e_n ]^{(n)}\lambda^\dag }{S1S1g16}\nonumber \\ 
    &-\unl{ [c, [c, \lambda e_n ]^{(n)} e_n ]^{(a)}(\xi_a^{\dag} - (\omega - \omega_0)_a c^\dag)}{S1S1g17}
    -\unl{ [c, [c, \lambda e_n ]^{(n)} e_n ]^{(n)}\lambda^\dag}{S1S1g18}\nonumber \\
    &+\unl{ L_{\xi}^{\omega_0} ([c, \lambda e_n ]^{(n)} e_n)^{(a)}(\xi_a^{\dag}- (\omega - \omega_0)_a c^\dag)}{S1S1g19}
    +\unl{ L_{\xi}^{\omega_0} ([c, \lambda e_n ]^{(n)} e_n)^{(n)}\lambda^\dag}{S1S1g20}\nonumber \\ 
    &+\unl{ [c, L_{\xi}^{\omega_0} (\lambda e_n)^{(n)} e_n ]^{(a)}(\xi_a^{\dag} - (\omega - \omega_0)_a c^\dag) }{S1S1g21} 
    +\unl{ [c, L_{\xi}^{\omega_0} (\lambda e_n)^{(n)} e_n ]^{(n)}\lambda^\dag}{S1S1g22}\nonumber \\\intertext{}
   & -\unl{ L_{\xi}^{\omega_0} (L_{\xi}^{\omega_0} (\lambda e_n)^{(n)} e_n)^{(a)}(\xi_a^{\dag}- (\omega - \omega_0)_a c^\dag) }{S1S1g23}
    -\unl{ L_{\xi}^{\omega_0} (L_{\xi}^{\omega_0} (\lambda e_n)^{(n)} e_n)^{(n)}\lambda^\dag}{S1S1g24}\nonumber \\
    &-\unl{ [c, \lambda e_n ]^{(a)} {d_{\omega_0}}_a c c^{\dag} }{S1S1g25}
    -\unl{ ([c, \lambda e_n ]^{(b)} {d_{\omega_0}}_b(\lambda e_n))^{(a)}(\xi_a^{\dag}- (\omega - \omega_0)_a c^\dag) }{S1S1g26}\nonumber \\
    &-\unl{ ([c, \lambda e_n ]^{(a)} {d_{\omega_0}}_a(\lambda e_n))^{(n)}\lambda^\dag }{S1S1g27}
    -\unl{ [c, \lambda e_n ]^{(a)}(\partial_a \xi^b) \xi_b^{\dag}}{S1S1g28}
    -\unl{ [c, \lambda e_n ]^{(a)}\partial_b (\xi^b \xi_a^{\dag})}{S1S1g29}\nonumber \\
    &+\unl{ L_{\xi}^{\omega_0} (\lambda e_n)^{(a)} {d_{\omega_0}}_a c c^{\dag} }{S1S1g30}
    +\unl{ (L_{\xi}^{\omega_0} (\lambda e_n)^{(b)} {d_{\omega_0}}_b(\lambda e_n))^{(a)}(\xi_a^{\dag}- (\omega - \omega_0)_a c^\dag)}{S1S1g31}\nonumber \\
    &+\unl{ (L_{\xi}^{\omega_0} (\lambda e_n)^{(a)} {d_{\omega_0}}_a(\lambda e_n))^{(n)}\lambda^\dag }{S1S1g32}
    +\unl{ L_{\xi}^{\omega_0} (\lambda e_n)^{(a)}(\partial_a \xi^b) \xi_b^{\dag}}{S1S1g33}
    +\unl{ L_{\xi}^{\omega_0} (\lambda e_n)^{(a)}\partial_b (\xi^b \xi_a^{\dag})}{S1S1g34}\nonumber \\
    & +\unl{\frac{1}{2} \iota_{[\xi,\xi]} {d_{\omega_0}} c c^{\dag} }{S1S1g35}
    +\unl{\frac{1}{2}(\iota_{[\xi,\xi]} {d_{\omega_0}}(\lambda e_n))^{(a)}(\xi_a^{\dag}- (\omega - \omega_0)_a c^\dag) }{S1S1g36} 
    +\unl{\frac{1}{2} (\iota_{[\xi,\xi]} {d_{\omega_0}}(\lambda e_n))^{(n)}\lambda^\dag }{S1S1g37}\nonumber \\
    & +\unl{\frac{1}{2}[\xi,\xi]^a(\partial_a \xi^b) \xi_b^{\dag}}{S1S1g38}
    +\unl{\frac{1}{2} [\xi,\xi]^a\partial_b (\xi^b \xi_a^{\dag})}{S1S1g39}  +\unl{\frac{1}{2}[\iota_{\xi}\iota_{\xi}F_{\omega_0},c] c^{\dag}}{S1S1g40}
    -\unl{\frac{1}{2}\iota_{\xi}\iota_{\xi}F_{\omega_0} L_{\xi}^{\omega_0} c^{\dag}}{S1S1g41}\nonumber \\
    & +\unl{\frac{1}{2} [\iota_{\xi}\iota_{\xi}F_{\omega_0}, \lambda e_n ]^{(a)}(\xi_a^{\dag}- (\omega - \omega_0)_a c^\dag) }{S1S1g42}
    +\unl{ \frac{1}{2}[\iota_{\xi}\iota_{\xi}F_{\omega_0}, \lambda e_n ]^{(n)}\lambda^\dag}{S1S1g43}\nonumber \\
    & -\unl{[c, \lambda e_n ]^{(a)}(\iota_{\xi}F_{\omega_0})_a c^{\dag} }{S1S1g44}
    +\unl{ L_{\xi}^{\omega_0} (\lambda e_n)^{(a)}(\iota_{\xi}F_{\omega_0})_a c^{\dag} }{S1S1g45}
    +\unl{ \frac{1}{2}\iota_{[\xi,\xi]}\iota_{\xi}F_{\omega_0} c^{\dag}.}{S1S1g46} \nonumber
\end{align}
We now check term by term that the sum $2\{S_0,S_1\}_f+\{S_1,S_1\}_g$ is zero. We have: 
\begin{itemize}
\item \reft{S1S1g}{S1S1g1} =0 using (graded) Jacobi identity.
\item \reft{S1S1g}{S1S1g2} and \reft{S1S1g}{S1S1g5}: $$ -\frac{1}{2}L_{\xi}^{\omega_0}([c,c]  c^{\dag})=  -\frac{1}{2}[c,c] L_{\xi}^{\omega_0} c^{\dag} -[L_{\xi}^{\omega_0}c,c]  c^{\dag}.$$
\item 

\reft{S1S1g}{S1S1g3}, \reft{S1S1g}{S1S1g17}, \reft{S0S1f}{S0S1f1}: using (graded) Jacobi identity:
\begin{align*}
\frac{1}{2} [[c,c], \lambda e_n ]^{(a)}(\xi_a^{\dag}- (\omega - \omega_0)_a c^\dag)&- [c, [c, \lambda e_n ]^{(n)} e_n ]^{(a)}(\xi_a^{\dag} - (\omega - \omega_0)_a c^\dag)\\
&- [c, \lambda e_n ]^{(b)}([c,e])_b^{(a)}(\xi_a^{\dag}- (\omega - \omega_0)_a c^\dag)=0 
\end{align*}

\item 
\reft{S1S1g}{S1S1g4}, \reft{S1S1g}{S1S1g18}, \reft{S0S1f}{S0S1f2}: as before.
\item 
\reft{S1S1g}{S1S1g6}, \reft{S1S1g}{S1S1g35} and \reft{S1S1g}{S1S1g40} :
$$L_{\xi}^{\omega_0}c L_{\xi}^{\omega_0} c^{\dag}+\frac{1}{2} \iota_{[\xi,\xi]} {d_{\omega_0}} c c^{\dag}+\frac{1}{2}[\iota_{\xi}\iota_{\xi}F_{\omega_0},c] c^{\dag}= - d_{\omega_0} (\iota_{\xi} {d_{\omega_0}} c \iota_\xi c^\dag).$$
\item 
\reft{S1S1g}{S1S1g7}, \reft{S1S1g}{S1S1g21} and \reft{S0S1f}{S0S1f3}:
\begin{align*}
 -[L_{\xi}^{\omega_0}c, \lambda e_n ]^{(a)}(\xi_a^{\dag}+ (\omega - \omega_0)_a c^\dag)&+ [c, L_{\xi}^{\omega_0} (\lambda e_n)^{(n)} e_n ]^{(a)}(\xi_a^{\dag}  - (\omega - \omega_0)_a c^\dag)\\
&+L_{\xi}^{\omega_0} (\lambda e_n)^{(b)}([c,e])_b^{(a)}(\xi_a^{\dag}- (\omega - \omega_0)_a c^\dag) \\
=& -(L_{\xi}^{\omega_0}[c, \lambda e_n ])^{(a)}(\xi_a^{\dag}+ (\omega - \omega_0)_a c^\dag)
\end{align*}
We have $(L_{\xi}^{\omega_0}\omega)_a= L_{\xi}^{\omega_0}(\omega - \omega_0)_a + \partial_a \xi^c \omega_c$ and
$(L_{\xi}^{\omega_0}e)_a= L_{\xi}^{\omega_0}e_a + \partial_a \xi^c e_c$. 
\reft{S0S1f}{S0S1f5}, \reft{S1S1g}{S1S1g28}, \reft{S0S1f}{S0S1f19}:
\begin{align*}
&[c, \lambda e_n ]^{(b)}(L_{\xi}^{\omega_0} e)_b^{(a)}(\xi_a^{\dag}- (\omega - \omega_0)_a c^\dag)- [c, \lambda e_n ]^{(a)}(\partial_a \xi^b) \xi_b^{\dag}+[c, \lambda e_n ]^{(a)}(L_{\xi}^{\omega_0} \omega)_a c^\dag\\
=& [c, \lambda e_n ]^{(b)}(L_{\xi}^{\omega_0} e_b)^{(a)}(\xi_a^{\dag}- (\omega - \omega_0)_a c^\dag)+[c, \lambda e_n ]^{(b)}( \partial_b \xi^c e_c)^{(a)}(\xi_a^{\dag}- (\omega - \omega_0)_a c^\dag)\\
&-[c, \lambda e_n ]^{(b)}(\partial_b \xi^c) \xi_c^{\dag}+[c, \lambda e_n ]^{(a)}(L_{\xi}^{\omega_0} (\omega - \omega_0)_a) c^\dag+[c, \lambda e_n ]^{(a)}( \partial_a \xi^c \omega_c) c^\dag\\
=&  [c, \lambda e_n ]^{(b)}(L_{\xi}^{\omega_0} e_b)^{(a)}(\xi_a^{\dag}- (\omega - \omega_0)_a c^\dag)+[c, \lambda e_n ]^{(a)}(L_{\xi}^{\omega_0} (\omega - \omega_0)_a) c^\dag\\
\end{align*} 
\reft{S1S1g}{S1S1g29}: $- [c, \lambda e_n ]^{(a)}\partial_b (\xi^b \xi_a^{\dag})= \partial_b ([c, \lambda e_n ]^{(a)}\xi^b \xi_a^{\dag}) + L_{\xi}^{\omega_0}[c, \lambda e_n ]^{(a)} \xi_a^{\dag}$\\
\reft{S1S1g}{S1S1g19}, \reft{S1S1g}{S1S1g10} and previous relations:
\begin{align*}
&[c, \lambda e_n ]^{(b)}(L_{\xi}^{\omega_0} e_b)^{(a)}(\xi_a^{\dag}- (\omega - \omega_0)_a c^\dag)+[c, \lambda e_n ]^{(a)}(L_{\xi}^{\omega_0} (\omega - \omega_0)_a) c^\dag\\
&+ L_{\xi}^{\omega_0} ([c, \lambda e_n ]^{(n)} e_n)^{(a)}(\xi_a^{\dag}- (\omega - \omega_0)_a c^\dag)+ L_{\xi}^{\omega_0}[c, \lambda e_n ]^{(a)} \xi_a^{\dag}\\
&+[c, \lambda e_n ]^{(a)}(\omega - \omega_0)_a L_{\xi}^{\omega_0} c^{\dag}\\
=& L_{\xi}^{\omega_0} ([c, \lambda e_n ])^{(a)}(\xi_a^{\dag}- (\omega - \omega_0)_a c^\dag)-L_{\xi}^{\omega_0} ([c, \lambda e_n ]^{(a)})(\xi_a^{\dag}- (\omega - \omega_0)_a c^\dag)\\
&+ L_{\xi}^{\omega_0}[c, \lambda e_n ]^{(a)} \xi_a^{\dag}+[c, \lambda e_n ]^{(a)}(L_{\xi}^{\omega_0} (\omega - \omega_0)_a) c^\dag+[c, \lambda e_n ]^{(a)}(\omega - \omega_0)_a L_{\xi}^{\omega_0} c^{\dag}\\
=& L_{\xi}^{\omega_0} ([c, \lambda e_n ])^{(a)}(\xi_a^{\dag}- (\omega - \omega_0)_a c^\dag) +L_{\xi}^{\omega_0} ([c, \lambda e_n ]^{(a)})(\omega - \omega_0)_a c^\dag\\
&+[c, \lambda e_n ]^{(a)}(L_{\xi}^{\omega_0} (\omega - \omega_0)_a) c^\dag+[c, \lambda e_n ]^{(a)}(\omega - \omega_0)_a L_{\xi}^{\omega_0} c^{\dag}\\
=&  L_{\xi}^{\omega_0} ([c, \lambda e_n ])^{(a)}(\xi_a^{\dag}- (\omega - \omega_0)_a c^\dag) 
\end{align*}
This last term cancels out with the one resulting from the first computation.
\item \reft{S1S1g}{S1S1g8}, \reft{S1S1g}{S1S1g22} and \reft{S0S1f}{S0S1f4}:
\begin{align*}
 &-[L_{\xi}^{\omega_0}c, \lambda e_n ]^{(n)}\lambda^\dag + [c, L_{\xi}^{\omega_0} (\lambda e_n)^{(n)} e_n ]^{(n)}\lambda^{\dag}+L_{\xi}^{\omega_0} (\lambda e_n)^{(b)}([c,e])_b^{(n)}\lambda^\dag\\
 =& -(L_{\xi}^{\omega_0}[c, \lambda e_n ])^{(n)}\lambda^\dag
\end{align*}
\reft{S1S1g}{S1S1g20}, \reft{S0S1f}{S0S1f6}: since $e_a^{(n)}=0$ we have
\begin{align*}
&L_{\xi}^{\omega_0} ([c, \lambda e_n ]^{(n)} e_n)^{(n)}\lambda^\dag+ [c, \lambda e_n ]^{(b)}(L_{\xi}^{\omega_0} e)_b^{(n)}\lambda^\dag \\
=&L_{\xi}^{\omega_0} ([c, \lambda e_n ]^{(n)} e_n)^{(n)}\lambda^\dag+ L_{\xi}^{\omega_0} ([c, \lambda e_n ]^{(a)} e_a)^{(n)}\lambda^\dag\\
 =& (L_{\xi}^{\omega_0}[c, \lambda e_n ])^{(n)}\lambda^\dag
\end{align*}

\item \reft{S1S1g}{S1S1g9}, \reft{S1S1g}{S1S1g25} and \reft{S0S1f}{S0S1f17}:
\begin{align*}
&-[[c, \lambda e_n ]^{(a)}(\omega - \omega_0)_a,c] c^{\dag}- [c, \lambda e_n ]^{(a)} {d_{\omega_0}}_a c c^{\dag} \\
=&- [c, \lambda e_n ]^{(a)} d_{(\omega - \omega_0)_a} c c^{\dag}= [c, \lambda e_n ]^{(a)} (d_{\omega} c)_a c^{\dag}.
\end{align*}
 \item \reft{S1S1g}{S1S1g13}, \reft{S1S1g}{S1S1g30} and \reft{S0S1f}{S0S1f18}: as before.
 \item \reft{S1S1g}{S1S1g11}, \reft{S1S1g}{S1S1g26} and \reft{S0S1f}{S0S1f9}: as before.
 \item \reft{S1S1g}{S1S1g12}, \reft{S1S1g}{S1S1g27} and \reft{S0S1f}{S0S1f10}: as before.
 \item \reft{S1S1g}{S1S1g15}, \reft{S1S1g}{S1S1g31} and \reft{S0S1f}{S0S1f11}: as before.
 \item \reft{S1S1g}{S1S1g16}, \reft{S1S1g}{S1S1g32} and \reft{S0S1f}{S0S1f12}: as before.
 \item \reft{S1S1g}{S1S1g36} and \reft{S1S1g}{S1S1g42} :
 \begin{align*}
\frac{1}{2}(\iota_{[\xi,\xi]} {d_{\omega_0}}(\lambda e_n))^{(a)}(\xi_a^{\dag}- (\omega - \omega_0)_a c^\dag)+\frac{1}{2} [\iota_{\xi}\iota_{\xi}F_{\omega_0}, \lambda e_n ]^{(a)}(\xi_a^{\dag}- (\omega - \omega_0)_a c^\dag)\\
= (L_{\xi}^{\omega_0}L_{\xi}^{\omega_0}(\lambda e_n))^{(a)}(\xi_a^{\dag}- (\omega - \omega_0)_a c^\dag)
 \end{align*}

 \reft{S1S1g}{S1S1g14} and \reft{S0S1f}{S0S1f20}:
 \begin{align*}
& -L_{\xi}^{\omega_0} (\lambda e_n)^{(a)}(\omega - \omega_0)_a L_{\xi}^{\omega_0} c^{\dag} - L_{\xi}^{\omega_0} (\lambda e_n)^{(a)} (L_{\xi}^{\omega_0} \omega)_a c^\dag \\
 =& L_{\xi}^{\omega_0} (L_{\xi}^{\omega_0} (\lambda e_n)^{(a)}(\omega - \omega_0)_a)c^{\dag} - L_{\xi}^{\omega_0} (\lambda e_n)^{(a)} (L_{\xi}^{\omega_0} (\omega - \omega_0)_a) c^\dag
\\& - L_{\xi}^{\omega_0} (\lambda e_n)^{(a)} \partial_a \xi^c \omega_c c^\dag\\
 =& L_{\xi}^{\omega_0} (L_{\xi}^{\omega_0} (\lambda e_n)^{(a)})(\omega - \omega_0)_a c^{\dag}- L_{\xi}^{\omega_0} (\lambda e_n)^{(a)} \partial_a \xi^c \omega_c c^\dag
 \end{align*}
 \reft{S1S1g}{S1S1g33}, \reft{S1S1g}{S1S1g34} and previous relation:
 \begin{align*}
  & L_{\xi}^{\omega_0} (\lambda e_n)^{(a)}(\partial_a \xi^b) \xi_b^{\dag}+ L_{\xi}^{\omega_0} (\lambda e_n)^{(a)}\partial_b (\xi^b \xi_a^{\dag}) + L_{\xi}^{\omega_0} (L_{\xi}^{\omega_0} (\lambda e_n)^{(a)})(\omega - \omega_0)_a c^{\dag}\\
  &- L_{\xi}^{\omega_0} (\lambda e_n)^{(a)} \partial_a \xi^c \omega_c c^\dag\\
   =& \left[ L_{\xi}^{\omega_0} (\lambda e_n)^{(b)}(\partial_b \xi^c e_c)^{(a)} -L_{\xi}^{\omega_0} (L_{\xi}^{\omega_0} (\lambda e_n)^{(b)})e_b^{(a)}\right](\xi_a^{\dag}- (\omega - \omega_0)_a  c^\dag)
\end{align*}  
\reft{S0S1f}{S0S1f7} and previous relation:
\begin{align*}
&\left[-L_{\xi}^{\omega_0} (\lambda e_n)^{(b)}(L_{\xi}^{\omega_0} e)_b^{(a)}+L_{\xi}^{\omega_0} (\lambda e_n)^{(b)}(\partial_b \xi^c e_c)^{(a)}\right] (\xi_a^{\dag}- (\omega - \omega_0)_a  c^\dag)\\
&-L_{\xi}^{\omega_0} (L_{\xi}^{\omega_0} (\lambda e_n)^{(b)})e_b^{(a)}(\xi_a^{\dag}- (\omega - \omega_0)_a  c^\dag)\\
&= -(L_{\xi}^{\omega_0} (L_{\xi}^{\omega_0} (\lambda e_n)^{(b)}e_b))^{(a)}(\xi_a^{\dag}- (\omega - \omega_0)_a  c^\dag)
\end{align*}
 \reft{S1S1g}{S1S1g23} and previous relation:
\begin{align*}
& \left[-L_{\xi}^{\omega_0} (L_{\xi}^{\omega_0} (\lambda e_n)^{(n)} e_n)^{(a)} -(L_{\xi}^{\omega_0} (L_{\xi}^{\omega_0} (\lambda e_n)^{(b)}e_b))^{(a)}\right](\xi_a^{\dag}- (\omega - \omega_0)_a  c^\dag)\\
&= -(L_{\xi}^{\omega_0}L_{\xi}^{\omega_0}(\lambda e_n))^{(a)}(\xi_a^{\dag}- (\omega - \omega_0)_a c^\dag)
\end{align*}
This last term cancels out with the one resulting from the first computation.
 \item  \reft{S1S1g}{S1S1g37}  and  \reft{S1S1g}{S1S1g43}:
 $$ \frac{1}{2}(\iota_{[\xi,\xi]} {d_{\omega_0}}(\lambda e_n))^{(n)}\lambda^{\dag}+ \frac{1}{2}[\iota_{\xi}\iota_{\xi}F_{\omega_0}, \lambda e_n ]^{(n)}\lambda^\dag= (L_{\xi}^{\omega_0}L_{\xi}^{\omega_0}(\lambda e_n))^{(n)}\lambda^{\dag}$$
 \reft{S1S1g}{S1S1g24} and \reft{S0S1f}{S0S1f8}:
 \begin{align*}
 &- L_{\xi}^{\omega_0} (L_{\xi}^{\omega_0} (\lambda e_n)^{(n)} e_n)^{(n)}\lambda^\dag-L_{\xi}^{\omega_0} (\lambda e_n)^{(b)}(L_{\xi}^{\omega_0} e)_b^{(n)}\lambda^\dag\\
 &= -L_{\xi}^{\omega_0} (L_{\xi}^{\omega_0} (\lambda e_n)^{(n)} e_n)^{(n)}\lambda^\dag- L_{\xi}^{\omega_0} (L_{\xi}^{\omega_0} (\lambda e_n)^{(b)} e_b)^{(n)}\lambda^\dag \\
 &= - (L_{\xi}^{\omega_0}L_{\xi}^{\omega_0}(\lambda e_n))^{(n)}\lambda^{\dag}
\end{align*}
This last term cancels out with the one resulting from the first computation.
\item \reft{S0S1f}{S0S1f13}, \reft{S0S1f}{S0S1f14}, \reft{S0S1f}{S0S1f15}, \reft{S0S1f}{S0S1f16}, \reft{S0S1f}{S0S1f21}, \reft{S0S1f}{S0S1f22}, \reft{S0S1f}{S0S1f25} and \reft{S0S1f}{S0S1f26}:
Everything vanishes since $\lambda\lambda=0$ and $e_n^{(b)}=0$.
\item  \reft{S1S1g}{S1S1g38} and  \reft{S1S1g}{S1S1g39}: 
\begin{align*}
&+\frac{1}{2}[\xi,\xi]^a(\partial_a \xi^b) \xi_b^{\dag}+\frac{1}{2} [\xi,\xi]^a\partial_b (\xi^b \xi_a^{\dag})= \xi^c \partial_c \xi^a (\partial_a \xi^b) \xi_b^{\dag}+ \xi^c \partial_c \xi^a \partial_b (\xi^b \xi_a^{\dag})\\
&= \partial_b (\xi^c \partial_c \xi^a  \xi^b \xi_a^{\dag}) -  \xi^c \partial_c \partial_b \xi^a  \xi^b \xi_a^{\dag}
\end{align*}
where the last term vanishes since it is symmetric and antisymmetric in the indexes $b$ and $c$.
\item  \reft{S1S1g}{S1S1g44} $= -$ \reft{S0S1f}{S0S1f23}.
\item  \reft{S1S1g}{S1S1g45} $= -$ \reft{S0S1f}{S0S1f24}.
\item  \reft{S1S1g}{S1S1g41} and  \reft{S1S1g}{S1S1g46}:
\begin{align*}
-\frac{1}{2}\iota_{\xi}\iota_{\xi}F_{\omega_0} L_{\xi}^{\omega_0} c^{\dag} + \frac{1}{2}\iota_{[\xi,\xi]}\iota_{\xi}F_{\omega_0} c^{\dag}= \frac{1}{2}d_{\omega_0} (\iota_{\xi}\iota_{\xi}F_{\omega_0}\iota_{\xi}c^{\dag})
\end{align*}
\end{itemize}

\sloppy
\printbibliography 
\end{document}